\newtheorem{lemma}{Lemma}
\newcommand{\RNum}[1]{\uppercase\expandafter{\romannumeral #1\relax}}
\begin{document}

\title{On Stochastic Fundamental Limits in a Downlink Integrated
Sensing and Communication Network}

\author{
		\IEEEauthorblockN{Marziyeh Soltani, Mahtab Mirmohseni, \textit{Senior Member, IEEE}, and Rahim Tafazolli, \textit{Senior Member, IEEE} \\
		\vspace*{0.5em}
			}\thanks{Part of this paper has been presented at the 2023 IEEE Global Communications
Conference workshop, 4-8 December 2023,  Kuala Lumpur, \cite{soltani2023outage}.\\
The authors are with 5/6GIC, the Institute for Communication Systems,
University of Surrey, GU2 7XH Guildford, U.K. (e-mail: m.mirmohseni@
surrey.ac.uk, r.tafazolli@surrey.ac.uk, m.soltani@surrey.ac.uk)}}

\maketitle
\begin{abstract}
This paper aims to analyze the stochastic performance of a multiple input multiple output (MIMO) integrated
sensing and communication (ISAC) system in a downlink scenario, where a base station (BS) transmits a dual-functional radar-communication (DFRC) signal matrix, serving the purpose of transmitting communication data to the user while simultaneously sensing the angular location of a target. The channel between the BS and the user is modeled as a random channel with Rayleigh fading distribution, and the azimuth angle of the target is assumed to follow a uniform distribution. Due to the randomness inherent in the network, the challenge is to consider suitable performance metrics for this randomness. To address this issue, for users, we employ the user's rate outage probability (OP) and ergodic rate, while for target, we propose using the OP of the Cramér-Rao lower bound (CRB) for the angle of arrival and the ergodic CRB. We have obtained the expressions of these metrics for scenarios where the BS employs two different beamforming methods. Our approach to deriving these metrics involves computing the probability density function (PDF) of the signal-to-noise ratio for users and the CRB for the target. We have demonstrated that the central limit theorem provides a viable approach for deriving these PDFs. In our numerical results, we demonstrate the trade-off between sensing and communication (S \& C) by characterizing the region of S \& C metrics and by obtaining the Pareto optimal boundary points, confirmed with simulations.
\end{abstract}
\begin{IEEEkeywords}
Integrated Sensing and Communications, Performance analysis, Outage tradeoff, CRB, Randomness, Rayleigh fading
\end{IEEEkeywords}
\section{Introduction}\label{introduction}
\IEEEPARstart{R}{ecently}, there has been a significant change in the way radar sensing and wireless communication systems are approached. This shift, known as integrated sensing and communications (ISAC), is anticipated to have a crucial impact on the advancement of future wireless networks \cite{SeventyYearsofRadarandCommunications}. ISAC is required in various emerging applications, including vehicular communication, internet of things (IoT) applications like smart city infrastructure (such as traffic and speed monitoring, video surveillance), and smart industry \cite{Integratedtoward}. By leveraging shared resources such as wireless spectrum, hardware platforms, and energy consumption, ISAC offers significant advantages for both sensing and communication (S \& C). Due to this resource sharing, there is an inherent tradeoff in ISAC, and to design an efficient ISAC system, the fundamental
communication-sensing performance tradeoff should be fully
understood \cite{ASurveyonFundamentalLimits}. 

\textcolor{blue}{The fundamental limits of ISAC systems has been explored from various perspectives: either from pure information-theoretic (IT) approach or different sensing design metrics.
\\
Several works have provided performance limits for ISAC systems from an IT perspective, including \cite{ASurveyonFundamentalLimits,2018JointStateSensingandCommunicationOptimalTradeoffforaMemorylessCase,2019JointStateSensingandCommunicatiooverMemorylessMultipleAccessChannels,2020JointSensingandCommunicationoverMemorylessBroadcastChannels,2022AnInformation-TheoreticApproachtoJointSensingandCommunication}. The authors in \cite{2018JointStateSensingandCommunicationOptimalTradeoffforaMemorylessCase,2022AnInformation-TheoreticApproachtoJointSensingandCommunication} characterize the capacity-distortion region, which is extended to multiple access channels in \cite{2019JointStateSensingandCommunicatiooverMemorylessMultipleAccessChannels} and to broadcast channels in \cite{2020JointSensingandCommunicationoverMemorylessBroadcastChannels}. These studies utilize the state-dependent channel approach to establish an abstract model for ISAC.
\\
Beyond the IT perspective, several works have analyzed the trade-off region between S \& C using different sensing performance metrics and varying levels of randomness in the system model \cite{aframeworkformutualinformation,TowardDualfunctionalRadarCommunicationSystems,MUMIMOCommunicationsWithMIMORadar,JointTransmitBeamformingforMultiuser,optimaltransmitbeamformingintegrated,CrameRaoBoundOptimizationforJoint,fromtorchtoprojector,MIMOIntegratedSensingandCommunicationCRBRateTradeoff,fundamentalcrbratetradeoffmultiantenna,OnthePerformanceofUplinkandDownlink,PerformanceAnalysisandPowerAllocationforCooperative,NOMAISACPerformanceAnalysisandRateRegion,Aunifiedperformanceframeworkfor,MIMOISACPerformanceAnalysis,PerformanceAnalysioftheFullDuplexJoint,onthefundementaltradeoff,networklevelintegratedsensingcommunication,coverageandrateofjointcommunication}. \cite{aframeworkformutualinformation} considers estimation mutual information as a sensing metric. However, transmit beampattern and Cramér-Rao Bound (CRB) are considered as the most common metrics for sensing performance evaluation.
\\
Several studies, such as \cite{TowardDualfunctionalRadarCommunicationSystems,MUMIMOCommunicationsWithMIMORadar,JointTransmitBeamformingforMultiuser,optimaltransmitbeamformingintegrated}, have adopted the transmit beampattern as a sensing performance metric, where transmit signal beams are directed toward target directions to enhance the estimation or the detection. \cite{TowardDualfunctionalRadarCommunicationSystems,MUMIMOCommunicationsWithMIMORadar} examined a multi-user multiple-input-single-output (MISO) ISAC system in which information signal beams are reused for both S \& C. In this system, the transmit information beamformers are optimized to align with a sensing-oriented beampattern while ensuring communication performance. To fully exploit the degrees of freedom (DoFs) offered by multiple input multiple output (MIMO) radar, \cite{JointTransmitBeamformingforMultiuser} and \cite{optimaltransmitbeamformingintegrated} proposed transmitting dedicated radar sensing signal beams in addition to information beams, with joint optimization of the transmit information and sensing beamformers.
\\
On the other hand, CRB is another widely used sensing performance measure for estimation tasks \cite{CrameRaoBoundOptimizationforJoint,fromtorchtoprojector,MIMOIntegratedSensingandCommunicationCRBRateTradeoff,fundamentalcrbratetradeoffmultiantenna}, which characterizes the lower bound of the error for any unbiased estimator. 
%Optimizing the CRB is an effective approach for designing ISAC systems, typically formulated as mathematical optimization problems and often solved using convex optimization techniques. In these problems, the objective function represents either sensing or communication performance metrics, while additional constraints ensure satisfactory performance in the other performance metrics.
\cite{CrameRaoBoundOptimizationforJoint} examined a multi-user ISAC system over a broadcast channel with a single target, where the transmit beamforming was optimized to minimize CRB of the error the estimation, subject to individual signal-to-interference-plus-noise ratio (SINR) constraints for multiple communication users (CUs). Furthermore, \cite{fromtorchtoprojector} and \cite{MIMOIntegratedSensingandCommunicationCRBRateTradeoff} investigated the CRB-rate (C-R) trade-off in a point-to-point MIMO ISAC system involving one CU and one target. In particular, \cite{MIMOIntegratedSensingandCommunicationCRBRateTradeoff} optimized the transmit covariance to characterize the complete Pareto boundary of the C-R region for the MIMO ISAC system in two specific scenarios: point and extended target models, considering a sufficiently long radar coherent processing interval. \cite{fromtorchtoprojector} explored the C-R trade-off for a more general case with finite radar coherent processing intervals. Additionally, \cite{fundamentalcrbratetradeoffmultiantenna} characterized the Pareto-optimal boundary of the C-R region in a system where a multi-antenna base station (BS) transmits common information messages to a set of single-antenna CUs while concurrently estimating the parameters of multiple sensing targets based on the echo signals.
\\
The impact of channel randomness and its statistical properties, which is the focus of this paper, has been largely overlooked in the aforementioned works \cite{aframeworkformutualinformation,TowardDualfunctionalRadarCommunicationSystems,MUMIMOCommunicationsWithMIMORadar,JointTransmitBeamformingforMultiuser,optimaltransmitbeamformingintegrated,CrameRaoBoundOptimizationforJoint,fromtorchtoprojector,MIMOIntegratedSensingandCommunicationCRBRateTradeoff,fundamentalcrbratetradeoffmultiantenna}. Recent studies have examined the probabilistic behavior of ISAC by incorporating randomness in channels. These studies can be categorized into three groups based on sensing metrics: detection probability \cite{OnthePerformanceofUplinkandDownlink,PerformanceAnalysisandPowerAllocationforCooperative, Aunifiedperformanceframeworkfor} and \cite[Section A]{Aunifiedperformanceframeworkfor}, sensing rate or SNR \cite{NOMAISACPerformanceAnalysisandRateRegion,MIMOISACPerformanceAnalysis,PerformanceAnalysioftheFullDuplexJoint,networklevelintegratedsensingcommunication,coverageandrateofjointcommunication}, and estimation error \cite{onthefundementaltradeoff}.
\\
In the first category \cite{OnthePerformanceofUplinkandDownlink,PerformanceAnalysisandPowerAllocationforCooperative, Aunifiedperformanceframeworkfor}, the sensing metric is detection probability, and the user's channel follows Rayleigh fading. The communication metrics are outage probability or achievable rate. In the second category, the sensing metric is sensing rate or SNR \cite{NOMAISACPerformanceAnalysisandRateRegion,MIMOISACPerformanceAnalysis,PerformanceAnalysioftheFullDuplexJoint,networklevelintegratedsensingcommunication,coverageandrateofjointcommunication,PerformanceofDownlinkandUplinkIntegratedSensing}. \cite{NOMAISACPerformanceAnalysisandRateRegion} used the sensing rate as the performance metric, assuming that the user channel follows Rayleigh fading. The authors in \cite{MIMOISACPerformanceAnalysis,PerformanceofDownlinkandUplinkIntegratedSensing} analyzed S \& C rates in a MIMO ISAC scenario where a BS communicates with multiple users while sensing nearby targets, with user channels experiencing Rayleigh fading and target response matrices following a complex normal distribution. In \cite{PerformanceAnalysioftheFullDuplexJoint}, the authors examined ergodic communication and radar estimation rates in a full-duplex multi-antenna BS scenario, where uplink and downlink communication occurs alongside radar sensing. In \cite{networklevelintegratedsensingcommunication}, An ISAC network was studied using stochastic geometry to balance S \& C performance, with ergodic radar rates and communication rates as metrics. Finally, \cite{coverageandrateofjointcommunication} analyzed S \& C coverage and rates in a mmWave joint communication and sensing (JSAC) network using directional beamforming, providing upper and lower bounds based on stochastic geometry. In the third category \cite{onthefundementaltradeoff}, the sensing metric is estimation error. The authors considered a general point-to-point (P2P) system setting where a random signal is emitted from a multi-antenna BS and received by both a communication user and a sensing target. They analyzed the trade-off between communication capacity and CRB, where the sensing parameter follows a certain distribution.}
\\
\textcolor{blue}{To the best of our knowledge, this paper is the first to characterize the performance of ISAC by focusing on the estimation of the target angle as the sensing task, considering the randomness in both the communication user channel vector and the target. We specifically focus on a downlink MIMO ISAC system where a multiple-antenna BS transmits a dual-functional precoding matrix. This matrix serves the dual purpose of transmitting communication data to a user while concurrently sensing the angular location of a target via the reflected echo signal. The channel and the channel estimation error between the BS and the user are modeled as two independent, circularly symmetric complex Gaussian random vectors (Rayleigh fading channel models, which may not be applicable to higher frequency bands), and the target's angle is assumed to follow a uniform distribution.\footnote{\textcolor{blue}{The approach used for deriving the sensing and communication performance metrics is applicable to any arbitrary distribution for the target's angle.}} We define the C-R region and aim to characterize its Pareto boundary, where each point on this boundary represents a trade-off, such that neither sensing nor communication can be improved without degrading the other. We consider two beamforming methods, depending on whether different signals are designed for user and radar purposes or not: Subspace Joint Beamforming (SJB) and Linear Beamforming (LB). The primary contributions of this paper can be summarized as follows:
\\
1) We reveal the S \& C trade-off in a random ISAC system by considering both ergodic and outage probability (OP) metrics, suitable for fast and slow fading scenarios, respectively. Specifically, we derive exact expressions for four key metrics in random channels under perfect and imperfect channel state information (CSI) at the BS in both SJB and LB: the user's OP and ergodic rate. For the target, we propose two novel metrics: the OP of the target, denoted as \(P(\text{CRB}(\theta) > \epsilon)\), where \(\text{CRB}(\theta)\) represents the CRB for the angle of arrival, and the ergodic CRB, denoted as \(E\{\text{CRB}\}\), which we prove provides a tighter bound than the Bayesian CRB (BCRB).\footnote{These metrics were highlighted as a future research direction in the overview paper \cite{recentadvancesandtenopenchallenges}.} In LB, we apply Costa's dirty paper coding (DPC) \cite{costa} and consider cases without DPC.
\\
2) We prove that the SJB beamforming vector is optimal for minimizing the CRB when using the same signals for S \& C. In LB, we propose using separate beamforming vectors for the user and radar signals, which, while not optimal, fully exploit the DoF of MIMO radar \cite{JointTransmitBeamformingforMultiuser,optimaltransmitbeamformingintegrated} and can outperform SJB.
\\
3) In LB, we derive upper and lower bounds, along with an approximation for the OP of the target, using the Cauchy-Schwarz inequality and the law of large numbers.
\\
4) We analyze the system's performance at two key points: optimal sensing and optimal communication, achieved by directing all BS power to the target or user. These are termed "opportunistic sensing" and "opportunistic communication," as one function operates without affecting the other. 
\\
5) In the simulation, we illustrate the S \& C trade-off by deriving the achievable OP region for the target and user, and including the time-sharing line to show ISAC's benefits. When the target OP threshold is high, JB outperforms LB by lowering OP for both target and user. However, when the target OP threshold is low and effective communication (low user OP) is needed, LB performs better \footnote{\textcolor{blue}{We note that our model focuses on a single range-Doppler bin, which refers to a specific cell in a two-dimensional grid (or matrix) used in radar signal processing to localize a target in terms of its range (distance) and Doppler shift (relative speed). Specifically, the performance metrics presented in this paper should be applied individually to each range-Doppler bin.}}.
}
\\
\textbf{Differences between this paper and recent works:} \textcolor{blue}{ Our paper focuses on the stochastic analysis of ISAC, with an emphasis on estimation as the sensing task, which is also explored in \cite{onthefundementaltradeoff,coverageandrateofjointcommunication}. Unlike \cite{onthefundementaltradeoff}, which only considers two points on the C-R boundary and ignores user channel randomness without considering a beamforming vector, we characterize the entire boundary, assuming a beamforming vector aligned with both the target and user random channels. While \cite{onthefundementaltradeoff} uses the BCRB, we derive the OP and ergodic CRB, which is tighter. Unlike \cite{coverageandrateofjointcommunication}, which focuses on delay or Doppler estimation and derives sensing rate and coverage upper and lower bounds, we focus on angle estimation, deriving exact CRB, user rate, and OP metrics, rather than bounds. We also derive the Pareto boundary to illustrate the S \& C trade-off. In contrast to \cite{coverageandrateofjointcommunication}, which ignores the target angle, does not use beamforming, and parameterizes the antenna pattern based on main lobe gain, we consider two beamforming vectors that are functions of both the random target and user channels. This makes the derivation more challenging due to the correlation between the beamforming vectors and these channels. Furthermore, unlike \cite{networklevelintegratedsensingcommunication}, which derives ergodic radar and communication rates using zero-forcing beamforming to nullify intra-cluster S \& C interference, we focus on the estimation task and CRB for sensing based on two different beamforming strategies, ensuring that the effects of S \& C are not nullified in each other's performance. We also account for both perfect and imperfect CSI at the BS, unlike the studies in \cite{onthefundementaltradeoff,networklevelintegratedsensingcommunication,coverageandrateofjointcommunication}.}

The subsequent sections of this article are organized as follows: Section \ref{systemmodel} provides the system model and introduces the beamforming methods, SJB and LB. Sections \ref{jointbeamformingperformance} and \ref{linearbeamforing} analyze the system performance for SJB and LB, respectively. Opportunistic sensing/communication approaches are provided in Section \ref{specialpoints}. Simulations and conclusions are presented in Sections \ref{simulations} and \ref{conclusion}, respectively.

\textbf{Notation:} We use lowercase letters, boldface lowercase, and boldface uppercase to denote scalar quantities, vectors, and matrices, respectively. $P(.), f_x(.), E[.]$ represent the probability, probability density function (PDF), and expectation. $\mathbf{X}^{M \times N}$, $\mathbf{X}^T, \mathbf{X}^H$, and $\mathbf{X}^*$ denote a matrix with $M$ rows and $N$ columns, its transpose, Hermitian transpose, and conjugate, respectively. The Euclidean norm of a vector is $\parallel. \parallel$, and the norm of a complex number is $|.|$. $\mathcal{CN}(.,.)$ represents a circularly-symmetric complex Gaussian distribution, $\mathcal{N}_3(\mathbf{\mu}, \mathbf{\Sigma})$ is a trivariate normal distribution with mean $\mathbf{\mu}$ and covariance matrix $\mathbf{\Sigma}$, and $F_{w,k,\lambda,s,m}(.)$ is the cumulative distribution function (CDF) of a generalized chi-square RV with parameters $w, k, \lambda, s, m$. $\mathbb{C}$ and $\mathbb{R}$ denote sets of complex and real numbers. $\overset{d}\rightarrow$ and $\overset{p}\rightarrow$ indicate convergence in distribution and probability. \text{Tr($A$)} represents the trace of matrix $A$.
\begin{figure}
    \includegraphics[scale=.38]{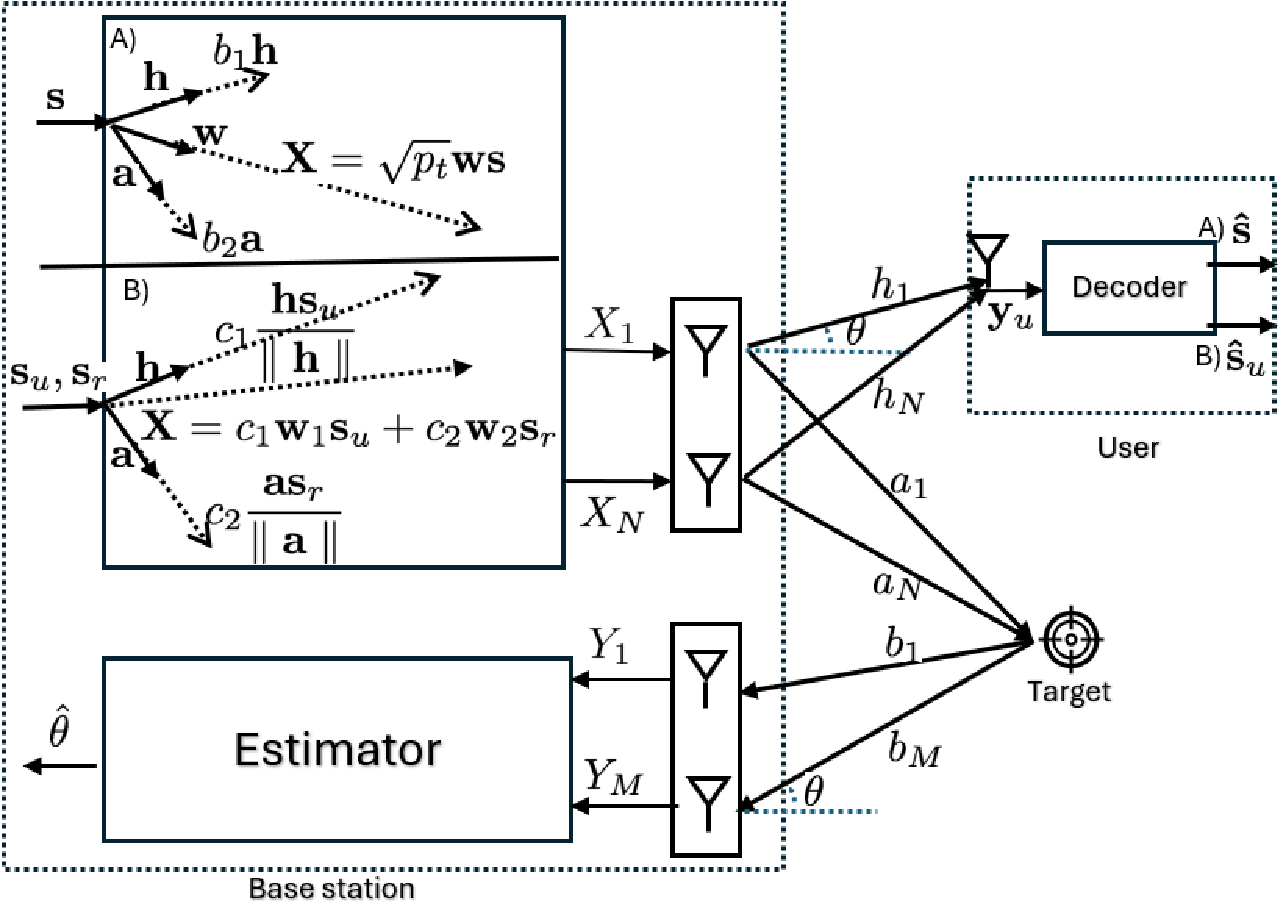}\caption{System model: (A) subspace joint beamforming (SJB), (B) linear beamforming (LB).} \label{systemmodelfig}
\end{figure}
\section{System Model}\label{systemmodel}
We consider a scenario where a BS is equipped with $N$ transmit antennas and $M$ receive antennas and aims to serve a single-antenna user in the downlink by transmitting data at a rate denoted as $R$, ensuring that the user can accurately decode the message. Simultaneously, the BS senses a target located at a distant point away from the BS, assumed to be a point target. We consider a
mono-static radar setting where both the estimator (radar receiver) and transmit antennas are located at BS and the direction of arrival
(DoA) and the direction of departure (DoD) are the same. The BS sends $\mathbf{X} \in \mathbb{C}^{N\times L}$, a narrow-band dual functional radar communication signal matrix, where $L>N$ is the length of the radar pulse/communication
frame. Channel vector between the BS and the user is $\mathbf{h}=[h_1\quad h_2 ...h_N]^T \in \mathbb{C}^{N \times 1}$, whose elements are independent and identically distributed (i.i.d.) with distribution $\mathcal{CN} (0,1)$. More precisely, element $i$-th can be expressed as $h_i=m_i+jn_i$ with $m_i$, $n_i \sim \mathcal{N}(0,\frac{1}{2})$. For all $i$, the elements $m_i$ and $n_i$ are i.i.d. \textcolor{blue}{We consider imperfect CSI at the BS. Thus, we model the actual channel, $\mathbf{h}$, for each user as \(\mathbf{\bar{h}} = \mathbf{h} + \mathbf{e}\), where \(\mathbf{\bar{h}}\) represents the estimated channel at the BS, and \(\mathbf{e}\) denotes the channel error vector. We utilize the widely-adopted Gaussian channel error model, as discussed in \cite{probabilisticallyconstrainedapproachestothedesign}. Specifically, \(e \sim \mathcal{CN}(0, \mathbf{C})\), where \(\mathbf{C}\) is a known error covariance matrix. For simplicity, we assume \(\mathbf{C} = \sigma_e^2 \mathbf{I}\) \footnote{\textcolor{blue}{The performance of the system under perfect CSI can be derived by setting $\sigma_e=0$.}}.}
Furthermore, assuming an even number of antennas, the transmit and receive array steering vectors from the BS to the target are defined as follows: 
\begin{align}
\mathbf{a}(\theta) \!\!\!&=\!\! \left[ e^{-j\pi \sin(\theta)\frac{N-1}{2}}\!\!, e^{-j\pi \sin(\theta)\frac{N-3}{2}}\!\!, \!\!\ldots, e^{j\pi \sin(\theta)\frac{N-1}{2}} \right]^T \!\!\!\!\!\!\in \mathbb{C}^{N \!\times \!1}\nonumber\\
\mathbf{b}(\theta)\!\!\!&=\!\! \left[ e^{-j\pi \sin(\theta)\frac{M-1}{2}}\!\!, e^{-j\pi \sin(\theta)\frac{M-3}{2}}\!\!, \!\! \ldots, e^{j\pi \sin(\theta)\frac{M-1}{2}} \right]^T  \!\!\!\!\!\!\in \mathbb{C}^{M \!\times\! 1}\nonumber
\end{align}
where $\theta$ represents the azimuth angle of the target relative to the BS, and it follows a uniform distribution in the interval $[0,\pi]$. We express $i$-th element of $\mathbf{a}(\theta)$ as $a_i=e^{-jf_i}$, where $f_i=\pi \sin(\theta)\frac{N-(2i-1)}{2}$. The received signal at the user is $\mathbf{y}_u=\mathbf{h}^H \mathbf{X}+\mathbf{z}_u$, where $\mathbf{z}_u\in \mathbb{C}^{1 \times L}$ is the additive white Gaussian noise
(AWGN) vector where each of its elements has the distribution of the form $\mathcal{CN} (0,\sigma^2_u)$. When the BS transmits $\mathbf{X}$ to sense the target, it receives back
the reflected echo signal matrix as $\mathbf{Y}_r=\alpha \mathbf{b}(\theta)\mathbf{a}(\theta)^H \mathbf{X}+\mathbf{Z}_r$, where \textcolor{blue}{$\alpha \in \mathcal{C}$ denotes the complex-valued channel coefficient, which is dependent on the target’s radar cross section (RCS) and the round-trip path loss.} $\mathbf{Z}_r\in \mathbb{C}^{M \times L}$ is AWGN matrix which its elements being i.i.d. and having the distribution of the form $\mathcal{CN} (0,\sigma^2_r)$.

The metrics employed for evaluating the performance of S \& C are CRB and rate, respectively. 
%$\text{CRB}(\theta)$ is the distortion metric giving us a lower bound of the MSE of all unbiased estimators in estimating $\theta$.
Given the random nature of the channels, $\text{CRB}(\theta)$ and the user's rate ($R$) are RVs, highlighting the need for defining metrics suitable for analyzing performance in random networks. Thus, we utilize two metrics for users: the user's OP, defined as $P(\text{SINR}<\gamma)$, which is suitable for slow fading scenarios, and the ergodic rate, expressed as $E[\log(1+\text{SINR})]$, suitable for fast fading. For the target, we propose two novel metrics: the ergodic CRB, defined as \( E[\text{CRB}(\theta)] \), and the outage probability (OP) of the target, represented as \( P(\text{CRB}(\theta) > \epsilon) \). In the following subsection, we propose two methods for designing $\mathbf{X}$, and subsequently, we derive the aforementioned performance metrics.
\subsection{Subspace Joint Beamforming (SJB)}\label{jointbeamforming}
In the SJB scenario, the signal transmitted by the BS, $\mathbf{X} \in \mathbb{C}^{N\times L}$, is defined as follows:
\begin{equation}
\mathbf{X}=\sqrt{p_t}\mathbf{w}\mathbf{s}\label{x},
\end{equation}
$\mathbf{w} \in \mathbb{C}^{N \times 1}$
is the beamforming vector, and $\mathbf{s} \in \mathbb{C}^{1 \times L}$ is the white Gaussian
signaling data stream
 for the user with unit power, $\frac{1}{L}\mathbf{s}\mathbf{s}^H\approx1$ when $L$ is sufficiently large \cite{CrameRaoBoundOptimizationforJoint}. $p_t$ is the transmit power of the BS. We assume that $\mathbf{w}$ lies in the complex span of the channel vector of the user and the target. Moreover, \textcolor{blue}{the beamforming vector is designed based on the estimated channel, $\mathbf{\bar{h}}$. Thus, at SJB, we have:
\begin{equation}
\mathbf{w} =\frac{b_1\mathbf{\bar{h}}+b_2\mathbf{a}}{\parallel b_1\mathbf{\bar{h}}+b_2\mathbf{a} \parallel}\label{wi},
\end{equation}}
where $b_1=|b_1|e^{j\phi_1}\in \mathbb{C}$ and $b_2=|b_2|e^{j\phi_2}\in \mathbb{C}$ are two design parameters which can be optimized to maximize the SINR of the user or to minimize the CRB. Fig. \ref{systemmodelfig} (A) illustrates the system model in this scenario.
\subsection{Linear Beamforming (LB)}
In LB, BS uses different beamforming for the user, \textcolor{blue}{$\mathbf{w}_1= \frac{\bar{\mathbf{h}}}{||\bar{\mathbf{h}}||}$} where $\mathbf{w}_1 \in \mathbb{C}^{N \times 1}$, and radar probing signal, $\mathbf{w}_2=\frac{\mathbf{a}}{||\mathbf{a}||}$ where $\mathbf{w}_2 \in \mathbb{C}^{N \times 1}$. The signal sent by the BS is:
\begin{equation}
\mathbf{X}=c_1\mathbf{w}_1\mathbf{s}_u+c_2\mathbf{w}_2\mathbf{s}_r,\label{xnew}
\end{equation}
 where $\mathbf{s}_u \in \mathbb{C}^{1 \times L}$ is the white Gaussian signaling data stream
 for the user, with unit power, $\frac{1}{L}E\{\mathbf{s}_u\mathbf{s}_u^H\}\approx1$ when $L$ is sufficiently large \cite{CrameRaoBoundOptimizationforJoint}. $\mathbf{s}_r \in \mathbb{C}^{1 \times L}$ is signal dedicated for radar with $\frac{1}{L}\mathbf{s}_r\mathbf{s}_r^H=1$.
The received signal at the user is:
\textcolor{blue}{
\begin{equation}
\mathbf{y}_u= c_1\frac{\mathbf{h}^H \mathbf{\bar{h}} \mathbf{s}_u}{\parallel \mathbf{\bar{h}} \parallel}+c_2
\frac{\mathbf{h}^H  \mathbf{a} \mathbf{s}_r}{\parallel \mathbf{a} \parallel}+ \mathbf{z}_u.\label{yunew}
\end{equation}}
where $\mathbf{z}_u\in \mathbb{C}^{1 \times L}$ is the AWGN vector where each of its elements has the distribution of the form $\mathcal{CN} (0,\sigma^2_u)$. Moreover, using  (\ref{xnew}), and orthogonality assumption between $\mathbf{s}_r$ and $\mathbf{s}_u$, the sample covariance matrix is:
\begin{align}
\mathbf{R}_x=\frac{1}{L}\mathbf {X}\mathbf{X}^H\approx |c_1|^2\mathbf{w}_1\mathbf{w}_1^H+ |c_2|^2\mathbf{w}_2\mathbf{w}_2^H.\label{rxnew}
\end{align}
We assume that the power of the BS is $p_t$, thus we have:
\begin{align}
p_t&=\text{Tr}(R_x)=|c_1|^2\text{Tr}(\mathbf{w}_1\mathbf{w}_1^H)+|c_2|^2\text{Tr}(\mathbf{w}_2\mathbf{w}_2^H)\nonumber\\
&\overset{(a)}=|c_1|^2\text{Tr}(\mathbf{w}_1^H\mathbf{w}_1)\!+|c_2|^2\text{Tr}(\mathbf{w}_2^H\mathbf{w}_2)\!\!\overset{(b)}=|c_1|^2+|c_2|^2.\label{power}
\end{align}
where (a) is due to $\text{Tr}(ab)=\text{Tr}(ba)$ and (b) is due to $\mathbf{w}_1^H\mathbf{w}_1=\frac{\mathbf{h}^H\mathbf{h}}{||\mathbf{h}||^2}=1$ and $\mathbf{w}_2^H\mathbf{w}_2=\frac{\mathbf{a}^H\mathbf{a}}{||\mathbf{a}||^2}=1$. Fig. \ref{systemmodelfig} (B) illustrates the system model in this scenario.
\subsection{Problem Formulation}
\textcolor{blue}{We denote the achievable C-R regions of the ISAC system as: $\mathcal{C} \triangleq \!\!\bigcup_{\boldsymbol{R}_{x} \succeq \boldsymbol{0}}\!\! \Big\{(\hat{R}, \hat{\Psi}) \ \Big|\!\! \ \hat{R} \leq \log_{2}\Big(1 + \frac{\boldsymbol{h}_{k}^{H} \boldsymbol{R}_{x} \boldsymbol{h}_{k}}{\sigma^{2}}\Big), \!\! \ \hat{\Psi} \geq \mathrm{CRB}(\boldsymbol{R}_{x}),
\ \mathrm{tr}(\boldsymbol{R}_{x}) \leq p_t \Big\}$ where \(\log_{2}\Big(1 + \frac{\boldsymbol{h}_{k}^{H} \boldsymbol{R}_{x} \boldsymbol{h}_{k}}{\sigma^{2}}\Big)\) represents the instantaneous rate at the user. Our goal is to determine the Pareto boundary of this region. Note that both the rate and CRB\((\boldsymbol{R}_{x})\) are convex with respect to \(\boldsymbol{R}_{x}\). Consequently, the C-R region is a convex set. Therefore, the entire Pareto boundary can be characterized by minimizing the CRB while varying the rate threshold, as shown in \cite{onthefundementaltradeoff} and \cite{MIMOIntegratedSensingandCommunicationCRBRateTradeoff}.
\\
For the case of SJB, we have \(\mathbf{R}_x = p_t \mathbf{w} \mathbf{w}^H\). Thus, the problem reduces to optimizing the beamforming vector \(\mathbf{w}\) to minimize the estimation CRB while ensuring a minimum rate \(\gamma\) under a maximum transmit power constraint. Thus, for each realization of the channels we have:
\begin{align}
&\underset{\mathbf{w}}{\min} \quad \mathrm{CRB}(\theta) \nonumber\\
& \text{s.t.} \quad \gamma \leq \log_{2}\Big(1 + \frac{|\mathbf{h}^H \mathbf{w}|^2}{\sigma^2_u}\Big), \nonumber\\
& \mathrm{tr}(p_t \mathbf{w} \mathbf{w}^H) \leq p_t, \label{optimization}
\end{align}
To solve this optimization problem, we use the following lemma, where it is proved in \cite[Lemma 1]{cramerraoboundoptimizationforjointradarcommunicationbeamforming}.}
\begin{lemma}
\textcolor{blue}{The optimal solution of \eqref{optimization} lies in the span of \(\{\mathbf{a}, \mathbf{h}\}\).}
\end{lemma}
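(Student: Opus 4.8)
The plan is to prove the lemma via a decomposition argument: any feasible beamforming vector $\mathbf{w}$ can be orthogonally projected onto the subspace $\mathcal{V} = \mathrm{span}\{\mathbf{a}, \mathbf{h}\}$ and its orthogonal complement, and I would show that the projection onto $\mathcal{V}$ is at least as good as $\mathbf{w}$ for both the objective and the constraints. Write $\mathbf{w} = \mathbf{w}_{\parallel} + \mathbf{w}_{\perp}$, where $\mathbf{w}_{\parallel} \in \mathcal{V}$ and $\mathbf{w}_{\perp} \perp \mathcal{V}$. The central observation is that both the sensing objective $\mathrm{CRB}(\theta)$ and the communication constraint depend on $\mathbf{w}$ only through its inner products with $\mathbf{a}(\theta)$ and $\mathbf{h}$ (and on $\mathbf{a}(\theta)$'s derivative $\dot{\mathbf{a}}(\theta)$ for the Fisher information); since $\dot{\mathbf{a}}$ may not lie in $\mathcal{V}$, I would need to check carefully which quantities the component $\mathbf{w}_{\perp}$ actually influences.

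First I would write down the explicit form of $\mathrm{CRB}(\theta)$ for the single-target angle-estimation problem. The Fisher information for $\theta$ is a quadratic form in $\mathbf{w}$ built from $\mathbf{a}(\theta)$, $\dot{\mathbf{a}}(\theta)$, and the transmit power; the CRB is its inverse (or the relevant entry of the inverse information matrix after accounting for the nuisance parameter $\alpha$). The communication constraint involves $|\mathbf{h}^H\mathbf{w}|^2$, which depends only on $\mathbf{h}^H\mathbf{w}_{\parallel}$ since $\mathbf{h}^H\mathbf{w}_{\perp}=0$. The power constraint is $\|\mathbf{w}\|^2 = \|\mathbf{w}_{\parallel}\|^2 + \|\mathbf{w}_{\perp}\|^2 \le 1$. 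The goal is to argue that setting $\mathbf{w}_{\perp}=0$ never hurts the objective while only relaxing the power budget.

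The key steps, in order, are: (i) show that the communication rate constraint is unaffected by $\mathbf{w}_{\perp}$, so feasibility is preserved when we drop it; (ii) show that $\|\mathbf{w}_{\parallel}\|^2 \le \|\mathbf{w}\|^2$, so the power constraint remains satisfied and in fact slack is created, which can then be reallocated within $\mathcal{V}$; (iii) show that the Fisher information, and hence the CRB, either improves or is unchanged when the orthogonal component is removed and its power redirected into $\mathcal{V}$. Because the steering-vector derivative $\dot{\mathbf{a}}(\theta)$ enters the Fisher information, the cleanest route is to verify that the relevant quadratic forms appearing in the CRB expression are governed by $\mathbf{a}^H\mathbf{w}$, $\dot{\mathbf{a}}^H\mathbf{w}$, and $\|\mathbf{w}\|^2$, and then check that projecting onto $\mathcal{V}$ preserves the numerator terms controlling estimation accuracy while not increasing the power cost.

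The main obstacle I anticipate is step (iii): the component $\mathbf{w}_{\perp}$ is orthogonal to $\mathbf{a}$ and $\mathbf{h}$ but need not be orthogonal to $\dot{\mathbf{a}}(\theta)$, so it can contribute to the Fisher information through the $\dot{\mathbf{a}}^H\mathbf{w}$ term, and one must argue that any such contribution can be matched or exceeded by a vector inside $\mathcal{V}$ once the freed-up power is reallocated. Handling the nuisance parameter $\alpha$ correctly in the CRB (so that the quantity being minimized is genuinely a monotone function of the projected inner products) is the delicate part. Since the excerpt cites \cite[Lemma 1]{cramerraoboundoptimizationforjointradarcommunicationbeamforming} as the source, I would lean on that reference's established structure of the CRB and present the projection argument as the conceptual skeleton, deferring the detailed monotonicity verification of the Fisher-information quadratic form to that citation.
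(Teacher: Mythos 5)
Your projection skeleton is the right approach, but note that the paper offers no inline proof to compare against: it imports the lemma wholesale, citing [Lemma 1] of the referenced work. Relative to that, your decomposition $\mathbf{w}=\mathbf{w}_{\parallel}+\mathbf{w}_{\perp}$, the observation that the rate constraint sees only $\mathbf{h}^H\mathbf{w}_{\parallel}$, and the power-slack/rescaling step are all correct and standard. The one place your plan points in the wrong direction is precisely the obstacle you flag in step (iii). You propose to show that whatever $\mathbf{w}_{\perp}$ contributes to the Fisher information through $\mathbf{a}'^H\mathbf{w}$ (your $\dot{\mathbf{a}}^H\mathbf{w}$) can be matched by reallocating power inside $\mathcal{V}$; that domination argument is not needed and would be hard to push through. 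The actual resolution is an exact cancellation that the paper itself supplies: start from \eqref{crb} (which already has the nuisance parameter $\alpha$ profiled out --- that is why it has the Schur-complement form), substitute the rank-one covariance $\mathbf{R}_x=p_t\mathbf{w}\mathbf{w}^H$, and use the symmetric-array identity $\mathbf{b}^H\mathbf{b}'=0$ stated in Appendix~\ref{lemma4p}. Then
\begin{align*}
\mathrm{Tr}(\mathbf{A}^H\mathbf{A}\mathbf{R}_x)&=p_t\|\mathbf{b}\|^2|\mathbf{a}^H\mathbf{w}|^2,\\
\mathrm{Tr}(\mathbf{A}'^H\mathbf{A}'\mathbf{R}_x)&=p_t\big(\|\mathbf{b}'\|^2|\mathbf{a}^H\mathbf{w}|^2+\|\mathbf{b}\|^2|\mathbf{a}'^H\mathbf{w}|^2\big),\\
|\mathrm{Tr}(\mathbf{A}'^H\mathbf{A}\mathbf{R}_x)|^2&=p_t^2\|\mathbf{b}\|^4|\mathbf{a}^H\mathbf{w}|^2|\mathbf{a}'^H\mathbf{w}|^2,
\end{align*}
and because $\mathbf{R}_x$ is rank one, Cauchy--Schwarz holds with equality in the last line, so the $|\mathbf{a}'^H\mathbf{w}|^2$ terms cancel identically in the denominator of \eqref{crb}, leaving exactly \eqref{cramer}: $\mathrm{CRB}(\theta)=\sigma_r^2/\big(2Lp_t|\alpha|^2\|\mathbf{b}'\|^2|\mathbf{a}^H\mathbf{w}|^2\big)$. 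The transmit-side steering derivative drops out entirely; only the receive-array derivative $\mathbf{b}'$ carries the angle information.

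With that, your argument closes immediately and self-containedly: the objective depends on $\mathbf{w}$ only through $|\mathbf{a}^H\mathbf{w}|^2$ and the rate constraint only through $|\mathbf{h}^H\mathbf{w}|^2$, both of which are blind to $\mathbf{w}_{\perp}$, while $\mathbf{w}_{\perp}$ only consumes power. Dropping it and rescaling $\mathbf{w}_{\parallel}$ by $\|\mathbf{w}\|/\|\mathbf{w}_{\parallel}\|\ge 1$ keeps the rate constraint satisfied and (weakly) decreases the CRB; since an optimal $\mathbf{w}$ must have $\mathbf{a}^H\mathbf{w}\neq 0$ (else the CRB is infinite), any optimal solution has $\mathbf{w}_{\perp}=\mathbf{0}$. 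So your proof is completable without leaning on the citation at all, and once completed this way it is strictly more informative than the paper's proof-by-reference; but as written, your step (iii) is the one genuine gap, and its correct fix is cancellation, not power reallocation.
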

\textcolor{blue}{It is important to note that by solving this optimization problem for a given rate threshold \(\gamma\), we can identify a point on the Pareto boundary of the C-R region. By varying the value of \(\gamma\), we can obtain the complete set of Pareto boundary points for different C-R tradeoffs.
\\
Considering this lemma, the beamforming vector for SJB is optimal among all beamforming strategies that use the same signal for both S \& C. However, although our proposed beamforming approach at LB lies in the span of \(\{\mathbf{a}, \mathbf{h}\}\), it may not be optimal among methods that use \textit{different signals} for S \& C. Nevertheless, it has the potential to outperform SJB. Additionally, LB offers the opportunity to fully exploit the DoF provided by MIMO radar \cite{JointTransmitBeamformingforMultiuser,optimaltransmitbeamformingintegrated}. Also, by employing DPC (discussed later), it can mitigate the radar signal at the user, potentially improving communication performance.}
\section{SJB performance analysis}\label{jointbeamformingperformance}
In this section, we derive S \& C performance metrics in SJB, as defined in Section \ref{systemmodel}.
\subsection{OP of the user}\label{opofuser}
\textcolor{blue}{
Based on (\ref{x}), replacing \(\mathbf{\bar{h}} = \mathbf{h} + \mathbf{e}\) at (\ref{wi}), $\mathbf{y}_u$ (derived in Section \ref{systemmodel}), and defining each element of the error vector $\mathbf{e}$ by $e_i\sim \mathcal{CN}(0,\sigma^2_e)$, the SINR at the user is:
\begin{align}
\text{SINR}\!\!&\overset{(a)}{=}\frac{p_t}{\sigma^2_u}\frac{(\sum_{i=1}^{N}x_i)^2+(\sum_{i=1}^{N}y_i)^2}{(\sum_{i=1}^{N}k_i)}\overset{(b)}{=}\frac{p_t}{\sigma^2_u}\frac{X^2+Y^2}{K},\label{sinr2i}
\end{align}
where ($a$) is due to defining RVs $x_i\triangleq \mathcal{R}(b_1|h_i|^2\!\!+b_1h^*_ie_i+b_2h^*_ie^{-jf_i})$, $y_i\triangleq\mathcal{I}(b_1|h_i|^2\!\!+b_1h^*_ie_i+b_2h^*_ie^{-jf_i})$, and $k_i\triangleq|b_1h_i+b_1e_i+b_2e^{-jf_i}|^2$, in which $\mathcal{R}$ and $\mathcal{I}$ indicates real and imaginary parts; ($b$) is obtained by defining $X=\sum_{i=1}^{N}x_i$, $Y=\sum_{i=1}^{N}y_i$ and $K=\sum_{i=1}^{N}k_i$. Therefore, the OP of the user, $P_u(\gamma))$, is: $P_u(\gamma)\overset{(a)}{=}\int_{0}^{\pi}P(\frac{p_t}{\sigma^2_u}\frac{X^2+Y^2}{K}<\gamma)|\theta)f_{\theta}(\theta)d\theta,$ where ($a$) follows from conditioning on \(\theta\) and the independence of \(\theta\), \(\mathbf{h}\), and \(\mathbf{e}\). Thus, to calculate the inner probability, we need to derive the joint PDF of $X$, $Y$, and $K$. We note that $X$, $Y$, and $K$ (also $x_i$, $y_i$, and $k_i$) are not independent, as they are functions of $h_i$, $e_i$, and $f_i$. By conditioning on $\theta$, $f_i$, $\forall i=1,...,N$, will be treated as constant in the following. We define $N$ random vectors, $\mathbf{d}_i=[x_i, y_i, k_i]^T \in \mathbb{R}^{3 \times 1}$, $\forall i=1,...,N$. For any pair of $j$ and $i\neq j$, the vectors $\mathbf{d}_j$ and $\mathbf{d}_i$ are i.i.d because $h_i$s and $e_i$s are i.i.d. Thus, by using multidimensional central limit theorem (CLT) \cite{enwiki:1155685628}, when $N$ is large \footnote{Section \ref{simulations} shows that for $N>9$, multidimensional CLT holds. Moreover, in \cite{Aunifiedperformanceframeworkfor} and with the help of simulation, the authors show that the accuracy of CLT for a one-dimensional RV holds for $N>8$.} (which holds for the case of MIMO ISAC, due to using large antenna arrays), we have $\sqrt{N}[\frac{1}{N}(\sum_{i=1}^{N}\mathbf{d}_i)-\mathbf{\mu_d}]\overset{d}{\rightarrow} \mathcal{N}_3(\mathbf{0},\mathbf{\Sigma_d})$, 
which means $\sum_{i=1}^{N}\mathbf{d}_i\overset{d} \rightarrow \mathcal{N}_3(N\mathbf{\mu_d}, N\mathbf{\Sigma_d})$, 
where $\mathbf{\mu_d}$ and $\mathbf{\Sigma_d}$ are mean vector and covariance matrix of $\mathbf{d}_i$ (the same for all $i=1,..., N$). Therefore, $[X, Y, K]^T\overset{(d)}{\rightarrow} \mathcal{N}_3(N\mathbf{\mu_d},N\mathbf{\Sigma_d})$\footnote{We observed that Mont-Carlo simulation confirmed the numerical approximation.}. Thus, by finding $\mathbf{\mu_d}$ and $\mathbf{\Sigma_d}$ with the aid of Lemma \ref{lemma1i} (proof at Appendix \ref{lemma1pi}), the joint PDF of $X$, $Y$, and $K$ is derived.
\begin{lemma}\label{lemma1i}
$\mathbf{\mu_d}$ and $\mathbf{\Sigma_d}$ are derived as $\mathbf{\mu_d}=[|b_1| \cos(\phi_1),|b_1| \sin(\phi_1),|b_1|^2 + |b_1|^2 \sigma^2_e + |b_2|^2]$ and (\ref{sigmadi}), respectively, where $\zeta \triangleq \sin(\phi_1)$, $\kappa \triangleq\cos(\phi_1)$, and $\delta=|b_1|^4+|b_1|^4\sigma^4_e+2|b_1|^2|b_2|^2+2|b_1|^4\sigma^2_e+2|b|^2_1b^2_2\sigma^2_e$.
\end{lemma}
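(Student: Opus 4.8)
The plan is to compute, entrywise, the three first moments and the six distinct second moments of $\mathbf{d}_i=[x_i,y_i,k_i]^T$. Throughout I would condition on $\theta$, so that each $f_i$ (hence $a_i=e^{-jf_i}$, with $|a_i|=1$) is a deterministic constant, leaving $h_i\sim\mathcal{CN}(0,1)$ and $e_i\sim\mathcal{CN}(0,\sigma^2_e)$ as the only randomness; these are independent. The whole calculation then reduces to moments of two circularly symmetric complex Gaussians, for which I would repeatedly invoke three standard facts: (i) every moment of odd total degree vanishes; (ii) $E[h_i^{p}(h_i^*)^{q}]=0$ unless $p=q$, with $E[|h_i|^{2p}]=p!$ (so $E[|h_i|^2]=1$, $E[|h_i|^4]=2$), and analogously $E[|e_i|^2]=\sigma^2_e$, $E[|e_i|^4]=2\sigma^4_e$; and (iii) factorization of expectations across the independent $h_i$ and $e_i$.

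For the mean vector I would set $u_i\triangleq x_i+jy_i=b_1|h_i|^2+b_1h_i^*e_i+b_2h_i^*a_i$ and take $E[u_i]$ by linearity. The last two terms vanish because $h_i^*$ enters to odd degree multiplying a zero-mean factor, leaving $E[u_i]=b_1E[|h_i|^2]=b_1=|b_1|(\kappa+j\zeta)$; its real and imaginary parts give the first two entries of $\mathbf{\mu_d}$. For the third entry I would expand $k_i=|b_1h_i+b_1e_i+b_2a_i|^2$ and note that every cross term carries an unbalanced power of $h_i$ or $e_i$ and hence has zero mean, so only the three diagonal terms survive, giving $E[k_i]=|b_1|^2+|b_1|^2\sigma^2_e+|b_2|^2$.

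For the covariance matrix I would exploit the complex representation. The $(x_i,y_i)$ block follows from the two complex second moments: expanding each as a nine-term sum and discarding terms with an odd power of $h_i$ or $e_i$ (or an unbalanced $h_i$ versus $h_i^*$, e.g.\ $E[(h_i^*)^2]=0$), one obtains $E[|u_i|^2]=2|b_1|^2+|b_1|^2\sigma^2_e+|b_2|^2$ and $E[u_i^2]=2b_1^2$. Then $\mathrm{Var}(x_i)+\mathrm{Var}(y_i)=E[|u_i|^2]-|b_1|^2$ and $\mathrm{Var}(x_i)-\mathrm{Var}(y_i)+2j\,\mathrm{Cov}(x_i,y_i)=E[u_i^2]-b_1^2=b_1^2$; substituting $b_1^2=|b_1|^2(\cos 2\phi_1+j\sin 2\phi_1)$ with $\cos 2\phi_1=2\kappa^2-1=1-2\zeta^2$ yields the diagonal entries $|b_1|^2\kappa^2+\frac{1}{2}|b_1|^2\sigma^2_e+\frac{1}{2}|b_2|^2$ and $|b_1|^2\zeta^2+\frac{1}{2}|b_1|^2\sigma^2_e+\frac{1}{2}|b_2|^2$ together with $\mathrm{Cov}(x_i,y_i)=|b_1|^2\zeta\kappa$. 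Since $k_i$ is real, I would then compute the single complex quantity $E[u_ik_i]$ and read off $\mathrm{Cov}(x_i,k_i)=\mathcal{R}\big(E[u_ik_i]-b_1E[k_i]\big)$ and $\mathrm{Cov}(y_i,k_i)=\mathcal{I}\big(E[u_ik_i]-b_1E[k_i]\big)$. Finally, for $\mathrm{Var}(k_i)$ I would write $v=b_1(h_i+e_i)+b_2a_i$, so $b_1(h_i+e_i)\sim\mathcal{CN}(0,|b_1|^2(1+\sigma^2_e))$, and use $E[|v|^4]=2\sigma_s^4+4\sigma_s^2|c|^2+|c|^4$ with $\sigma_s^2=|b_1|^2(1+\sigma^2_e)$ and $|c|^2=|b_2|^2$; subtracting $(E[k_i])^2$ then produces exactly $\delta$.

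The hard part will be the fourth-order bookkeeping in $E[u_ik_i]$ and $E[k_i^2]$: each expands into many products of powers of $h_i,h_i^*,e_i,e_i^*$ dressed with the phases carried by $b_1,b_2,a_i$, and the task is to isolate the few surviving balanced, even-degree terms and confirm that their phases combine as claimed. In particular I would check that all $a_i$-dependence cancels through $a_ia_i^*=1$, so that $\mathbf{\Sigma_d}$ is independent of $f_i$; once the surviving terms are identified, the remaining algebra is routine.
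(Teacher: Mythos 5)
Your proposal is correct, and it reaches every entry of $\mathbf{\mu_d}$ and $\mathbf{\Sigma_d}$, but by a genuinely different route than the paper. The paper's proof works in polar coordinates: it writes $h_i=|h_i|e^{j\phi_i}$, $e_i=|e_i|e^{j\tilde{\phi}_i}$ (Rayleigh amplitudes, uniform phases), expands $x_i,y_i,k_i$ into sums of amplitude--cosine products, and evaluates each covariance entry term by term using exponential/Rayleigh moments and a list of trigonometric product identities such as $E[\cos(\phi_i-\tilde{\phi}_i)\cos(\phi_1-\phi_i+\tilde{\phi}_i)]=\tfrac{1}{2}\cos(\phi_1)$. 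You instead stay in Cartesian/complex form and let circular symmetry do the work: unbalanced moments vanish, the $(x_i,y_i)$ block drops out of the two complex scalars $E[|u_i|^2]$ and $E[u_i^2]$, and $\mathrm{Var}(k_i)$ follows from the standard noncentral fourth moment $E[|s+c|^4]=2\sigma_s^4+4\sigma_s^2|c|^2+|c|^4$. This buys compactness (three complex moments replace nine real covariance computations), and it makes structurally transparent \emph{why} the answer is free of $f_i$ and $\phi_2$: only $|a_i|^2=1$ and $|b_2|^2$ can survive the Gaussian averaging, whereas in the paper this emerges only after the trigonometric cancellations. The one step you defer as ``hard bookkeeping,'' $E[u_ik_i]$, in fact collapses nicely: writing $v_i\triangleq b_1h_i+b_1e_i+b_2a_i$ one has $u_i=h_i^*v_i$ and $k_i=|v_i|^2$, so $E[u_ik_i]=E[h_i^*v_i^2v_i^*]$; expanding $v_i$ and keeping only balanced terms gives $E[u_ik_i]=2b_1\big(|b_1|^2(1+\sigma_e^2)+|b_2|^2\big)=2b_1E[k_i]$, hence $\mathrm{Cov}(u_i,k_i)=b_1E[k_i]$, whose real and imaginary parts are exactly the third-column entries $\kappa|b_1|\,E[k_i]$ and $\zeta|b_1|\,E[k_i]$ of (\ref{sigmadi}); your $\mathrm{Var}(k_i)$ likewise reproduces $\delta$ (the paper's ``$b_2^2$'' there is a typo for $|b_2|^2$, and the appendix's ``$+1$'' in the third entry of $\mathbf{\mu_d}$ is a typo for $+|b_2|^2$). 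So the only incomplete piece of your plan is routine and closes exactly as you predicted.
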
}
\begin{figure*}[t]
\normalsize
\textcolor{blue}{\begin{align}
\mathbf{\Sigma_d}\!\!=\!\!\begin{bmatrix}
|b_1|^2\kappa^2+\frac{1}{2}|b_1|^2\sigma^2_e+\frac{|b_2|^2}{2} & |b_1|^2\kappa\zeta & |b_1|^3\kappa+\kappa|b_1|^3\sigma^2_e+\kappa|b_1||b_2|^2\\
|b_1|^2\kappa\zeta & |b_1|^2\zeta^2+\frac{1}{2}|b_1|^2\sigma^2_e+\frac{|b_2|^2}{2} & |b_1|^3\zeta+\zeta|b_1|^3\sigma^2_e+\zeta|b_1||b_2|^2\\
|b_1|^3\kappa+\kappa|b_1|^3\sigma^2_e+\kappa|b_1||b_2|^2 & |b_1|^3\zeta+\zeta|b_1|^3\sigma^2_e+\zeta|b_1||b_2|^2& \delta
\label{sigmadi}
\end{bmatrix}.
\end{align}}
\hrulefill
\end{figure*}
We note that as derived in Lemma \ref{lemma1i}, the joint conditional PDF of $X$, $Y$, and $K$ is independent of $\theta$ and $\phi_2$. Using this and the assumption of uniform distribution for $\theta$, we have:
\begin{align}
\!\!\!\!\!P_u(\gamma)&=\!\!\!\iiint_{\!\!\!\frac{p_t}{\sigma^2_u}\!\!\frac{x^2+y^2}{k}<\gamma}\!\!\!\!\!\!\!\!\!\!\!\!\!\!\!\!\!\!\!f(x,y,k)\,dx\,dy\,dk,\label{outage2}
\end{align}
where $f(X, Y, K)$ is the PDF of a trivariate normal distribution with a mean vector of $N\mathbf{\mu_d}$ and a covariance matrix of $N\mathbf{\Sigma_d}$. In general, integrating a multivariate normal PDF over an arbitrary interval lacks a general analytical expression. Therefore, numerical methods such as ray tracing are necessary for its calculation \cite{Amethodtointegrate}. However, since the integral domain is quadratic, we simplify the quadratic form into a weighted sum of non-central chi-square RVs. Next, we follow an approach that helps us derive a closed-form expression for $P_u$. For simplicity, we assume $\sigma_e=0$.

At the user SINR defined in \ref{sinr2i}, we set $b_2=1$, while $b_1$ can take on any arbitrary complex number between $0$ and infinity. This allows the SINR to span a range similar to the scenario where both $b_1$ and $b_2$ were designed parameters. Subsequently, we express the SINR as a function of certain RVs, each independent of $|b_1|$, in contrast to $x_i$, $y_i$, $k_i$, $X$, $Y$, and $K$, which depended on both the phase and the amplitude of the designed parameters $b_1$ and $b_2$. By defining $r_i\triangleq \mathcal{R}(e^{jf_i}h_i)$, $t_i\triangleq \mathcal{I}(e^{jf_i}h_i)$, $\hat{k}_i\triangleq |h_i|^2$, $R\triangleq \sum_{i=1}^{N}r_i$, $T\triangleq \sum_{i=1}^{N}t_i$, $\hat{K}\triangleq \sum_{i=1}^{N}\hat{k}_i$ $\tilde{r}_i\triangleq \cos(f_i+\phi_1)m_i-\sin(f_i+\phi_1)n_i$, $\tilde{t}_i\triangleq \sin(f_i+\phi_1)m_i+\cos(f_i+\phi_1)n_i$, $\tilde{R}\triangleq \sum_{i=1}^{N}\tilde{r}_i$, $\tilde{T}\triangleq \sum_{i=1}^{N}\tilde{t}_i$, and after performing mathematical operations on (\ref{sinr2i}), the SINR of the user is: $\text{SINR}=\frac{p_t}{\sigma^2_u}\frac{\hat{K}^2|b_1|^2+R^2+T^2+2|b_1|\hat{K}\tilde{R}}{N+\hat{K}|b_1|^2+2|b_1|\tilde{R}}$. Using the CLT, the RVs \(\hat{K}\), \(R\), \(T\), and \(\tilde{R}\) are jointly Gaussian. By following the same approach as in Lemma \ref{lemma1i} (proof omitted due to space limitations), the mean and covariance matrix of these jointly Gaussian RVs are $\mathbf{\mu}_1=E\{[R, T, \hat{K}, \tilde{R}]^T\}=N[0, 0, 1, 0]^T$ and:
\begin{align}
\mathbf{\Sigma}_1&=\text{cov}\{[R, T, \hat{K}, \tilde{R}]^T\}=N\begin{bmatrix}
\frac{1}{2}& 0 & 0 & \frac{\kappa}{2}\\
0& \frac{1}{2} & 0 & \frac{-\zeta}{2}\\
0& 0 & 1 & 0\\
\frac{\kappa}{2}& \frac{-\zeta}{2} & 0 & \frac{1}{2}
\end{bmatrix}.\label{rtk}
\end{align}
Thus, \(P_u(\gamma)\) is obtained by integrating \(f(R, T, \hat{K}, \tilde{R})\) over the domain \(\text{SINR} < \gamma\). The closed form of this expression is derived in Lemma \ref{lemma6} (proof in Appendix \ref{lemma6p}).
\begin{lemma}\label{lemma6}
The OP of the user for SJB, \(P_u(\gamma)\) in (\ref{outage2}), when \(\sigma_e = 0\) is: $F_{w,k,\lambda,s,m}(0)$ where $m=-\tilde{D}_4(\frac{\tilde{a}_4}{2\tilde{D}_4})^2+|b_1|^2-N\frac{\gamma\sigma^2_u}{p_t}-\frac{\gamma\sigma^2_u}{p_t}|b_1|^2$, $s^2=\tilde{a}^2_1+\tilde{a}^2_3$, $w=[\tilde{D}_2 ,\tilde{D}_4]^T$, $k=[1,1]$, and $\lambda=(\frac{\tilde{a}_4}{2\tilde{D}_4})^2$. Here, $\tilde{D}_2$, $\tilde{D}_4$, $\tilde{a}_1$, $\tilde{a}_3$, and $\tilde{a}_4$ are $\frac{N}{2}$,$ \frac{N}{2}+N|b_1|^2$, $-\sqrt{N}|b_1|\frac{\gamma\sigma^2_u}{2p_t}$, $2|b_1|^3\sqrt{N}\frac{\gamma\sigma^2_u}{p_t}$, and $2|b_1|^3\sqrt{N}(2-\frac{\gamma\sigma^2_u}{p_t})+ 2|b_1|\sqrt{N}(1-\frac{\gamma\sigma^2_u}{p_t})$, respectively. We remark that $P_u(\gamma)$ is independent of $\phi_1$ (the phase of $b_1$)\footnote{One can calculate its parameters with the help of MATLAB toolbox provided by the authors of \cite{Amethodtointegrate}.}.
\end{lemma}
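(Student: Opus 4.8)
The plan is to read $P_u(\gamma)=P(\mathrm{SINR}<\gamma)$ as the probability that an indefinite quadratic form in a Gaussian vector is negative, and then to bring that form into the canonical generalized--chi-square shape $\sum_j w_j\,\chi^2_{k_j}(\lambda_j)+sZ+m$, whence $P_u(\gamma)=F_{w,k,\lambda,s,m}(0)$ is immediate. The first move is to clear the denominator of the SINR. Writing $c_0\triangleq\gamma\sigma_u^2/p_t$ and using the stated SINR in terms of $R,T,\hat K,\tilde R$, the denominator $N+|b_1|^2\hat K+2|b_1|\tilde R$ equals $\|b_1\mathbf{h}+\mathbf{a}\|^2\ge 0$ almost surely, so $\mathrm{SINR}<\gamma$ is equivalent to $Q<0$, where
\begin{equation}
Q=|b_1|^2\hat K^2+R^2+T^2+2|b_1|\hat K\tilde R-c_0\bigl(N+|b_1|^2\hat K+2|b_1|\tilde R\bigr).\nonumber
\end{equation}
By the CLT step already established, $\mathbf{x}\triangleq[R,T,\hat K,\tilde R]^T$ is (asymptotically) Gaussian with mean $\mathbf{\mu}_1=N[0,0,1,0]^T$ and covariance $\mathbf{\Sigma}_1$ in (\ref{rtk}), so $Q=\mathbf{x}^T Q_2\mathbf{x}+\mathbf{q}_1^T\mathbf{x}+q_0$ with $Q_2,\mathbf{q}_1,q_0$ read off directly.

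The structural observation that makes the computation tractable is that $\tilde R$ is not an independent coordinate: from the definitions $\tilde r_i=\mathcal{R}\!\left(e^{j\phi_1}e^{jf_i}h_i\right)=\kappa r_i-\zeta t_i$, so $\tilde R=\kappa R-\zeta T$ deterministically, which is precisely why $\mathbf{\Sigma}_1$ in (\ref{rtk}) is singular (with null vector $[\kappa,-\zeta,0,-1]^T$). Substituting this and rotating $(R,T)\mapsto(\tilde R,T')$ with $T'=\zeta R+\kappa T$ --- an orthogonal map leaving $\tilde R,T'$ i.i.d. $\mathcal{N}(0,N/2)$ and preserving $R^2+T^2=\tilde R^2+T'^2$ --- collapses $Q$ to
\begin{equation}
Q=\bigl(|b_1|\hat K+\tilde R\bigr)^2+T'^2-c_0|b_1|^2\hat K-2c_0|b_1|\tilde R-c_0 N,\nonumber
\end{equation}
which no longer depends on $\kappa,\zeta$; this simultaneously proves the claimed independence of $P_u(\gamma)$ from $\phi_1$.

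It then remains to whiten and diagonalize. Standardizing the independent triple $(\hat K,\tilde R,T')$ and forming $A=\mathbf{\Sigma}_1^{1/2}Q_2\mathbf{\Sigma}_1^{1/2}$, the matrix $A$ has exactly two nonzero eigenvalues, $\tilde D_2=N/2$ (carried by $T'^2$) and $\tilde D_4=N/2+N|b_1|^2$ (carried by the combination $|b_1|\hat K+\tilde R$). Completing the square along each nonzero-eigenvalue direction turns $T'^2$ into a central $\tilde D_2\,\chi^2_1$ term and the $(|b_1|\hat K+\tilde R)$ direction into a non-central $\tilde D_4\,\chi^2_1(\lambda)$ term with $\lambda=(\tilde a_4/2\tilde D_4)^2$ (the nonzero mean $|b_1|N$ supplying the non-centrality, the other term remaining central), while the residual linear part, which lives in the null eigenspace of $A$, aggregates into a single Gaussian term $sZ$ with $s^2=\tilde a_1^2+\tilde a_3^2$; the leftover constants, namely the completing-the-square term $-\tilde D_4(\tilde a_4/2\tilde D_4)^2$ together with $-c_0 N$ and the deterministic parts of the linear terms, give $m$. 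Reading off $(w,k,\lambda,s,m)$ yields $P_u(\gamma)=F_{w,k,\lambda,s,m}(0)$.

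I anticipate the main obstacle to be the bookkeeping in this last step: correctly splitting the linear vector $\mathbf{q}_1$ into the component along the completed-square direction (which fixes $\tilde a_4$, hence $\lambda$ and part of $m$) and the component in the null eigenspace (which fixes $\tilde a_1,\tilde a_3$, hence $s$), and carrying the singular covariance through $\mathbf{\Sigma}_1^{1/2}$ without error. Once $Q$ is in canonical form, invoking the generalized-chi-square CDF --- evaluable through the method of \cite{Amethodtointegrate} --- closes the proof.
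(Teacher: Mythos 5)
Your proposal is correct and follows the same overall pipeline as the paper's Appendix B: write the outage event $\{\mathrm{SINR}<\gamma\}$ as a Gaussian quadratic form being negative, bring that form to the canonical generalized--chi-square shape (two weighted $\chi^2_1$ terms, one central and one non-central, plus a Gaussian and a constant), and evaluate the CDF at $0$. Where you genuinely diverge is in how the singular covariance $\mathbf{\Sigma}_1$ is handled. The paper proceeds mechanically in four dimensions, following \cite{Amethodtointegrate}: it computes the symmetric square root $\mathbf{S}=\mathbf{\Sigma}_1^{1/2}$ by eigendecomposition, forms $\mathbf{\tilde{Q}}_2=\mathbf{S}\mathbf{Q}_2\mathbf{S}$, diagonalizes it as $\mathbf{V}\mathbf{\tilde{D}}\mathbf{V}^T$ (obtaining eigenvalues $0,\ \tfrac{N}{2},\ 0,\ \tfrac{N}{2}+N|b_1|^2$), and reads off $\tilde{a}_i$, $\tilde{D}_i$ from the transformed linear term. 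You instead first exploit the deterministic identity $\tilde{R}=\kappa R-\zeta T$ (the very source of the singularity of $\mathbf{\Sigma}_1$), rotate $(R,T)\mapsto(\tilde{R},T')$, and collapse the problem to three independent Gaussians, after which the two nonzero eigenvalues $\tfrac{N}{2}$ and $\tfrac{N}{2}+N|b_1|^2$ appear by inspection from completing the square. Your route buys two things the paper's does not: the independence of $P_u(\gamma)$ from $\phi_1$ becomes manifest (the collapsed form contains no $\kappa,\zeta$) rather than being an after-the-fact remark, and you avoid taking a matrix square root of a rank-deficient covariance. What the paper's route buys is that it produces exactly the labels appearing in the lemma statement ($\tilde{a}_1,\tilde{a}_3$ as two separate null-space coefficients with $s^2=\tilde{a}_1^2+\tilde{a}_3^2$, versus your single null-direction coefficient --- equivalent in distribution but parameterized differently). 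Two caveats on your sketch: the step ``forming $A=\mathbf{\Sigma}_1^{1/2}Q_2\mathbf{\Sigma}_1^{1/2}$'' mixes the 4-dimensional object into your 3-dimensional reduction and should be restated for the whitened triple $(\hat{K},\tilde{R},T')$; and since the lemma asserts specific closed-form values of $\tilde{a}_4$, $s$, $m$, a complete proof still requires the final bookkeeping you defer --- indeed, carrying it out carefully is worthwhile, as the paper's own appendix appears to drop factors of $N$ in the constant term $\mu_1^T\mathbf{Q}_2\mu_1+\mathbf{q}_1^T\mu_1+q_0$, so your independent computation is a useful check rather than a formality.
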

\subsection{OP of the target}\label{opoftarget}
We utilize the CRB, as derived in \cite[Appendix C]{TargetDetectionandLocalization}, for a given $\theta$, which is equal to (\ref{crb}) at the top of the next page, where $\mathbf{A}(\theta)= \mathbf{b}(\theta) \mathbf{a}^H(\theta)$; $ \mathbf{R}_x=\frac{1}{L}\mathbf {X}\mathbf{X}^H\approx p_t\mathbf{w}\mathbf{w}^H$ is the sample covariance matrix of $\mathbf{X}$. Inserting (\ref{x})- (\ref{wi}) into (\ref{crb}) and after some algebraic manipulation, the \text{CRB}($\theta$) is simplified as (\ref{cramer}), where $\mathbf{b'}$ denotes the derivative of $\mathbf{b}$ with respect to $\theta$;
\begin{align}
&\text{CRB}(\theta)=\frac{\sigma_r^2}{2 L p_t|\alpha|^2 || \mathbf{b'} ||^2 | \mathbf{a}^H \mathbf{w} | ^2}.\label{cramer}
\end{align}
\textcolor{blue}{By substituting (\ref{wi}) into (\ref{cramer}), we obtain (\ref{crameri}), where ($a$) follows from defining \(g(\theta) \triangleq \frac{6 \sigma_r^2}{L p_t|\alpha|^2 (M-1)(M)(M+1)\pi^2 \cos^2(\theta)}\), \(\tilde{X} \triangleq \sum_{i=1}^{N} \tilde{x}_i\), \(\tilde{Y} \triangleq \sum_{i=1}^{N} \tilde{y}_i\), and \(K \triangleq \sum_{i=1}^{N} k_i\), where $\hat{x}_i \triangleq \cos(f_i + \phi_1 + \tilde{\tilde{\phi_i}}) |b_1 \bar{h}_i|, \quad \hat{y}_i \triangleq \sin(f_i + \phi_1 + \tilde{\tilde{\phi_i}}) |b_1 \bar{h}_i|,$
and \(k_i\) is defined in subsection \ref{opofuser}. Here, \(\bar{h}_i \sim \mathcal{CN}(0, 1+\sigma^2_e)\) are the elements of \(\mathbf{\bar{h}}\), and \(\bar{h}_i = |\bar{h}_i|e^{j\tilde{\tilde{\phi_i}}}\), where \(|\bar{h}_i|\) follows a Rayleigh distribution with scale parameter \(\sqrt{1+\sigma^2_e}\), and \(\tilde{\tilde{\phi_i}}\) is uniformly distributed in \([- \pi, \pi)\).
\begin{figure*}[t]
\normalsize
\begin{align}
&\text{CRB}(\theta)=\frac{\sigma^2_R \text{Tr}(\mathbf{A}^H(\theta) \mathbf{A}(\theta) \mathbf{R}_x)}{2 \mid \alpha \mid ^2 L (\text{Tr}(\mathbf{A}^H(\theta) \mathbf{A}(\theta) \mathbf{R}_x) \text{Tr}(\mathbf{A}^{.H}(\theta) \mathbf{A}^.(\theta) \mathbf{R}_x)-\mid \text{Tr}(\mathbf{A}^{.H}(\theta) \mathbf{A}(\theta) \mathbf{R}_x)\mid^2)}. \label{crb}
\end{align}
\textcolor{blue}{\begin{align}
&\text{CRB}(\theta)\overset{(a)}{=}\frac{g(\theta)K}{N^2|b_2|^2+\tilde{X}^2+2N|b_2|\cos(\phi_2)\tilde{X}+\tilde{Y}^2+2N|b_2|\tilde{Y}\sin(\phi_2)},
\label{crameri}
\end{align}}
\textcolor{blue}{\begin{align}
&\mathbf{\tilde{\mu}}_d=[0 ,0 ,|b_2|^2+|b_1|^2(1+\sigma^2_e)]^T,\label{ssi}\\
&\mathbf{\tilde{\Sigma}}_d\!=\!\!\!\!\begin{bmatrix}
\frac{|b_1|^2(1+\sigma^2_e)}{2} & 0 & \!\!|b_1|^2(1+\sigma^2_e)|b_2|\cos(\phi_2)\\
0 & \frac{|b_1|^2(1+\sigma^2_e)}{2} & |b_1|^2(1+\sigma^2_e)|b_2|\sin(\phi_2)\\
|b_1|^2(1+\sigma^2_e)|b_2|\cos(\phi_2) & |b_1|^2(1+\sigma^2_e)|b_2|\sin(\phi_2)& \!\!|b_1|^4(1+\sigma^2_e)^2+2|b_1b_2|^2(1+\sigma^2_e)
\label{sigmadti}
\end{bmatrix}.
\end{align}}
\hrulefill
\end{figure*}
Therefore, to calculate $P_c(\epsilon)=P(\text{CRB}(\theta)>\epsilon)$, we need to derive the joint PDF of $\tilde{X}$, $\tilde{Y}$, and $K$. Following the same approach as subsection \ref{opofuser}, we define $N$ i.i.d. random vectors, $\mathbf{\tilde{d}}_i=[\tilde{x}_i, \tilde{y}_i, k_i]^T \in \mathbb{R}^{3 \times 1}$. When $N$ is large, we have:
$[\tilde{X},\tilde{Y}, K]^T\overset{(d)}{\rightarrow} \mathcal{N}_3(N\mathbf{\tilde{\mu}}_d,N\mathbf{\tilde{\Sigma}}_d)$,
where $\mathbf{\tilde{\mu}}_d$ and $\mathbf{\tilde{\Sigma}}_d$ are mean vector and covariance matrix of $\tilde{\mathbf{d}}_i$ (the same for $i=1,...,N$), respectively. Thus, by finding $\mathbf{\tilde{\mu}_d}$ and $\mathbf{\tilde{\Sigma}_d}$ with the aid of Lemma \ref{lemma2i} (the proof follows the same approach as used in the proof of Lemma \ref{lemma1i}), the joint PDF of $\tilde{X}$, $\tilde{Y}$, and $K$ is derived.
\begin{lemma}\label{lemma2i}
$\mathbf{\tilde{\mu}}_d$ and $\mathbf{\tilde{\Sigma}}_d$ are derived as (\ref{ssi}) and (\ref{sigmadti}), respectively.
\end{lemma}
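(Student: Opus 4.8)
The goal of Lemma \ref{lemma2i} is to compute the mean vector $\mathbf{\tilde{\mu}}_d$ and covariance matrix $\mathbf{\tilde{\Sigma}}_d$ of the random vector $\mathbf{\tilde{d}}_i=[\tilde{x}_i,\tilde{y}_i,k_i]^T$, whose entries were defined just above via $\hat{x}_i \triangleq \cos(f_i+\phi_1+\tilde{\tilde{\phi_i}})|b_1\bar{h}_i|$, $\hat{y}_i \triangleq \sin(f_i+\phi_1+\tilde{\tilde{\phi_i}})|b_1\bar{h}_i|$, and $k_i=|b_1\bar{h}_i+b_2 e^{-jf_i}|^2$. Since the proof is stated to follow the same approach as Lemma \ref{lemma1i}, the plan is to reduce every entry of $\mathbf{\tilde{\mu}}_d$ and $\mathbf{\tilde{\Sigma}}_d$ to elementary moments of the underlying Gaussian/Rayleigh-phase variables and evaluate them directly. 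The key structural observation I would exploit is that, conditioned on $\theta$ (hence on $f_i$), the phase $\tilde{\tilde{\phi_i}}$ of $\bar{h}_i$ is uniform on $[-\pi,\pi)$ and independent of the magnitude $|\bar{h}_i|$, where $|\bar{h}_i|$ is Rayleigh with scale $\sqrt{1+\sigma_e^2}$ so that $E[|\bar{h}_i|^2]=1+\sigma_e^2$ and $E[|\bar{h}_i|^4]=2(1+\sigma_e^2)^2$.

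First I would compute the mean vector. For $\tilde{x}_i$ and $\tilde{y}_i$, write the cosine/sine of $(f_i+\phi_1+\tilde{\tilde{\phi_i}})$ and take expectation over the uniform phase $\tilde{\tilde{\phi_i}}$; because $E[\cos(\,\cdot\,+\tilde{\tilde{\phi_i}})]=E[\sin(\,\cdot\,+\tilde{\tilde{\phi_i}})]=0$ for a full-period uniform phase, both means vanish, giving the first two entries $0$ in \eqref{ssi}. For $k_i$, expand $|b_1\bar{h}_i+b_2 e^{-jf_i}|^2 = |b_1|^2|\bar{h}_i|^2 + |b_2|^2 + 2\mathcal{R}(b_1\bar{h}_i b_2^* e^{jf_i})$; the cross term averages to zero over the uniform phase of $\bar{h}_i$, leaving $E[k_i]=|b_1|^2(1+\sigma_e^2)+|b_2|^2$, matching the third entry of \eqref{ssi}.

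Next I would assemble $\mathbf{\tilde{\Sigma}}_d$ entry by entry via $\mathrm{cov}(u,v)=E[uv]-E[u]E[v]$. The diagonal entries for $\tilde{x}_i,\tilde{y}_i$ follow from $E[\cos^2(\cdot+\tilde{\tilde{\phi_i}})]=E[\sin^2(\cdot+\tilde{\tilde{\phi_i}})]=\tfrac12$ together with $E[|b_1\bar{h}_i|^2]=|b_1|^2(1+\sigma_e^2)$, yielding $\tfrac{|b_1|^2(1+\sigma_e^2)}{2}$. The $\tilde{x}_i$--$\tilde{y}_i$ covariance vanishes since $E[\cos(\cdot+\tilde{\tilde{\phi_i}})\sin(\cdot+\tilde{\tilde{\phi_i}})]=\tfrac12 E[\sin(2(\cdot+\tilde{\tilde{\phi_i}}))]=0$. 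For the $\tilde{x}_i$--$k_i$ and $\tilde{y}_i$--$k_i$ entries, the only surviving contribution comes from pairing the magnitude-cubed factor of the cross term in $k_i$ against $\tilde{x}_i,\tilde{y}_i$; here I would use $E[|\bar{h}_i|^3\cos^2(\cdot)]$-type integrals, and the $\cos(\phi_2)$, $\sin(\phi_2)$ dependence appears through the $b_2^* e^{jf_i}$ phase, producing $|b_1|^2(1+\sigma_e^2)|b_2|\cos(\phi_2)$ and $|b_1|^2(1+\sigma_e^2)|b_2|\sin(\phi_2)$. The variance of $k_i$ requires $E[k_i^2]$, which brings in $E[|\bar{h}_i|^4]=2(1+\sigma_e^2)^2$ and the squared cross term; collecting these reproduces $|b_1|^4(1+\sigma_e^2)^2+2|b_1 b_2|^2(1+\sigma_e^2)$.

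The main obstacle I expect is the bookkeeping in the off-diagonal and $k_i$-variance entries, where several trigonometric moments of the uniform phase $\tilde{\tilde{\phi_i}}$ must be combined with the correct odd/even powers of the Rayleigh magnitude, and care is needed to verify that the surviving terms carry exactly the claimed $\cos(\phi_2)/\sin(\phi_2)$ phases while the spurious cross terms cancel. Everything else is a routine application of the conditioning argument and the multidimensional CLT already invoked in subsection \ref{opofuser}, so once these moment integrals are tabulated the entries of \eqref{ssi} and \eqref{sigmadti} follow immediately.
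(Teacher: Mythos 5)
Your overall plan is exactly the paper's approach: the paper gives no separate proof of Lemma~\ref{lemma2i}, stating only that it follows the proof of Lemma~\ref{lemma1i} (Appendix~\ref{lemma1pi}), i.e.\ magnitude--phase decomposition, killing odd trigonometric moments via the uniform phase $\tilde{\tilde{\phi}}_i$, product-to-sum identities for the cross terms, and Rayleigh/exponential moments. Your mean vector, the diagonal entries, the vanishing $\tilde{x}_i$--$\tilde{y}_i$ covariance, and the variance of $k_i$ (via $E[|\bar{h}_i|^4]=2(1+\sigma_e^2)^2$) are all computed correctly and reproduce (\ref{ssi}) and (\ref{sigmadti}).

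However, your description of the $\tilde{x}_i$--$k_i$ and $\tilde{y}_i$--$k_i$ entries misidentifies which term survives, and the step as you state it would fail. Write $k_i = |b_1|^2|\bar{h}_i|^2 + |b_2|^2 + 2|b_1||b_2||\bar{h}_i|\cos(\alpha_i-\phi_2)$ with $\alpha_i \triangleq f_i+\phi_1+\tilde{\tilde{\phi}}_i$, and $\tilde{x}_i = |b_1||\bar{h}_i|\cos\alpha_i$. The ``magnitude-cubed'' pairing you invoke --- $\tilde{x}_i$ against the quadratic term $|b_1|^2|\bar{h}_i|^2$ --- is precisely the one that \emph{vanishes}, since $E[|\bar{h}_i|^3\cos\alpha_i]=E[|\bar{h}_i|^3]\,E[\cos\alpha_i]=0$ by independence of magnitude and uniform phase. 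The surviving contribution is the pairing with the \emph{linear-in-magnitude} cross term, a second-moment computation: $2|b_1|^2|b_2|\,E[|\bar{h}_i|^2]\,E[\cos\alpha_i\cos(\alpha_i-\phi_2)] = 2|b_1|^2|b_2|(1+\sigma_e^2)\cdot\tfrac{1}{2}\cos\phi_2$, which gives the stated entry. Had the entry genuinely come from $E[|\bar{h}_i|^3\cos^2(\cdot)]$-type integrals, it would carry the Rayleigh third moment (a $\sqrt{\pi}$ factor), which is nowhere in (\ref{sigmadti}). This distinction is not pedantic: it is exactly where Lemma~\ref{lemma2i} differs structurally from Lemma~\ref{lemma1i}. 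In Lemma~\ref{lemma1i}, $x_i$ contains $|b_1||h_i|^2\cos\phi_1$ with a \emph{deterministic} phase, so fourth moments $E[|h_i|^4]=2$ survive and produce the $|b_1|^3$ terms in (\ref{sigmadi}); in Lemma~\ref{lemma2i} the entire phase of $\tilde{x}_i$ is uniform, so all odd-in-phase terms die and only the $|b_1|^2|b_2|$ terms remain. Transplanting Lemma~\ref{lemma1i}'s bookkeeping verbatim, as your sketch suggests, is the one place the argument needs modification rather than repetition.
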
}
We note that as derived in Lemma \ref{lemma2i}, the joint conditional PDF of $\tilde{X}$, $\tilde{Y}$, and $K$ is independent of $\theta$ and $\phi_1$.

Assuming a uniform distribution for $\theta$ and defining the domain: $\mathcal{D}(\theta, \tilde{X},\tilde{Y}, K)= \frac{K}{\tilde{X}^2+\tilde{Y}^2+2N\mathcal{R}(b_2)\tilde{X}+2N\mathcal{I}(b_2)\tilde{Y}+N^2|b_2|^2}>\frac{\epsilon}{g(\theta)}$,
we have:
\begin{align}
P_{c}(\epsilon)&=\frac{1}{\pi}\!\!\int_{0}^{\pi}\!\!\iiint_{\mathcal{D}(\theta, \tilde{X},\tilde{Y}, K)}\!\!\!\!\!\!f(\tilde{X},\tilde{Y},K)\,d\tilde{X}\,d\tilde{Y}\,dK,d\theta,\label{outagesjb}
\end{align}
Here, \( f(\tilde{X}, \tilde{Y}, K) \) represents the PDF of a trivariate normal distribution with a mean vector of \( N\mathbf{\tilde{\mu}_d} \) and a covariance matrix of \( N\mathbf{\tilde{\Sigma}_d} \). Following the same approach as in Section \ref{opofuser}, we derive a closed-form expression for \( P_c \), as stated in Lemma \ref{lemma7}. Due to space limitations, the detailed derivation is omitted.
\begin{lemma}\label{lemma7}
OP of the target for SJB, $P_c(\epsilon)$ in (\ref{outagesjb}) when $\sigma_e=0$, is $\int_{0}^{\pi}F_{w(\theta),k,\lambda(\theta),s,m(\theta)}(0)d\theta$ where $w=|b_1|^2\frac{\epsilon}{2g(\theta)}$, $k=2$, $\lambda=2(\frac{N\frac{\epsilon}{g(\theta)}-1}{|b_1|\frac{\epsilon}{g(\theta)}}))^2$, $s=|b_1|^4$, $m=N-\frac{g(\theta)}{\epsilon}-|b_1|^2$.  We remark that $P_c(\epsilon)$ is independent of $\phi_1$ (the phase of $b_1$).
\end{lemma}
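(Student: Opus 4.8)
The plan is to turn the CRB-outage event into the positivity of a single quadratic form in a jointly Gaussian vector and then read off the generalized chi-square parameters, following the same route as the proof of Lemma \ref{lemma6} in Appendix \ref{lemma6p}. First I would set $\sigma_e=0$ and condition on $\theta$, so that $f_i$ and $g(\theta)$ are constants and, by Lemma \ref{lemma2i}, the vector $\mathbf{v}\triangleq[\tilde{X},\tilde{Y},K]^T$ is (via the CLT) trivariate normal with mean \eqref{ssi} and covariance \eqref{sigmadti} at $\sigma_e=0$. Completing the square in the denominator of \eqref{crameri} shows that it equals $(\tilde{X}+N|b_2|\cos\phi_2)^2+(\tilde{Y}+N|b_2|\sin\phi_2)^2$, so the event $\{\mathrm{CRB}(\theta)>\epsilon\}$ defining the domain $\mathcal{D}$ in \eqref{outagesjb} becomes $\epsilon\big[(\tilde{X}+N|b_2|\cos\phi_2)^2+(\tilde{Y}+N|b_2|\sin\phi_2)^2\big]-g(\theta)K<0$, i.e. $\xi\triangleq\mathbf{v}^T\mathbf{M}\mathbf{v}+\mathbf{p}^T\mathbf{v}+q<0$ with $\mathbf{M}=\mathrm{diag}(\epsilon,\epsilon,0)$. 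Hence the conditional outage is $P(\xi<0)$, a quadratic form in Gaussians.

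Second, I would whiten: writing $\mathbf{v}=N\mathbf{\tilde{\mu}_d}+(N\mathbf{\tilde{\Sigma}_d})^{1/2}\mathbf{z}$ with $\mathbf{z}\sim\mathcal{N}_3(\mathbf{0},\mathbf{I})$ turns $\xi$ into $\mathbf{z}^T\mathbf{\tilde{M}}\mathbf{z}+\mathbf{\tilde{p}}^T\mathbf{z}+\tilde{q}$, whose quadratic-part eigenvalues coincide with those of $\mathbf{M}\,N\mathbf{\tilde{\Sigma}_d}$. Because $\mathbf{M}$ kills the $K$-row, this product is block-triangular with characteristic polynomial $\mu\big(\mu-\tfrac{\epsilon N|b_1|^2}{2}\big)^2$: a zero eigenvalue plus a \emph{double} nonzero eigenvalue. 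After dividing through by the positive constant $N g(\theta)$ (which preserves the sign event), the double eigenvalue collapses into the single weight $w=|b_1|^2\tfrac{\epsilon}{2g(\theta)}$ with $k=2$ degrees of freedom; projecting the shifted mean onto the two eigendirections produces the noncentrality $\lambda$; the surviving linear term along the null $K$-direction (carrying the $g(\theta)K$ contribution, whose variance is fixed by the $(3,3)$ entry of \eqref{sigmadti}) becomes the Gaussian summand with coefficient $s=|b_1|^4$; and all deterministic leftovers, including the $q$ term and the completed-square constants, collect into $m$. The conditional outage then equals $F_{w,k,\lambda,s,m}(0)$, evaluated through the generalized chi-square framework of \cite{Amethodtointegrate}.

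Third, I would undo the conditioning by averaging over $\theta\sim\mathrm{Unif}[0,\pi]$, which turns the per-$\theta$ probability into $\int_0^\pi F_{w(\theta),k,\lambda(\theta),s,m(\theta)}(0)\,d\theta$; only $w$, $\lambda$, and $m$ carry $\theta$-dependence, through $g(\theta)$, exactly as stated. Independence from $\phi_1$ is then immediate: by Lemma \ref{lemma2i} neither \eqref{ssi} nor \eqref{sigmadti} contains $\phi_1$, and $\mathcal{D}$ is $\phi_1$-free, so every spectral quantity above---and hence $w,\lambda,s,m$---is free of $\phi_1$.

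The main obstacle I anticipate is the constant bookkeeping in the whitening step: verifying that the cross-covariances between $K$ and $(\tilde{X},\tilde{Y})$ in \eqref{sigmadti}, once the mean is projected onto the eigenbasis and the squares are completed, collapse to exactly the compact forms for $\lambda$ and $m$, and confirming that the normalization (fixing $|b_2|$ as in Lemma \ref{lemma6}) eliminates $|b_2|$ and $\phi_2$ from the final parameters. This is the same algebraic core as in Appendix \ref{lemma6p}, but here the rotational symmetry of the denominator quadratic eases the eigenanalysis by giving a repeated eigenvalue rather than two distinct ones.
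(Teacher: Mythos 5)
Your proposal is correct in substance and reaches the same object as the paper --- a Gaussian quadratic form whose CDF at zero, averaged over $\theta$, gives $P_c(\epsilon)$ --- but the execution is genuinely different. The paper's own proof is a one-line shortcut: with $\sigma_e=0$ and $|b_2|=1$ it re-expresses the variables of (\ref{crameri}) through the Section \ref{opofuser} quantities as $\tilde{X}=|b_1|\tilde{R}$, $\tilde{Y}=|b_1|\tilde{T}$ and $K=|b_1|^2\hat{K}+N+2|b_1|\tilde{R}$ (the proof's ``$K$'' in the term $-(K-1)|b_1|^2$ is really $\hat{K}=\sum_i|h_i|^2$), where $\tilde{R},\tilde{T},\hat{K}$ are asymptotically \emph{independent} Gaussians; after completing the square, the event splits by inspection into a weighted noncentral $\chi^2$ with $k=2$ (from $\tilde{R},\tilde{T}$), an independent Gaussian term coming from $\hat{K}$ alone (which is exactly why $s=|b_1|^4$), and a constant --- no whitening or eigenanalysis is needed, and the $\phi_2$-dependence disappears by rotational invariance of $(\tilde{R},\tilde{T})$. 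You instead stay in the correlated coordinates $[\tilde{X},\tilde{Y},K]$ of Lemma \ref{lemma2i} and run the full Appendix \ref{lemma6p} machinery (whitening, then the spectrum of $\mathbf{M}\,N\mathbf{\tilde{\Sigma}}_d$). That route is also valid: your eigenvalue computation $\mu\bigl(\mu-\epsilon N|b_1|^2/2\bigr)^2$ is right, correctly yields $k=2$ with a repeated weight, and is more mechanical --- you never need to spot the change of variables or argue $\phi_2$ away separately. What the paper's route buys is that $\lambda$, $s$, $m$ are visible immediately, because the independence of $\hat{K}$ from $(\tilde{R},\tilde{T})$ makes the decomposition literal rather than spectral.

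Two bookkeeping points in your write-up need repair. First, your parenthetical claim that the Gaussian summand's variance ``is fixed by the $(3,3)$ entry of (\ref{sigmadti})'' is wrong as stated: at $\sigma_e=0$, $|b_2|=1$ that entry is $|b_1|^4+2|b_1|^2$. The correct coefficient is the \emph{conditional} (Schur-complement) variance of $K$ given $(\tilde{X},\tilde{Y})$, namely $|b_1|^4+2|b_1|^2-2|b_1|^2=|b_1|^4$; your null-space projection in the whitened coordinates does produce this automatically, but the raw $(3,3)$ entry does not, so the justification must be restated even though the value you quote is right. Second, the normalization: dividing the quadratic form by $Ng(\theta)$ rescales \emph{all} parameters, so it cannot simultaneously deliver the stated $w=|b_1|^2\epsilon/(2g(\theta))$ and the stated $m=N-g(\theta)/\epsilon-|b_1|^2$ (the latter corresponds to dividing by $g(\theta)$ only). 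This tension is inherited from the lemma itself, whose parameters are not mutually consistent in their $N$-scaling, but a complete proof should fix one normalization and carry it through $w$, $\lambda$, $s$, $m$ uniformly rather than matching each parameter with a different scaling.
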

\begin{proof}
We write the domain of the integral in (\ref{outagesjb}) in a quadratic form as $|b_1|^2\frac{\epsilon}{2g(\theta)}((\sqrt{2}\tilde{R}+\sqrt{2}\frac{N\frac{\epsilon}{g(\theta)}-1}{|b_1|\frac{\epsilon}{g(\theta)}})^2+(\sqrt{2}\tilde{T})^2)-(K-1)|b_1|^2+N-\frac{g(\theta)}{\epsilon}-|b_1|^2<0$. The right-hand side of the inequality is in the form of a generalized chi-square RV with the mentioned parameters\footnote{Given that $|b_1|$ appears in the parameters of the generalized chi-square RVs in Lemma \ref{lemma6} and \ref{lemma7}, it is possible to determine $|b_1|$ at which $P_u=P_c$ through the utilization of numerical methods.}.
\end{proof}
\subsection{User Ergodic Rate and Target Ergodic CRB}
The ergodic rate of the user is:
\begin{align}
R=E[\log(1+\text{SINR})]\!\!&\overset{(a)}=\!\!\int_{0}^{\infty}\!\!\!\!P(\frac{p_t}{\sigma_t}\frac{X^2+Y^2}{K}>((2^t-1)))dt\nonumber\\
&\overset{(b)}=\int_{0}^{\infty} (1-P_u((2^t-1)))dt.\label{ergodicsjb}
\end{align}
where (a) and (b) are due to statistical properties and (\ref{outage2}), respectively. \textcolor{blue}{ The expectation is with respect to the joint distribution of \(X\), \(Y\), and \(K\)}. Therefore, $R$ can be derived from Lemma \ref{lemma6} by replacing $\gamma$ with $(2^t-1)$.

\textcolor{blue}{Next, we show that our proposed metric, which we refer to as the ergodic CRB, provides a tighter bound compared to Bayesian or deterministic CRBs.
\\
The minimum square error (MSE) in estimating \(\theta\) is expressed as follows:
\begin{align}
&E\{(\theta - \hat{\theta})^2\} = E_{\mathbf{h}}\{E_{\mathbf{Y}, \theta | \mathbf{h}} \{ (\theta - \hat{\theta})^2 \} \} \nonumber \\
&\overset{(a)}{\geq} E_{\mathbf{h}}\left\{ \frac{1}{-E_{\mathbf{Y}, \theta | \mathbf{h}} \left\{ \frac{\partial^2 \ln P(\mathbf{Y}, \theta | \mathbf{h})}{\partial \theta^2} \right\} } \right\}\nonumber\\
&\overset{(b)}{=} 
E_{\mathbf{h}}\left\{ \frac{1}{E_{\theta} \left\{ E_{\mathbf{Y} | \mathbf{h}, \theta} \left\{ - \frac{\partial^2 \ln P(\mathbf{Y}, \theta | \mathbf{h})}{\partial \theta^2} \right\} \right\}} \right\} \nonumber \\
&\overset{(c)}{=} E_{\mathbf{h}} \left\{ \frac{1}{E_{\theta} \{ J \} + E_{\theta} \left\{ - \frac{\partial^2 \ln P(\theta)}{\partial \theta^2} \right\} } \right\} \triangleq \text{BMSE} \label{bayesian}
\end{align}
Here, (a) follows from the BCRB \cite{detectionestimationandmodulation}, which offers a lower bound on the MSE for weakly unbiased estimators. (b) arises from the tower property of conditional expectations, and (c) stems from the independence of \(\theta\) and \(\mathbf{h}\). Additionally, \(J \triangleq E_{\mathbf{Y} | \mathbf{h}, \theta} \left\{ - \frac{\partial^2 \ln P(\mathbf{Y} | \theta, \mathbf{h})}{\partial \theta^2} \right\}\), as derived in \cite{TargetDetectionandLocalization}, represents the Fisher information matrix (FIM) when all parameters are fixed, i.e., \(E_{\mathbf{Y} | \mathbf{h}, \theta} \{ (\theta - \hat{\theta})^2 \} \geq \frac{1}{J}\).
\\
Moreover, ergodic CRB, is defined as \(\text{ECRB} = E_{\theta} \{ E_{\mathbf{h}} \{ \frac{1}{J} \} \}\). We first compute the inverse of the FIM, assuming all parameters are fixed (except for the noise in \(\mathbf{Y}\)). In the second step, we apply the expectation operators to obtain ECRB.
\\
Given the independence of \(\mathbf{h}\) and \(\theta\), we can express \(E_{\mathbf{h}, \theta} = E_{\theta} E_{\mathbf{h}} = E_{\mathbf{h}} E_{\theta}\), which leads to:
\begin{align}
&E\{(\theta - \hat{\theta})^2\} = E_{\theta} \{ E_{\mathbf{h}} \{ E_{\mathbf{Y} | \mathbf{h}, \theta} \{ (\theta - \hat{\theta})^2 \} \} \}\geq E_{\theta} \{ E_{\mathbf{h}} \{ \frac{1}{J}\} \}\nonumber\\
& = \text{ECRB} = E_{\mathbf{h}} \{ E_{\theta} \left\{ \frac{1}{J} \right\} \} \overset{a}{\geq} E_{\mathbf{h}} \left\{ \frac{1}{E_{\theta} \{ J \}} \right\}\nonumber \\
& \geq E_{\mathbf{h}} \left\{ \frac{1}{E_{\theta} \{ J \} + E_{\theta} \left\{ - \frac{\partial^2 \ln P(\theta)}{\partial \theta^2} \right\}} \right\} = \text{BMSE}
\end{align}
\\
Here, (a) holds because for any almost surely positive definite matrix \(B\), we have \(E\left( \frac{1}{B} \right) \geq \frac{1}{E\{B\}}\) [Lemma 3, \cite{someclassesofglobalcramerraobounds}]. Consequently, we demonstrate that the ECRB still serves as a lower bound for the estimation error and provides a tighter bound compared to Bayesian or deterministic CRBs.
\\
In summary, the CRB is expressed as a function of the random variables \(\mathbf{h}\) and \(\theta\). When the distributions of these variables are known, as in our case, we can evaluate \(E(\mathbf{CRB})\) and \(P(\mathbf{CRB} < \epsilon)\) to analyze the behavior of the CRB. This approach of treating the CRB as a random variable is utilized in \cite{cramerraoboundonaerospaceandelectronicsystems, RethinkingthePerformanceofISACSystem, aStatisticalCharacterizationofLocalizationPerformance, recentinsightsinthebayesiananddeterministic}.} Thus, ECRB is:
\begin{align}
\!\!\!\!\text{ECRB}=E[\text{CRB}]&\overset{(a)}=\int_{0}^{\infty}\!\!\!\!P(\text{CRB}>t)dt\overset{(b)}=\int_{0}^{\infty} \!\!\!\!P_c(t)dt,\label{ergodicsjbcrb}
\end{align}
where (a) and (b) are due to statistical properties and $P_c(\epsilon)=P(\text{CRB}(\theta)>\epsilon)$. \textcolor{blue}{The expectation is with respect to the joint distribution of \(\tilde{X}, \tilde{Y}\), and \(K\).} Therefore, $\text{ECRB}$ can be derived from Lemma \ref{lemma7} by replacing $\epsilon$ with $t$.
\section{LB Performance Analysis}\label{linearbeamforing}
In this section, we derive S \& C performance metrics in LB, as defined in Section \ref{systemmodel}.
\subsection{OP of the user}
\textcolor{blue}{In this subsection we derive the OP of the user without and with DPC at Lemma \ref{lemmanew2} and \ref{lemma3}, respectively. Due to the received signal given by (\ref{yunew}), the SINR of the user is:
\begin{align}
&\text{SINR}\!\!= \frac{|c_1|^2|\frac{\mathbf{h}^H \mathbf{\bar{h}} }{\parallel \mathbf{\bar{h}} \parallel}|^2}{\sigma_u^2+\frac{|c_2|^2}{||\mathbf{a}||^2}|\mathbf{a}^H\mathbf{h}|^2}\overset{((a))}{=}\frac{|c_1|^2(P^2+\tilde{Q}^2)}{(\sigma_u^2+\frac{|c_2|^2}{N}(R^2+T^2))U}\label{SINRnotdpc},
\end{align}}
\textcolor{blue}{(a) is due to $P\triangleq \sum_{i=1}^{N}p_i$,$\tilde{Q}\triangleq \sum_{i=1}^{N}\tilde{q}_i$,$U\triangleq \sum_{i=1}^{N}u_i$ where $r_i$, $t_i$, $R$ and $T$ are defined at subsection \ref{opofuser}, ${p}_i\triangleq \mathcal{R}(|h_i|^2+h^*_ie_i)$, $\tilde{q}_i\triangleq \mathcal{I}(|h_i|^2+h^*_ie_i)$, $u_i\triangleq |h_i+e_i|^2$. Therefore, the OP of the user is:
\begin{align}
&P_u(\gamma)=\int_{0}^{\pi}P(\frac{|c_1|^2(P^2+\tilde{Q}^2)}{(\sigma_u^2+\frac{|c_2|^2}{N}(R^2+T^2))U}<\gamma)|\theta)f_{\theta}(\theta)d\theta.
\label{outagenotdpc}
\end{align}
To calculate the inner probability, we need to derive the joint PDF of $R$, $T$, $P$, $\tilde{Q}$, $U$. Following the same approach as Section \ref{opofuser}, we have:
$[R, T, P, \tilde{Q}, U]^T\overset{(d)}{\rightarrow} \mathcal{N}_5(N\mathbf{\hat{{\mu}}}_d,N\mathbf{{\hat{\Sigma}}}_d)$, where $\mathbf{\hat{{\mu}}_d}$ and $\mathbf{\hat{{\Sigma}}_d}$ are derived at Lemma \ref{lemmanew} (the proof follows the same approach as used in the proof of Lemma \ref{lemma1i}).
\begin{lemma}\label{lemmanew}
 $\mathbf{\hat{{\mu}}}_d$ and $\mathbf{\hat{{\Sigma}}}_d$ are derived as the following.
\begin{align}
&\mathbf{{\hat{\mu}}}_d=[0 ,0 ,1,0,1+\sigma^2_e]^T,\\
&\mathbf{{\hat{\Sigma}}}_d\!=\!\!\!\!\begin{bmatrix}
\frac{1}{2} & 0 & 0  & 0 & 0\\
0 & \frac{1}{2} & 0 & 0 & 0\\
0 & 0& 1+\frac{\sigma^2_e}{2}&0 &1+\sigma^2_e\\
0 & 0 & 0  & \frac{\sigma^2_e}{2} & 0\\
0& 0 & 1+\sigma^2_e  & 0 & (1+\sigma_e^2)^2
\end{bmatrix}.
\end{align}
\end{lemma}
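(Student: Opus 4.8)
The plan is to prove Lemma \ref{lemmanew} as a direct application of the multidimensional CLT, following verbatim the structure used in the proof of Lemma \ref{lemma1i}. Conditioned on $\theta$ (so that each $f_i$ is a fixed constant), the vectors $\mathbf{\hat{d}}_i = [r_i, t_i, p_i, \tilde{q}_i, u_i]^T$ are i.i.d.\ across $i$, since they are built solely from the i.i.d.\ pairs $(h_i, e_i)$. Hence the CLT yields $\sum_{i=1}^{N}\mathbf{\hat{d}}_i \overset{d}{\to} \mathcal{N}_5(N\mathbf{\hat{\mu}}_d, N\mathbf{\hat{\Sigma}}_d)$ with $\mathbf{\hat{\mu}}_d = E[\mathbf{\hat{d}}_i]$ and $\mathbf{\hat{\Sigma}}_d = \mathrm{cov}(\mathbf{\hat{d}}_i)$, so the entire task collapses to computing the first two moments of a single term.

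To carry this out I would write $h_i = m_i + jn_i$ with $m_i, n_i \sim \mathcal{N}(0,\tfrac12)$ and $e_i = g_i + jl_i$ with $g_i, l_i \sim \mathcal{N}(0,\tfrac{\sigma_e^2}{2})$, all four real Gaussians mutually independent. Expanding the definitions gives $r_i = m_i\cos f_i - n_i\sin f_i$, $t_i = m_i\sin f_i + n_i\cos f_i$, $p_i = m_i^2 + n_i^2 + m_i g_i + n_i l_i$, $\tilde{q}_i = m_i l_i - n_i g_i$, and $u_i = (m_i+g_i)^2 + (n_i+l_i)^2$. The mean vector is then immediate from $E[m_i^2]=E[n_i^2]=\tfrac12$ and $E[g_i^2]=E[l_i^2]=\tfrac{\sigma_e^2}{2}$, yielding $\mathbf{\hat{\mu}}_d = [0,0,1,0,1+\sigma_e^2]^T$.

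For the covariance matrix I would evaluate each entry using second- and fourth-order Gaussian moments (in particular $E[m_i^4]=3(\tfrac12)^2$), repeatedly invoking independence and the vanishing of odd moments. The diagonal entries follow from facts such as $\mathrm{Var}(m_i^2)=\tfrac12$. The key structural observation is that every covariance between a linear quantity ($r_i, t_i$) and a quadratic quantity ($p_i, \tilde{q}_i, u_i$) vanishes, because each such integrand is odd in $(m_i, n_i)$; moreover all the $\theta$-dependence entering through $f_i$ cancels via $\cos^2 f_i + \sin^2 f_i = 1$, leaving the stated $\theta$-independent $\mathbf{\hat{\Sigma}}_d$ whose only off-diagonal coupling sits in the $(p,u)$ block.

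The main obstacle is essentially careful bookkeeping rather than any conceptual difficulty: the surviving fourth-moment terms in $\mathrm{Var}(p_i)$, $\mathrm{Var}(u_i)$, and $\mathrm{Cov}(p_i,u_i)$ must be tracked with care, since the squared-magnitude contributions in $p_i$ and $u_i$ are correlated through the shared $m_i, n_i$. Correctly isolating which of the many product terms have nonzero expectation — and confirming that the $(p,u)$ coupling equals $1+\sigma_e^2$ while $\mathrm{Cov}(\tilde{q}_i,u_i)$ and $\mathrm{Cov}(p_i,\tilde{q}_i)$ vanish — is where an arithmetic slip is most likely to occur.
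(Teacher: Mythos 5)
Your proposal is correct, and its skeleton --- condition on $\theta$, invoke the multidimensional CLT for the vectors $[r_i,t_i,p_i,\tilde q_i,u_i]^T$, and reduce the whole lemma to the first two moments of a single summand --- is exactly the paper's, which proves Lemma~\ref{lemmanew} by pointing back to the proof of Lemma~\ref{lemma1i}. Where you genuinely diverge is in how those moments are computed: the paper's Lemma~\ref{lemma1i} machinery works in polar form, writing $h_i=|h_i|e^{j\phi_i}$ and $e_i=|e_i|e^{j\tilde\phi_i}$ with Rayleigh magnitudes and uniform phases, and kills cross terms via trigonometric product identities and exponential-distribution moments; you work with the Cartesian components $m_i,n_i,g_i,l_i$ and use Gaussian moment facts plus parity. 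For this particular lemma your route is arguably cleaner, since all five quantities are polynomials in four independent real Gaussians, so odd-moment arguments dispose of the off-diagonal zeros wholesale, and the $\theta$-dependence visibly cancels through $\cos^2 f_i+\sin^2 f_i=1$. I checked your entries --- $\mathrm{Var}(p_i)=1+\sigma_e^2/2$, $\mathrm{Var}(\tilde q_i)=\sigma_e^2/2$, $\mathrm{Var}(u_i)=(1+\sigma_e^2)^2$, $\mathrm{Cov}(p_i,u_i)=(2+2\sigma_e^2)-1\cdot(1+\sigma_e^2)=1+\sigma_e^2$, all remaining off-diagonal entries zero --- and they agree with the statement. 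Two small repairs are worth making. First, your blanket justification that linear-times-quadratic covariances vanish because the integrands are ``odd in $(m_i,n_i)$'' is not literally correct: $E[m_i^2 g_i]$, which appears in $\mathrm{Cov}(r_i,p_i)$, is even in $(m_i,n_i)$ and vanishes by oddness in $g_i$. The clean argument is the global sign flip $(h_i,e_i)\mapsto(-h_i,-e_i)$, which preserves the joint law, negates $r_i,t_i$, and fixes $p_i,\tilde q_i,u_i$, killing every such covariance at once. Second, ``i.i.d.\ because they are built from the i.i.d.\ pairs $(h_i,e_i)$'' gives independence but not identical distribution, since $r_i,t_i$ also involve the index-dependent constants $f_i$; identical distribution follows from circular symmetry --- replacing $(h_i,e_i)$ by $(e^{jf_i}h_i,e^{jf_i}e_i)$ leaves the joint law unchanged and eliminates $f_i$ from all five definitions. (The paper elides this point as well.)
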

Thus, by having the joint PDF of $R$, $T$, $P$, $\tilde{Q}$, and $U$, $P_u(\gamma)$ is derived by integrating the PDF over $\mathcal{D}(\theta, R, T, P, \tilde{Q}, U)= \frac{|c_1|^2(P^2+\tilde{Q}^2)}{(\sigma_u^2+\frac{|c_2|^2}{N}(R^2+T^2))U}<\gamma$. Since the domain of the integral is quadratic, we calculate the closed form of $P_u(\gamma)$, derived in Lemma \ref{lemmanew2}. For simplicity we assumed $\sigma_e=0$.
\begin{lemma}\label{lemmanew2}
OP of the user for LB, $P_u(\gamma)$ in (\ref{outagenotdpc}), is $F_{w,k,\lambda,s,m}(0)$ where $w=-\frac{\gamma|c_2|^2}{2N}$, $k=2$, $\lambda=0$, $s=|c_1|^2$, $m=-\gamma\sigma_u^2$.
\end{lemma}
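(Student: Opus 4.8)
The plan is to reduce the five-dimensional Gaussian problem of Lemma~\ref{lemmanew} to a scalar generalized chi-square evaluated at the origin, exactly as was done for the SJB user outage in Lemma~\ref{lemma6}. The starting point is the SINR in (\ref{SINRnotdpc}) together with the joint law $[R,T,P,\tilde Q,U]^T \overset{d}{\rightarrow} \mathcal N_5(N\hat\mu_d, N\hat\Sigma_d)$ from Lemma~\ref{lemmanew}.

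First I would specialize to $\sigma_e=0$. Setting $e_i=0$ in the definitions of $p_i,\tilde q_i,u_i$ gives $p_i=|h_i|^2$, $\tilde q_i=0$ and $u_i=|h_i|^2$, so that $P=U=\sum_{i=1}^N|h_i|^2$ and $\tilde Q\equiv 0$; equivalently, at $\sigma_e=0$ the $\tilde Q$-row/column of $\hat\Sigma_d$ vanishes while the $P$- and $U$-marginals become perfectly correlated with equal means, forcing $P=U$ almost surely. Writing $\hat K\triangleq\sum_{i=1}^N|h_i|^2$, the SINR collapses to $\frac{|c_1|^2\hat K}{\sigma_u^2+\frac{|c_2|^2}{N}(R^2+T^2)}$, and the only remaining randomness is the jointly Gaussian triple $(R,T,\hat K)$ which, from Lemma~\ref{lemmanew} at $\sigma_e=0$, satisfies $R,T\sim\mathcal N(0,N/2)$ and $\hat K\sim\mathcal N(N,N)$, mutually independent. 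I would also note that this joint law is invariant in $\theta$: the steering phases $f_i$ enter $R,T$ only through an orthogonal rotation of $(m_i,n_i)$, which leaves the Gaussian law unchanged, so the inner probability in (\ref{outagenotdpc}) is constant in $\theta$ and the outer average $\int_0^\pi \frac1\pi\,d\theta$ equals one. Hence $P_u(\gamma)$ equals the inner probability, explaining why the answer carries no $\theta$ dependence.

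Next I would rewrite the outage event $\{\text{SINR}<\gamma\}$ as the quadratic inequality $|c_1|^2\hat K-\frac{\gamma|c_2|^2}{N}(R^2+T^2)-\gamma\sigma_u^2<0$ and standardize the Gaussians via $R=\sqrt{N/2}\,z_1$, $T=\sqrt{N/2}\,z_2$, $\hat K=N+\sqrt N\,z_3$ with $\mathbf z\sim\mathcal N(\mathbf 0,\mathbf I_3)$. The left-hand side then becomes a quadratic form $q(\mathbf z)=\mathbf z^T Q_2\mathbf z+q_1^T\mathbf z+q_0$ whose Hessian $Q_2$ is diagonal, carrying a single repeated nonzero eigenvalue on the $(z_1,z_2)$-block and a zero eigenvalue on $z_3$, whose linear part lies entirely in the $z_3$ (kernel) direction, and whose constant is $q_0$. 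Invoking the standard diagonalization behind the generalized-chi-square representation used by the toolbox of \cite{Amethodtointegrate}, the two equal nonzero eigenvalues supply the chi-square weight $w$ with degrees of freedom $k=2$; since the linear part has no component along $z_1,z_2$, the noncentrality is $\lambda=0$; the kernel component of the linear part supplies the Gaussian term with scale $s$; and completing the square contributes the offset $m$. Reading these off gives $P_u(\gamma)=P(q(\mathbf z)<0)=F_{w,k,\lambda,s,m}(0)$ with $k=2$ and $\lambda=0$ as claimed.

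The main obstacle is purely the bookkeeping in this last step: keeping the $\sqrt N$ and $\sqrt{N/2}$ normalization factors straight when mapping the quadratic inequality onto the five-tuple $(w,k,\lambda,s,m)$, and correctly folding the constant $-\gamma\sigma_u^2$ together with the mean of $\hat K$ into $m$ and the $z_3$-coefficient into $s$ when completing the square along the kernel direction. The structural parameters $k=2$ and $\lambda=0$, by contrast, follow immediately and robustly from the observation that only $R^2+T^2$ enters quadratically while $\hat K$ enters linearly.
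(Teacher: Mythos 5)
Your proposal is correct and follows essentially the same route as the paper: after setting $\sigma_e=0$ (which forces $P=U=\hat K$ and $\tilde Q\equiv 0$), both you and the paper rewrite the outage event as a quadratic inequality in the independent Gaussian variables $R$, $T$, $\hat K$ and identify it as a generalized chi-square RV evaluated at $0$, with $k=2$ and $\lambda=0$ arising from the $R^2+T^2$ block and the linear $\hat K$ term supplying the normal part. The only difference is presentational: the paper's two-line proof skips your explicit standardization and states $w$, $s$, $m$ in terms of the un-standardized quantities $\bigl(\sqrt{2}R,\sqrt{2}T,K\bigr)$, which is why its parameters lack the factors of $N$ (and the mean shift $N|c_1|^2$ folded into $m$) that your fully standardized bookkeeping would produce.
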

\begin{proof}
We write the domain of the integral  in a quadratic form as $K|c_1|^2-\gamma\sigma_u^2-\frac{\gamma|c_2|^2}{2N}((\sqrt{2}R)^2+(\sqrt{2}T)^2)<0$. The right-hand side of the inequality is in the form of a generalized chi-square RV with the mentioned parameters.
\end{proof}}
Moreover, when $\sigma_e=0$, Given that $\mathbf{s}_r$ is known to BS, we can apply the DPC theorem \cite{elgamal}\footnote{\textcolor{blue}{It can be shown that we cannot apply the DPC scheme from \cite{costa} in the case of imperfect CSI.}}. In fact, here, $\mathbf{s}_r$ acts as an additive interference (i.e., a state that is non-causal and known to the encoder).
\begin{lemma}\label{lemma3}
The instantaneous rate of the user when receiving the signal given by (\ref{yunew}) is equal to $\log_{2}(1+ \frac{|c_1|^2||\mathbf{h}||^2}{\sigma^2_u})$ when applying DPC.
\end{lemma}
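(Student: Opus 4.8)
The plan is to recognize the received signal (\ref{yunew}) as a scalar Gaussian channel corrupted by additive interference that is known non-causally at the transmitter, and then to invoke Costa's dirty-paper coding (DPC) theorem \cite{costa,elgamal} to conclude that this interference incurs no rate penalty. First I would use $\sigma_e=0$, so that $\bar{\mathbf{h}}=\mathbf{h}$ and hence $\frac{\mathbf{h}^H\bar{\mathbf{h}}}{\parallel\bar{\mathbf{h}}\parallel}=\frac{\parallel\mathbf{h}\parallel^2}{\parallel\mathbf{h}\parallel}=\parallel\mathbf{h}\parallel$. Substituting this into (\ref{yunew}), the received signal at the user reduces to
\[
\mathbf{y}_u=c_1\parallel\mathbf{h}\parallel\mathbf{s}_u+\underbrace{c_2\frac{\mathbf{h}^H\mathbf{a}}{\parallel\mathbf{a}\parallel}}_{\triangleq\,g_r}\mathbf{s}_r+\mathbf{z}_u,
\]
so that, conditioned on a realization of $\mathbf{h}$ and $\theta$, it is a complex AWGN channel $\mathbf{y}=X+S+Z$ with desired input $X=c_1\parallel\mathbf{h}\parallel\mathbf{s}_u$, additive interference $S=g_r\mathbf{s}_r$, and noise $Z=\mathbf{z}_u\sim\mathcal{CN}(0,\sigma^2_u)$.

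The key observation is that $S$ is known non-causally at the BS: $\mathbf{s}_r$ is the radar probing stream generated by the BS itself, and under perfect CSI the scalar gain $g_r=c_2\mathbf{h}^H\mathbf{a}/\parallel\mathbf{a}\parallel$ is a known constant for the given realization. Costa's DPC theorem then states that for a channel $\mathbf{y}=X+S+Z$ with $Z$ Gaussian, $S$ independent of $Z$ and available at the encoder, and a power constraint on $X$, the capacity equals that of the interference-free channel $\mathbf{y}=X+Z$; that is, the known interference can be pre-cancelled at no power cost. Applying this with signal power $E\{|X|^2\}=|c_1|^2\parallel\mathbf{h}\parallel^2\,E\{|\mathbf{s}_u|^2\}=|c_1|^2\parallel\mathbf{h}\parallel^2$, using the unit-power normalization $\frac{1}{L}E\{\mathbf{s}_u\mathbf{s}_u^H\}\approx1$, and noise power $\sigma^2_u$, yields the instantaneous rate $\log_{2}\!\big(1+|c_1|^2\parallel\mathbf{h}\parallel^2/\sigma^2_u\big)$, as claimed.

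I expect the main obstacle to be justifying the \emph{applicability} of the DPC result rather than any computation. One must confirm (i) that $S$ is genuinely available at the encoder for every channel realization, which hinges on the perfect-CSI assumption $\sigma_e=0$ (this is exactly the point of the footnote that DPC cannot be applied under imperfect CSI, since then $g_r$ would depend on the unknown error $\mathbf{e}$ and could not be pre-cancelled), and (ii) that achievability carries over to the complex-valued, arbitrarily-distributed interference $\mathbf{s}_r$; Costa's original argument assumes Gaussian interference, and the extension to a general known state follows the Gelfand--Pinsker framework in \cite{elgamal}. Once these conditions are established, the rate expression follows immediately from the interference-free capacity formula.
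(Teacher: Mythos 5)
Your proposal is correct and takes essentially the same approach as the paper: the paper likewise decomposes (\ref{yunew}) as $\mathbf{y}_u=\mathbf{y}_{u1}+\mathbf{y}_{u2}+\mathbf{z}_u$ with $\mathbf{y}_{u1}=c_1\|\mathbf{h}\|\mathbf{s}_u$ the desired signal and $\mathbf{y}_{u2}=c_2\mathbf{h}^H\mathbf{a}\,\mathbf{s}_r/\|\mathbf{a}\|$ the interference known non-causally at the BS, and applies DPC to conclude the rate $\log_{2}(1+|c_1|^2\|\mathbf{h}\|^2/\sigma^2_u)$. The only difference is self-containedness: where you invoke Costa's theorem as a black box, the paper re-derives it for this channel---auxiliary variable $U=\mathbf{y}_{u1}+\alpha\mathbf{y}_{u2}$, evaluation of $I(U;\mathbf{y}_u)-I(U;\mathbf{y}_{u2})$ via differential entropies, and explicit computation of the MMSE coefficient $\alpha=\frac{|c_1|^2\|\mathbf{h}\|^2}{\sigma^2_u+|c_1|^2\|\mathbf{h}\|^2}$---a derivation that uses Gaussianity only of $\mathbf{s}_u$ and $\mathbf{z}_u$ plus independence of $\mathbf{s}_u$ and $\mathbf{s}_r$, which incidentally settles your concern (ii) that Gaussianity of the interference $\mathbf{s}_r$ is not needed.
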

\begin{proof}
Please refer to Appendix \ref{lemma3p}.
\end{proof}
Therefore the outage probability of the user by applying DPC is $P_{u}(\gamma)= P(\frac{|c_1|^2||\mathbf{h}||^2}{\sigma^2_u} <\gamma).$ Moreover, we have $|| \mathbf{h} ||^2 \sim \frac{1}{2}\chi(2N)$, where $\chi(N)$ is chi-squared distribution with $N$ degrees of freedom. Therefore, we have: $P_{u}(\gamma)= \xi(N,\frac{\gamma \sigma^2_u}{|c_1|^2})$, where $\xi$ is the regularized gamma function \cite{generalgamma}.
\subsection{OP of the Target}\label{opoftargetlb}
By replacing (\ref{rxnew}) in (\ref{crb}), the CRB($\theta$) is derived in the following lemma (proof in Appendix \ref{lemma4p}):
\textcolor{blue}{\begin{lemma}\label{lemma4}
CRB($\theta$) is given as:
\begin{align}
\text{CRB}(\theta)\!\!=\!\!\frac{\frac{Q}{||\mathbf{b}'||^2}\tilde{{K}} \psi}{\psi^2+\!\frac{|c_1c_2|^2}{||\mathbf{b}'||^2}MN\tilde{{K}}\big((\sum_{i=1}^{N}-f'_i\hat{t}_i)^2+(\sum_{i=1}^{N}f'_i\hat{r}_i)^2\big)},
\label{crbsimplified}
\end{align}
where $\hat{r}_i\triangleq \mathcal{R}(e^{jf_i}\bar{h}_i)=|\bar{h}_i|\cos(f_i+\tilde{\tilde{\phi}}_i)$, $\hat{t}_i\triangleq \mathcal{I}(e^{jf_i}\bar{h}_i)=|\bar{h}_i|\sin(f_i+\tilde{\tilde{\phi}}_i)$, $\tilde{{k}}_i\triangleq |\bar{h}_i|^2$, $\hat{R}\triangleq \sum_{i=1}^{N}\hat{r}_i$, $\hat{T}\triangleq \sum_{i=1}^{N}\hat{t}_i$, $\tilde{{K}}\triangleq \sum_{i=1}^{N}\tilde{{k}}_i$  and $Q=\frac{\sigma^2_R}{2 \mid \alpha \mid ^2 L}$, $||\mathbf{b}'||^2=\frac{\pi^2\cos^2(\theta)M(M^2-1)}{12}$ and $\psi \triangleq |c_1|^2\big(\hat{R}^2+\hat{T}^2\big)+|c_2|^2N \tilde{{K}}$.
\end{lemma}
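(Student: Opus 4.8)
The plan is to start from the general CRB expression (\ref{crb}), substitute the LB sample covariance $\mathbf{R}_x \approx |c_1|^2 \mathbf{w}_1 \mathbf{w}_1^H + |c_2|^2 \mathbf{w}_2 \mathbf{w}_2^H$ of (\ref{rxnew}), and reduce the three trace terms to scalar quadratic forms in $\mathbf{a}(\theta)$, its derivative $\mathbf{a}'(\theta)$, and the estimated channel $\bar{\mathbf{h}}$. Writing $\mathbf{A} = \mathbf{b}\mathbf{a}^H$ and its derivative $\dot{\mathbf{A}} = \mathbf{b}'\mathbf{a}^H + \mathbf{b}(\mathbf{a}')^H$, I would first expand $\mathbf{A}^H\mathbf{A}$, $\dot{\mathbf{A}}^H\dot{\mathbf{A}}$, and $\dot{\mathbf{A}}^H\mathbf{A}$. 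The essential simplification comes from two orthogonality facts valid for an even number of centered steering phases: $\mathbf{b}^H\mathbf{b}' = 0$ and $\mathbf{a}^H\mathbf{a}' = 0$, both because $\sum_i \tfrac{N-(2i-1)}{2} = 0$ (and likewise for $\mathbf{b}$). Together with $\|\mathbf{a}\|^2 = N$ and $\|\mathbf{b}\|^2 = M$, these collapse the rank-two products to $\mathbf{A}^H\mathbf{A} = M\,\mathbf{a}\mathbf{a}^H$, $\dot{\mathbf{A}}^H\dot{\mathbf{A}} = \|\mathbf{b}'\|^2\mathbf{a}\mathbf{a}^H + M\,\mathbf{a}'(\mathbf{a}')^H$, and $\dot{\mathbf{A}}^H\mathbf{A} = M\,\mathbf{a}'\mathbf{a}^H$, so that every trace becomes a quadratic form $\mathbf{a}^H\mathbf{R}_x\mathbf{a}$, $(\mathbf{a}')^H\mathbf{R}_x\mathbf{a}'$, or $\mathbf{a}^H\mathbf{R}_x\mathbf{a}'$.

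Next I would evaluate these quadratic forms against the two rank-one pieces of $\mathbf{R}_x$. Since $\mathbf{w}_2 = \mathbf{a}/\|\mathbf{a}\|$, the projections onto $\mathbf{w}_2$ are immediate: $|\mathbf{a}^H\mathbf{w}_2|^2 = N$, while $(\mathbf{a}')^H\mathbf{w}_2 = (\mathbf{a}')^H\mathbf{a}/\|\mathbf{a}\| = 0$. The projections onto $\mathbf{w}_1 = \bar{\mathbf{h}}/\|\bar{\mathbf{h}}\|$ are rewritten through the lemma's definitions: $\mathbf{a}^H\bar{\mathbf{h}} = \sum_i e^{jf_i}\bar{h}_i = \hat{R} + j\hat{T}$ and $\|\bar{\mathbf{h}}\|^2 = \tilde{K}$ give $|\mathbf{a}^H\mathbf{w}_1|^2 = (\hat{R}^2+\hat{T}^2)/\tilde{K}$, and $(\mathbf{a}')^H\bar{\mathbf{h}} = -\sum_i f'_i\hat{t}_i + j\sum_i f'_i\hat{r}_i$ gives $|(\mathbf{a}')^H\mathbf{w}_1|^2 = \Phi/\tilde{K}$ with $\Phi \triangleq (\sum_i -f'_i\hat{t}_i)^2 + (\sum_i f'_i\hat{r}_i)^2$. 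Collecting terms recovers $\mathbf{a}^H\mathbf{R}_x\mathbf{a} = \psi/\tilde{K}$ (which is exactly the definition of $\psi$), together with $(\mathbf{a}')^H\mathbf{R}_x\mathbf{a}' = |c_1|^2\Phi/\tilde{K}$ and $|\mathbf{a}^H\mathbf{R}_x\mathbf{a}'|^2 = |c_1|^4(\hat{R}^2+\hat{T}^2)\Phi/\tilde{K}^2$.

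Finally I would assemble the denominator of (\ref{crb}), namely $\mathrm{Tr}(\mathbf{A}^H\mathbf{A}\mathbf{R}_x)\,\mathrm{Tr}(\dot{\mathbf{A}}^H\dot{\mathbf{A}}\mathbf{R}_x) - |\mathrm{Tr}(\dot{\mathbf{A}}^H\mathbf{A}\mathbf{R}_x)|^2$, and track the cancellation. The product term yields $\tfrac{M}{\tilde{K}^2}(\|\mathbf{b}'\|^2\psi^2 + M|c_1|^2\Phi\psi)$ and the subtracted square equals $\tfrac{M^2}{\tilde{K}^2}|c_1|^4(\hat{R}^2+\hat{T}^2)\Phi$; the two $\Phi$-terms combine via the identity $|c_1|^2\psi - |c_1|^4(\hat{R}^2+\hat{T}^2) = |c_1|^2|c_2|^2 N\tilde{K}$, which is immediate from the definition of $\psi$. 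Dividing the resulting CRB numerator and denominator by $\|\mathbf{b}'\|^2$, writing $Q = \sigma^2_R/(2|\alpha|^2 L)$ and $|c_1c_2|^2 = |c_1|^2|c_2|^2$, then produces the claimed (\ref{crbsimplified}). I expect the main obstacle to be the bookkeeping in the cross term $\dot{\mathbf{A}}^H\mathbf{A}$ and, above all, verifying that the $\mathbf{b}^H\mathbf{b}'$ coupling genuinely vanishes; once that orthogonality is secured, the $\Phi\psi$ cancellation in the denominator is the single nontrivial algebraic step and is precisely what compresses the expression into its compact final form.
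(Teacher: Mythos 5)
Your proposal is correct and follows essentially the same route as the paper's proof: substituting the LB covariance (\ref{rxnew}) into (\ref{crb}), exploiting the orthogonality relations $\mathbf{a}^H\mathbf{a}'=\mathbf{b}^H\mathbf{b}'=0$ together with $\|\mathbf{a}\|^2=N$, $\|\mathbf{b}\|^2=M$, $(\mathbf{a}')^H\mathbf{w}_2=0$, and then assembling the traces so that the cross term cancels down to the $|c_1c_2|^2 N\tilde{K}\Phi$ contribution. The only (cosmetic) difference is organizational — you collapse $\mathbf{A}^H\mathbf{A}$, $\dot{\mathbf{A}}^H\dot{\mathbf{A}}$, $\dot{\mathbf{A}}^H\mathbf{A}$ into rank-one/rank-two forms before tracing and you state the cancellation identity $|c_1|^2\psi-|c_1|^4(\hat{R}^2+\hat{T}^2)=|c_1|^2|c_2|^2N\tilde{K}$ explicitly, whereas the paper manipulates the trace terms directly via the cyclic property — but the substance is identical.
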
}
The problem here is more challenging compared to the one in SJB analysis. The reason is that following the method used in Section \ref{opoftarget}, reaches five RVs: $\hat{t}_i$, $\hat{r}_i$, $\tilde{{k}}_i$, $f'_i\hat{r}_i$, and $f'_i\hat{t}_i$. If we define $N$ random vectors as $[\hat{r}_i, \hat{t}_i, \tilde{k}_i, f'_i\hat{r}_i, f'_i\hat{t}_i]$ for $\forall i=1,..., N$ and calculate the covariance of these vectors, the elements of the covariance matrix of each vector depend on the index $i$. This implies that these random vectors are not i.i.d, so we cannot use multidimensional CLT. In this section, we propose three distinct methods to mitigate this challenge.\\
\subsubsection{Upper Bound}\label{upperbound}
If we ignore the term $\big((\sum_{i=1}^{N}-f'_i\hat{t}_i)^2+(\sum_{i=1}^{N}f'_i\hat{r}_i)^2\big)$ at the denominator of CRB($\theta$) in (\ref{crbsimplified}), we obtain an upper bound on CRB, named as UCRB. Thus,
\begin{align}
\!\!\!\!P_c(\epsilon)=P(\text{CRB}(\theta)>\epsilon)<P(\text{UCRB}(\theta)\!\!>\epsilon)\triangleq P_{Uc}(\epsilon).
\end{align}
Therefore, $P_{Uc}$ provides an upper bound on the OP of the target. Thus,
$P_{Uc}(\epsilon)=\int_{0}^{\pi}P(\frac{\frac{Q}{||\mathbf{b}'||^2}\tilde{K} }{\psi}>\epsilon)|\theta)f_{\theta}(\theta)d\theta$ and we need the joint PDF of $\hat{R}$, $\hat{T}$, and $\tilde{K}$. Following the same approach as in subsection \ref{opofuser}, we have: $[\hat{R},\hat{T}, \tilde{K}]^T\overset{(d)}{\rightarrow} \mathcal{N}_3(N\mathbf{\mu}_d,N\mathbf{\sigma}_d)$, where $\mathbf{\mu}_d$ and $\mathbf{\sigma}_d$ are derived at Lemma \ref{lemma5i} (the proof follows the same approach as the proof of Lemma~\ref{lemma1i}).
\textcolor{blue}{\begin{lemma}\label{lemma5i}
$\mathbf{\mu}_d$ and $\mathbf{\sigma}_d$ are derived as:
\begin{align}
\mathbf{\mu}_d=[0 ,0 ,1+\sigma^2_e]^T,
\mathbf{\sigma}_d\!=\!\!\!\!\begin{bmatrix}
\frac{1+\sigma^2_e}{2} & 0 & 0\\
0 & \frac{1+\sigma^2_e}{2} & 0\\
0 & 0& (1+\sigma^2_e)^2
\label{sigmadt2}
\end{bmatrix}.
\end{align}
\end{lemma}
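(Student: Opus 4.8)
The plan is to follow the template of the proof of Lemma~\ref{lemma1i}: compute the three first moments and the six second moments of the per-index vector $\mathbf{d}_i = [\hat{r}_i, \hat{t}_i, \tilde{k}_i]^T$ directly, working conditionally on $\theta$ (so that each $f_i$ is a fixed constant). The crucial observation, which makes the result independent of $\theta$ and hence of the $f_i$, is the circular symmetry of $\bar{h}_i \sim \mathcal{CN}(0, 1+\sigma_e^2)$: for any fixed phase $f_i$, the rotated variable $g_i \triangleq e^{jf_i}\bar{h}_i$ has the \emph{same} distribution $\mathcal{CN}(0, 1+\sigma_e^2)$. Writing $g_i = \hat{r}_i + j\hat{t}_i$, this means $(\hat{r}_i,\hat{t}_i)$ are two independent $\mathcal{N}(0, \frac{1+\sigma_e^2}{2})$ variables, and $\tilde{k}_i = |\bar{h}_i|^2 = |g_i|^2 = \hat{r}_i^2 + \hat{t}_i^2$.

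First I would establish the mean vector. The zero mean of the Gaussian real and imaginary parts gives $E[\hat{r}_i] = E[\hat{t}_i] = 0$, while $E[\tilde{k}_i] = E[|\bar{h}_i|^2] = 1+\sigma_e^2$ is simply the variance of the complex Gaussian, so $\mathbf{\mu}_d = [0, 0, 1+\sigma_e^2]^T$.

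Next I would fill in the covariance matrix. On the diagonal, $\text{Var}(\hat{r}_i) = \text{Var}(\hat{t}_i) = \frac{1+\sigma_e^2}{2}$ is immediate; and since $|\bar{h}_i|^2$ is exponential with mean $1+\sigma_e^2$, its variance is $(1+\sigma_e^2)^2$ (equivalently one can expand $E[(\hat{r}_i^2+\hat{t}_i^2)^2]$ via the fourth moments of the two independent Gaussians). All off-diagonal terms vanish: $\text{Cov}(\hat{r}_i,\hat{t}_i)=0$ by independence of the real and imaginary parts; and $\text{Cov}(\hat{r}_i,\tilde{k}_i) = E[\hat{r}_i(\hat{r}_i^2+\hat{t}_i^2)] = E[\hat{r}_i^3] + E[\hat{r}_i]\,E[\hat{t}_i^2] = 0$, where both terms are zero because $\hat{r}_i$ is zero-mean and symmetric and independent of $\hat{t}_i$; the computation for $\text{Cov}(\hat{t}_i,\tilde{k}_i)$ is identical. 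This reproduces exactly (\ref{sigmadt2}).

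There is no genuinely hard step here; the difficulty is conceptual rather than computational and is dispatched in the opening move, namely recognizing that circular symmetry absorbs the factor $e^{jf_i}$ so that the statistics do not depend on $\theta$. Had I not invoked rotation invariance, I would instead expand $\hat{r}_i = |\bar{h}_i|\cos(f_i + \tilde{\tilde{\phi}}_i)$ and $\hat{t}_i = |\bar{h}_i|\sin(f_i + \tilde{\tilde{\phi}}_i)$ and integrate the trigonometric factors against the uniform phase $\tilde{\tilde{\phi}}_i$ and the Rayleigh magnitude $|\bar{h}_i|$, which is more laborious but yields the same answer. Either way, the final observation is that the resulting $\mathbf{\mu}_d$ and $\mathbf{\sigma}_d$ are common to all $i$ (the $f_i$ dependence having cancelled), so the $\mathbf{d}_i$ are i.i.d. and the multidimensional CLT used to obtain $[\hat{R},\hat{T},\tilde{K}]^T \overset{d}{\rightarrow} \mathcal{N}_3(N\mathbf{\mu}_d, N\mathbf{\sigma}_d)$ applies.
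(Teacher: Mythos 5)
Your proof is correct, and it takes a genuinely cleaner route than the paper. The paper gives no standalone proof of Lemma \ref{lemma5i}: it states only that the proof follows the template of Lemma \ref{lemma1i} (Appendix \ref{lemma1pi}), i.e., expand $\hat{r}_i=|\bar{h}_i|\cos(f_i+\tilde{\tilde{\phi}}_i)$, $\hat{t}_i=|\bar{h}_i|\sin(f_i+\tilde{\tilde{\phi}}_i)$, $\tilde{k}_i=|\bar{h}_i|^2$, and evaluate every first and second moment by integrating the trigonometric factors against the uniform phase $\tilde{\tilde{\phi}}_i$ and using the Rayleigh/exponential moments $E[|\bar{h}_i|^2]=1+\sigma_e^2$ and $E[|\bar{h}_i|^4]=2(1+\sigma_e^2)^2$, together with product-to-sum identities. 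Your opening move---that circular symmetry of $\bar{h}_i\sim\mathcal{CN}(0,1+\sigma_e^2)$ makes $g_i=e^{jf_i}\bar{h}_i$ equal in distribution to $\bar{h}_i$, so $(\hat{r}_i,\hat{t}_i)$ are independent $\mathcal{N}(0,\tfrac{1+\sigma_e^2}{2})$ and $\tilde{k}_i=\hat{r}_i^2+\hat{t}_i^2$---replaces all of those trigonometric integrals by elementary Gaussian moment identities (including the vanishing of $E[\hat{r}_i^3]$ and $E[\hat{r}_i]E[\hat{t}_i^2]$ for the cross-covariances), and it makes the two facts that matter downstream, namely independence of the statistics from $\theta$ and from the index $i$ (hence the i.i.d. property needed for the multidimensional CLT), immediate rather than an after-the-fact observation. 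What the paper's template buys in exchange is uniformity: the same magnitude-phase machinery also handles Lemmas \ref{lemma1i} and \ref{lemma2i}, where the design parameters $b_1,b_2$ and their phases enter the moments nontrivially and produce nonzero off-diagonal covariance entries, so rotation invariance alone would not suffice there. For Lemma \ref{lemma5i} specifically, your symmetry argument is the natural shortcut and is fully rigorous.
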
}
Assuming a uniform distribution for $\theta$ and defining $I(\mathcal{D})=\frac{1}{\pi}\!\!\int_{0}^{\pi}\!\!\iiint_{\mathcal{D}(\theta, \hat{R}, \hat{T}, \tilde{K})}\!\!f(\hat{R}, \hat{T}, \tilde{K})\,d\hat{R}\,d\hat{T}\,d\tilde{K}d\theta$, where $f(\hat{R}, \hat{T}, \tilde{K})$ represents the PDF of a trivariate normal distribution with a mean vector of $N\mathbf{\mu_d}$ and a covariance matrix of $N\mathbf{\sigma_d}$, we have $P_{Uc}(\epsilon)=I(\mathcal{D}_1)$, where $\mathcal{D}_1(\theta, \hat{R}, \hat{T}, \tilde{K})= \frac{Q\tilde{K}}{\psi}>||\mathbf{b}'||^2\epsilon$. One can utilize numerical methods such as ray-tracing \cite{Amethodtointegrate} to calculate this integral.
\subsubsection{Lower Bound}\label{lowerbound}
\textcolor{blue}{Using Cauchy–Schwarz inequality for $N$-dimensional complex space, we have:
\begin{align}
|\sum_{i=1}^{N}\!jf'_ie^{jf_i}\bar{h}_i|^2&<(\sum_{i=1}^{N}\!|jf'_i|^2)(\sum_{i=1}^{N}\!|e^{jf_i}\bar{h}_i|^2)=(\sum_{i=1}^{N}\!|f'_i|^2)(\tilde{{K}}),
\end{align}
where $\sum_{i=1}^{N}\!|f'_i|^2=\frac{\pi^2\cos^2(\theta)N(N^2-1)}{12}$. By replacing $|\sum_{i=1}^{N}\!jf'_ie^{jf_i}\bar{h}_i|^2$ with ${\tilde{K}}(\sum_{i=1}^{N}\!|f'_i|^2)$ at the denominator of CRB($\theta$) at (\ref{crbsimplified}), we obtain a lower bound on CRB, named as LCRB.} Thus,
\begin{align}
P_c(\epsilon)\!=\!\!P(\text{CRB}(\theta)\!>\epsilon)\!>\!P(\text{LCRB}(\theta)\!>\!\epsilon)\!\triangleq P_{Lc}(\epsilon),\label{ostad}
\end{align}
where $P_{Lc}$ is a lower bound of the OP of the target. Using (\ref{crbsimplified}) and (\ref{ostad}), we have: $P_{Lc}(\epsilon)\!\!=\!\!\!\int_{0}^{\pi}\!\!P(\frac{\frac{Q}{||\mathbf{b}'||^2}\tilde{K}\psi}{\psi^2+\!\frac{|c_1c_2|^2\tilde{K}^2N^2(N^2-1)}{M^2-1}}>\epsilon)|\theta)f_{\theta}(\theta)d\theta$.
Thus, by defining the domain: $\mathcal{D}_2(\theta, \hat{R},\hat{T}, \tilde{K})= \frac{Q\tilde{K}\psi}{\psi^2+\!\frac{|c_1c_2|^2K^2N^2(N^2-1)}{M^2-1}}>||\mathbf{b}'||^2\epsilon$, we have $P_{Lc}(\epsilon)=I(\mathcal{D}_2)$.
\subsubsection{Approximation}
\textcolor{blue}{When $N$ is large, due to the law of large numbers, we have:
\begin{align}
(\sum_{i=1}^{N}-f'_i\hat{t}_i)^2+(\sum_{i=1}^{N}f'_i\hat{r}_i)^2\overset{(p)}{\rightarrow} E\{(\sum_{i=1}^{N}-f'_i\hat{t}_i)^2+(\sum_{i=1}^{N}f'_i\hat{r}_i)^2 \}.
\end{align}
Thus, by replacing $(\sum_{i=1}^{N}-f'_i\hat{t}_i)^2+(\sum_{i=1}^{N}f'_i\hat{r}_i)^2\big)$ in the denominator of CRB($\theta$) in (\ref{crbsimplified}) with $E\{(\sum_{i=1}^{N}-f'_i\hat{r}_i)^2+(\sum_{i=1}^{N}f'_i\hat{t}_i)^2 \}\overset{(a)}=\sum_{i=1}^{N}(f'_i)^2E\{{\tilde{K}}\}=(1+\sigma^2_e)\sum_{i=1}^{N}(f'_i)^2$, where (a) is due to independence of $\hat{t}_i$ with $\hat{t}_j$ and $\hat{r}_i$ with $\hat{r}_j$, we will have an approximation of CRB($\theta$) named ACRB.}
Thus, we have:
\begin{align}
P_c(\epsilon)\!=\!P(\text{CRB}(\theta)\!>\!\epsilon)\!\approx\!\! P(\text{ACRB}(\theta)\!>\!\epsilon)\!\!\triangleq\! P_{Ac}(\epsilon),\label{ap}
\end{align}
where $P_{Ac}$ is an approximation of the OP of the target\footnote{In Section \ref{simulations}, we illustrate through numerical results that this approximation closely aligns with the exact value of $P_c$.}, which is: $P_{Ac}(\epsilon)\!=\!\!\!\int_{0}^{\pi}\!\!P(\frac{\frac{Q}{||\mathbf{b}'||^2}\tilde{K}\psi}{\psi^2+\!\frac{|c_1c_2|^2\tilde{K}N^2(N^2-1)}{M^2-1}}>\epsilon)|\theta)f_{\theta}(\theta)d\theta$.
Thus, by defining the domain: $\mathcal{D}_3(\theta, \hat{R},\hat{T}, \tilde{K})= \frac{Q\tilde{K}\psi}{\psi^2+\!\frac{|c_1c_2|^2\tilde{K}N^2(N^2-1)}{M^2-1}}>||\mathbf{b}'||^2\epsilon$, we have $P_{Ac}(\epsilon)=I(\mathcal{D}_3)$.

\textbf{Remark:} We note that if the signal sent by the BS is formulated as:
\begin{equation}
\mathbf{X} = c_1\frac{b_1\mathbf{h}+b_2\mathbf{a}}{\parallel b_1\mathbf{h}+b_2\mathbf{a} \parallel}\mathbf{s}_u + c_2\frac{b_1\mathbf{h}+b_2\mathbf{a}}{\parallel b_1\mathbf{h}+b_2\mathbf{a} \parallel}\mathbf{s}_r,
\end{equation}
we refer to this scenario as subspace linear beamforming (SLB). In this configuration, not only does the beamforming vector lie in the subspace created by both the channel of the user and the target, but the BS also utilizes different signals for the radar and user. The OP of the user in this scenario remains identical to that in SJB, except for the introduction of a coefficient $c_1$, which multiplies with $P_u$ in eq. (\ref{outage2}). Thus, the OP of the user increases. Furthermore, after applying the same methodology as detailed in Appendix \ref{lemma4p}, the OP of the target in this scenario is the same as the SJB scenario. Consequently, SJB outperforms SLB. When optimizing the OP of the user in SLB, the optimal $c_1$ becomes 1. In this particular case, SLB and SJB have equivalent performance.
\subsection{User Ergodic Rate and Target Ergodic CRB}
The ergodic rate of the user can be derived using (\ref{ergodicsjb}), Lemma \ref{lemma3} or Lemma \ref{lemmanew2}, and the ECRB can be derived using (\ref{ergodicsjbcrb}) and substituting $P_{Ac}$, derived at subsection \ref{opoftargetlb}. Due to space limitations, the details are omitted.
%\begin{align}
%R=E\{\log_{2}(1+ \frac{|c_1|^2||\mathbf{h}||^2}%{\sigma^2_u})\}\!\!\!&\overset{(a)}=\!\!\int_{0}^{\infty}\!\!\!P(||h||^2>\frac{(2^t-1)\sigma^2_u}{|c_1|^2})dt\nonumber\\
%\!\!\!&\overset{(b)}=\int_{0}^{\infty}\!\!\!(1-\xi(N,\frac{(2^t-1)\sigma^2_u}{|c_1|^2}))dt.
%\end{align}
%where (a) and (b) are due to statistical properties where $\xi$ is the regularized gamma function \cite{generalgamma}. \textcolor{blue}{The expectation is with respect to \(\mathbf{h}\).}

%\begin{align}
%\text{ECRB}=E[\text{CRB}]&\overset{(a)}=\!\!\!\int_{0}^{\infty}\!\!\!\!P(\text{CRB}>t)dt\overset{(b)}\approx\!\!\! %\int_{0}^{\infty} \!\!\!\!P_{Ac}(t)dt,
%\end{align}
%where (a) and (b) are due to statistical properties and (\ref{ap}), respectively. \textcolor{blue}{The expectation is with respect to \(R\), \(T\), and \(K\).}
\section{Opportunistic Sensing/Communication Approaches}\label{specialpoints}
In this section, we derive the OP of the target and the user at boundary points in SJB and LB scenarios. These points represent scenarios where: 1) Sensing performance dominates, termed "opportunistic communication," achieved by setting $b_1=0$ and $|c_1|^2=0$ at SJB and LB, respectively. 2) Communication performance dominates, termed "opportunistic sensing," achieved by setting $b_1=\infty$ and $|c_1|^2=p_t$ at SJB and LB, respectively \footnote{In the LB case, $c_1$ and $c_2$ satisfy (\ref{power}). We set $|c_2|^2=p_t-|c_1|^2$, where $|c_1|^2$ can vary between $0$ and $\sqrt{p_t}$.}.
\subsection{Opportunistic Points of SJB}
First, we analyze opportunistic communication point. Based on (\ref{sinr2i}), the SINR of the user when $|b_1|$ goes to $0$ is $\frac{p_t}{\sigma^2_u}\frac{R^2+T^2}{N}$. Moreover, (\ref{rtk}) indicates that $R$ and $T$ are uncorrelated Gaussian random variables with variance $\frac{1}{2}$ and mean $0$. Therefore, $R^2+T^2$ follows a chi-square distribution with two degrees of freedom. Thus, the OP of the user is:
\begin{align}
P_u(\gamma)\!\!\!=\!P(\text{SINR}\!<\!\gamma)\!\!=\!\!P(R^2+T^2\!\!\!<\!\!\frac{N\gamma \sigma^2_u}{p_t})\!\!\!\overset{(a)}=\!\!\xi(1,\!\!\frac{N\gamma \sigma^2_u}{p_t}),
\end{align}
where (a) is based on the CDF of a Chi-square distribution, which is a regularized gamma function denoted by $\xi$ \cite{generalgamma}. Moreover, in this case, based on (\ref{crameri}), CRB is $\frac{g(\theta)}{N}$. Thus, since $\theta$ is uniformly distributed in the interval $[0,\pi]$, after calculating the CDF of $\cos^2(\theta)$ (a term at the denominator of $g(\theta)$), OP of the target is:
\begin{align}
P_c(\epsilon)\!\!&=\!\!P(\text{CRB}(\theta)>\epsilon)\!\!=\!\!\frac{\cos^{-1}(\sqrt{\frac{6 \sigma_r^2}{N\epsilon L p_t|\alpha|^2 (M-1)(M)(M+1)\pi^2 }})}{\pi},\label{crbb1zero}
\end{align}
when $\frac{6 \sigma_r^2}{N\epsilon L p_t|\alpha|^2 (M-1)(M)(M+1)\pi^2 }<1$, and $0$ otherwise.

\textcolor{blue}{Next, we analyze the opportunistic sensing point. When \(b_1\) approaches infinity, based on (\ref{sinr2i}), the SINR of the user will be \(\frac{p_t |||\mathbf{h}||^2 + |\mathbf{h}^H\mathbf{e}|^2}{(\sigma_u^2)(||\mathbf{h} + \mathbf{e}||^2)} = \frac{p_t(P^2 + \tilde{Q}^2)}{\sigma_u U}\). According to Lemma \ref{lemmanew}, \([P, \tilde{Q}, U]^T \overset{(d)}{\rightarrow} \mathcal{N}_3(N[\mathbf{\hat{\hat{\mu}}}_d]_{3:5}, N[\mathbf{\hat{{\Sigma}}}_d]_{3:5,3:5})\). Thus, by following the same approach as in Section IV-A, the OP of the user is derived by integrating this joint PDF over the domain \( \frac{p_t(P^2 + \tilde{Q}^2)}{\sigma_u U} < \gamma \). Under perfect CSI, this simplifies to:
\begin{align}
\!\!\!P_u(\gamma)\!\!\!=\!P(\text{SINR}\!<\!\gamma)\!\!=P(\parallel \mathbf{h} \parallel^2 <\frac{\gamma \sigma^2_u}{p_t})= \xi(N, \frac{\gamma \sigma^2_u}{p_t}).\label{userb1inf}
\end{align}
Furthermore, in this case, based on (\ref{crameri}) and the definition in Lemma \ref{lemma4}, the CRB is \(\frac{|\mathbf{\bar{h}}|^2 g(\theta)}{|\mathbf{a}^H\mathbf{\bar{h}}|^2} = \frac{g(\theta) {\tilde{K}}}{\hat{R}^2 + \hat{T}^2}\). Thus, the OP of the target is:
\begin{align}
P_c(\epsilon)&=P(\frac{(1+\sigma^2_e)\epsilon}{2g(\theta)}\left[\left(\sqrt{\frac{2}{1+\sigma^2_e}} \hat{R}\right)^2 + \left(\sqrt{\frac{2}{1+\sigma^2_e}} \hat{T}\right)^2\right]\nonumber\\
&- \left(\frac{\hat{\hat{K}}}{1+\sigma^2_e} - 1\right)(1+\sigma^2_e) < 0 )\overset{(a)}{=}F_{w,k,\lambda,s,m}(0).\label{crbb1inf}
\end{align}
where (a) follows from Lemma \ref{lemma5i}, showing that \(\hat{R}\), \(\hat{T}\), and \(\tilde{{K}}\) are independent Gaussian random variables. Hence, following \cite{general}, the parameters of this generalized chi-squared distribution are \(w = [\frac{(1+\sigma^2_e)\epsilon}{2g(\theta)}, \frac{(1+\sigma^2_e)\epsilon}{2g(\theta)}]\), \(k = [1, 1]\), \(\lambda = [0, 0]\), \(s = -(1+\sigma^2_e)\), and \(m = -(1+\sigma^2_e)\). }
\begin{figure*}
    \centering
    \subfigure[\textcolor{blue}{SJB}]{\includegraphics[scale=.4]{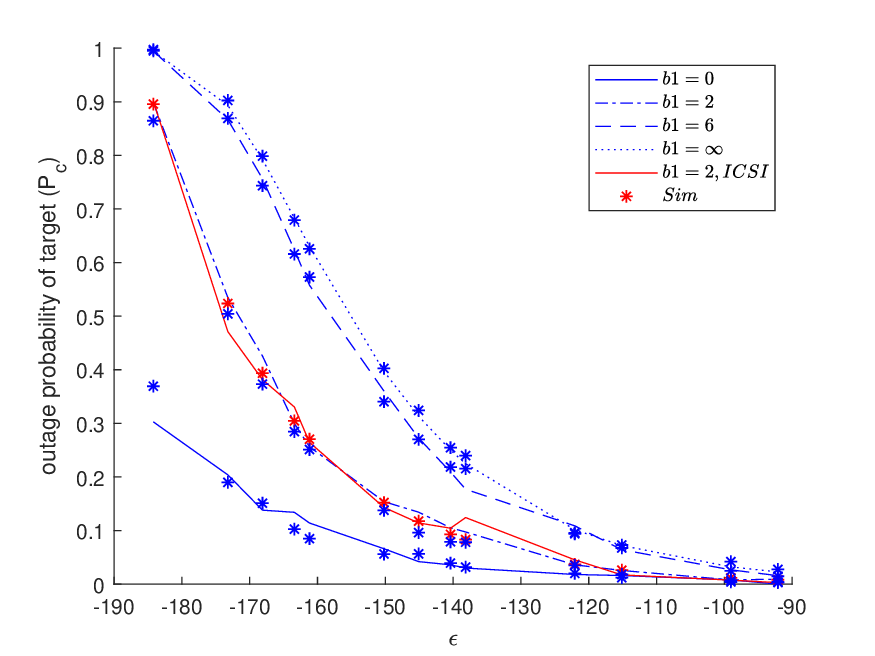}}\label{1}
    \subfigure[\textcolor{blue}{LB}]{\includegraphics[scale=.4]{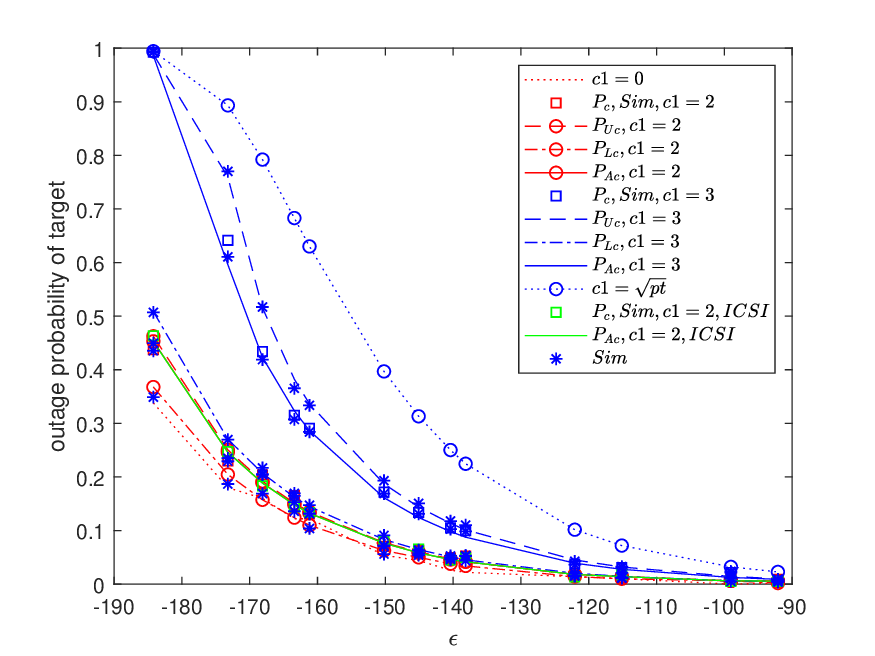}}\label{2}
    \subfigure[SJB and LB for $c_1=0.2, b_1=0.2$. ]{\includegraphics[scale=.4]{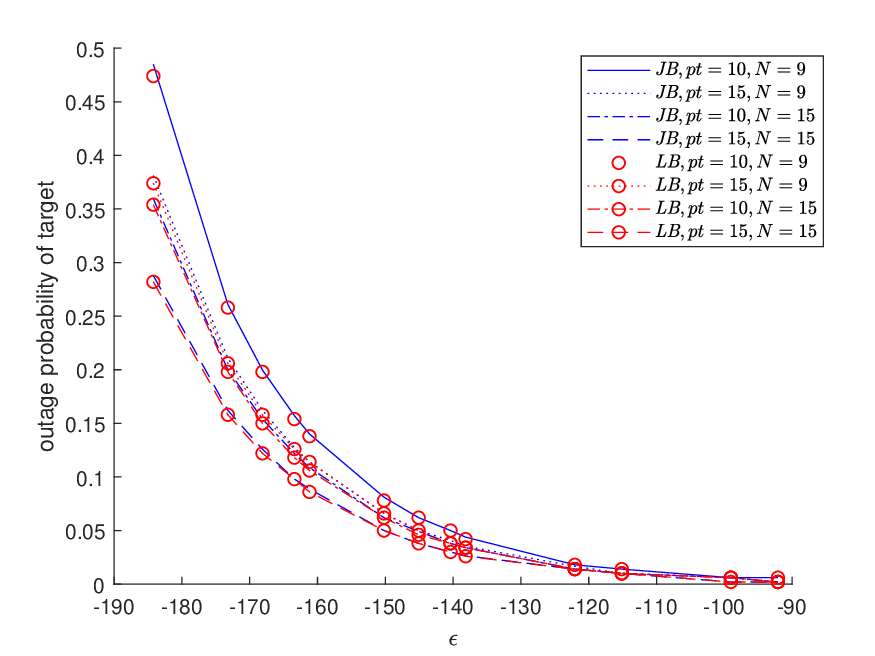}}\label{3}
    \caption{OP of the target versus $\epsilon(dB)$}
    \label{optarget}
\end{figure*}

\subsection{Opportunistic Points of LB}
First, we analyze opportunistic communication point. Based on Lemma \ref{lemma3} and (\ref{SINRnotdpc}), the SINR of the user when $|c_1|$ goes to $0$ is $0$, and the OP of the user is one. Moreover, in this case ($c_1=0$ and $|c_2|^2=p_t$), based on (\ref{crbsimplified}), CRB is $\frac{Q}{N p_t||\mathbf{b}'||^2}$. By substituting $Q$ and $||\mathbf{b}'||^2$ from Lemma \ref{lemma4} we have $\frac{Q}{N p_t||\mathbf{b}'||^2}=\frac{g(\theta)}{N}$, which is the same as CRB of the SJB when $b_1=0$. Thus, the OP of the target is the same as (\ref{crbb1zero}).

\textcolor{blue}{Next, we analyze the opportunistic sensing point. When \(|c_1|^2 = p_t\), based on (\ref{SINRnotdpc}), the SINR of the user will be \(\frac{p_t |||\mathbf{h}||^2 + |\mathbf{h}^H\mathbf{e}|^2}{(\sigma_u^2)(||\mathbf{h} + \mathbf{e}||^2)} = \frac{p_t(P^2 + \tilde{Q}^2)}{\sigma_u U}\), which is the same as the SINR for the user in the SJB case when \(b_1\) goes to infinity. Thus, the OP of the user in the LB is the same as the OP of the user in the SJB at the opportunistic sensing point. Moreover, based on (\ref{crbsimplified}), the CRB is \(\frac{Q}{p_t ||\mathbf{b}'||^2} \frac{{\tilde{K}}}{\hat{R}^2 + \hat{T}^2}\). Since \(\frac{Q}{p_t ||\mathbf{b}'||^2} = g(\theta)\), the OP of the target is the same as in the SJB scenario when \(b_1\) goes to infinity, as shown in (\ref{crbb1inf}).}

\textcolor{blue}{Remark: In perfect CSI, the opportunistic S \& C points for both SJB and LB (whether DPC is applied by the user or not) are identical. In the case of imperfect CSI, DPC cannot be applied. However, even in this scenario, the opportunistic S \& C points for both SJB and LB remain unchanged.}
\section{Numerical Results}\label{simulations}
Unless stated otherwise we consider the following:
A BS with $N=15$ and $M=17$ transmit and receive antennas;
the power budget $p_t=10$; the noise variance at the user and BS $\sigma_r=1$, and $\sigma_u=1$; the length of the radar frame and the reflection coefficient as $L = 30$ and $\alpha=1$, respectively; $\phi_1=\frac{\pi}{3}$ and $\phi_2=0$, representing the phases of $b_1$ and $b_2$, respectively. For the SJB scenario, we set $b_2=1$, and change $b_1$ between $0$ and infinity. In the LB case, we set $|c_2|^2=p_t-|c_1|^2$, and $|c_1|^2$ can be any number between $0$ and $\sqrt{p_t}$. The simulation results are based on $10000$ randomly seeded channel realizations. \textcolor{blue}{In the figure legends, whenever "$ICSI$" and "$NoDPC$" appear, they refer to imperfect CSI and the case where DPC is not applied, respectively.}

In Fig. \ref{optarget}.a and Fig. \ref{optarget}.b, the OP of the target in SJB for different $b_1$, and in LB for different $c_1$, are shown respectively, versus the threshold $\epsilon$ (OP thresholds for the target). \textcolor{blue}{As $\epsilon$ increases, the OP of the target decreases exponentially.} In SJB (LB), an increase in $b_1$ ($c_1$) results in an increase in the OP of the target, indicating a degradation in sensing performance. This is attributed to the fact that increasing $|b_1|$ ($c_1$) directs the beamforming vector toward the user, leading to higher error estimation for $\theta$. Moreover, in SJB, it is observed that for all $\epsilon$, after $b_1=10$, no significant change occurs compared to $b_1=\infty$, and the OP of the target is more sensitive to the change of $b_1$ in the interval $b_1=[0,10]$. Additionally, Fig. \ref{optarget}.b demonstrates that the OP of the target, $P_c$, lies between upper and lower bounds and is very close to the approximation approach, $P_{Ac}$. Also, by decreasing $c_1$, the upper bound, lower bound, and approximation approach become closer to each other, resulting in a tighter bound. Moreover, the simulation results in all figures confirm the numerical findings. \textcolor{blue}{Moreover, at SJB for \(b_1 = 2\) and at LB for \(c_1 = 2\), the OP of the target is plotted for both perfect and imperfect CSI. We can observe that the effect of channel errors on the OP of the target is small \footnote{\textcolor{blue}{To make the figures more readable, we have plotted imperfect CSI for only one set of parameters.}}.}

Fig. \ref{optarget}.c illustrates the OP of the target in SJB and LB versus $\epsilon$ for different $N$ and $p_t$ at $b_1=0.2, c_1=0.2$. It is worth noting that for LB, we have plotted $P_{AC}$ due to the observed matching of the real OP, $P_c$, to $P_{Ac}$ (from Fig. \ref{optarget}.b). The results indicate that an increase in $p_t$ leads to a decrease in the OP of the target due to the increment in the reflected power by the target. Furthermore, as the number of antennas $N$ increases, the OP of the target decreases, attributed to a more directed beamforming vector. Remarkably, we have observed that even with a reduction in the number of antennas to $N=9$, the simulation results closely match the analytical results, suggesting that the CLT still holds even for a small number of antennas (though not shown, the simulation results align with the analytical ones). Increasing $N$ and $p_t$ improves the system performance in the sensing aspect, with a slightly higher impact observed for an increase in $N$ compared to an increase in $p_t$. Additionally, the curves of LB approximately match those of SJB. This observation indicates that within a narrow range around the opportunistic communication points in SJB and LB, namely $b_1 \in [0,0.2]$ and $c_1 \in [0,0.2]$, the OP of the target behaves similarly in both scenarios. This extends the findings of Section \ref{specialpoints}, where it was demonstrated that at the opportunistic communication points in SJB and LB (corresponding to $b_1=0$ and $c_1=0$), the OP of the target exhibits similar behavior in both cases.

Fig. \ref{opuser}.a and Fig. \ref{opuser}.b depict the OP of the user at SJB for different $b_1$ and LB for different $c_1$, respectively, versus $\gamma$ (threshold of the OP of the user). With an increase in $\gamma$, the OP of the user increases, and at SJB (LB), an increase in $b_1$ ($c_1$) results in a decrease in the OP of the user, indicating an improvement in communication performance. This is because increasing $|b_1|$ ($c_1$) directs the beamforming vector towards the user. Additionally, at SJB, it is observed that the OP of the user is more sensitive to the change of $b_1$ in the interval $b_1=[0,4]$.  \textcolor{blue}{Moreover, at SJB for \(b_1 = 1\) and at LB for \(c_1 = 2\), the OP of the user is plotted for both perfect and imperfect CSI. It can be observed that imperfect CSI has a more detrimental effect on communication performance at SJB compared to LB. Additionally, by comparing Fig. \ref{opuser}.a with Fig. \ref{optarget}.a, we can see that channel errors have a greater impact on the OP of the user than on the OP of the target. Furthermore, in Fig. \ref{optarget}.b, by comparing the OP of the user when DPC is not applied at \(c_1 = 2\) to the case when it is applied, we observe that without DPC, communication performance degrades significantly, and even at low user OP thresholds, the OP of the user remains high.}

Fig. \ref{opuser}.c shows the OP of the user in SJB and LB versus $\gamma$ for different $N$ and $p_t$ at $b_1=100, c_1=3$ (boundary points, as observed in Fig. \ref{opuser}.a and Fig. \ref{opuser}.b, where no significant changes exist compared to $b_1=\infty, c_1=\sqrt{p_t}$). An increase in $p_t$ leads to a decrease in the OP of the user due to the increment in the received signal by the user. Moreover, as the number of antennas $N$ increases, the OP of the user decreases due to a more directed beamforming vector. An increase in $N$ by 50\% has a slightly higher impact on the OP of the user compared to an increase in $p_t$ by 50\%. By comparing Figure \ref{optarget}.c and Figure \ref{opuser}.c, we observe that the OP of the user is more sensitive to changes in $N$ or $p_t$ than the OP of the target. \textcolor{blue}{Additionally, when comparing the plot of LB with \(N = 15\) to that of SJB in Fig. 3(c), they approximately match. However, for the case where \(N = 9\), there is a slight deviation because \(c_1 = 3 < \sqrt{10}\). The closer \(c_1\) is to \(\sqrt{10}\), the closer the plots will be to each other.} This observation indicates that within a range around the opportunistic sensing points in SJB and LB, namely $b_1 \in [100,\infty]$ and $c_1 \in [3,\sqrt{p_t}]$, the OP of the user behaves similarly in both scenarios.
%This extends the findings of Section \ref{specialpoints}, where it was demonstrated that at the opportunistic sensing points in SJB and LB (corresponding to $b_1=\infty$ and $c_1=\sqrt{p_t}$), the OP of the user exhibits similar behavior in both cases.

To illustrate the trade-off between S \& C, Fig. \ref{region}.a and Fig. \ref{region}.b depict the achievable region of the OP of the target and the OP of the user based on varying $|b_1|$ and $|c_1|$ when $N=15$, for different $\gamma$ and $\epsilon$. These figures reveal that by increasing $|b_1|$ ($|c_1|$) - i.e, moving to the left-hand side of the figure where $|b_1|$ ($|c_1|$) increases - the OP of the user decreases (improved communication performance), while the OP of the target increases (degraded sensing performance). Moreover, by increasing \(\gamma\), the region moves further from the origin, indicating a degradation in both S \& C performance. Additionally, Fig.\ref{region}.a demonstrates the benefits of utilizing ISAC compared to the time-sharing method. While ISAC enables the system to achieve skewed regions, \textcolor{blue}{ time-sharing methods result in straight lines (the points \((\alpha P_u(b_1=1, b_2=0) + (1 - \alpha), \alpha + (1 - \alpha)P_c(b_1=0, b_2=1))\), where \(\alpha \in [0,1]\) represents the fraction of time allocated to communication), leading to higher OPs for both the target and the user. Additionally, as depicted in Fig. \ref{region}.a, the LB performance degrades when DPC is not deployed, as the curve for LB without DPC is farther from the origin \footnote{\textcolor{blue}{To improve readability, in Fig. \ref{region}.b, time-sharing and LB without DPC are not shown. However, it was observed that whenever the region for LB with DPC falls below (or above) the region for SJB, LB without DPC exhibits the same behavior.}}.}

\textcolor{blue}{By comparing Fig.\ref{region}.a and Fig.\ref{region}.b, we observe that when the threshold of the OP of the target (\(\epsilon\)) is high (and OP of the target is low as observe in Fig.\ref{region}.b), the bottleneck is communication. In this case, JB performs better, meaning that the OP for both the target and the user is lower compared to LB, thus improving both S \& C performance. However, when \(\epsilon\) is low, depending on the region we are analyzing, the bottleneck is either communication or both S \& C. In this case, when communication is effective (i.e., the OP of the user is low), LB performs better. Specifically, we observe that as \(\epsilon\) increases, the intersection point of the LB and JB regions shifts to the left, and JB performs better (for both S \& C) than LB in a larger area of the region (with more pairs of points \((P_c, P_u)\)). At high \(\epsilon\), the JB curve completely falls below the LB curve, indicating that JB improves both S \& C performance across the entire region. However, as \(\epsilon\) decreases, LB performs better in a larger area of the region. Moreover, these figures demonstrate how a system designer can select the achievable S \& C OP thresholds and beamforming method for given OPs. For example, if we have $P_c=0.5$ and $P_u=0.2$, a set of \(\epsilon\) and \(\gamma\) values is achievable, which includes \(\epsilon = 10^{-7}, \gamma < 52\) for LB. However, with this pair of thresholds, SJB cannot meet the required S \& C OPs, and we must select another pair with \(\epsilon > 10^{-7}, \gamma < 52\). As another example, in Fig. \ref{region}.a, if we prioritize better sensing performance (i.e., if we need \(P_c < 0.4\)), we should use SJB. However, if we require a higher communication rate (i.e., \(P_u < 0.5\)), LB would be the better choice.}
\begin{figure*}
    \centering
    \subfigure[\textcolor{blue}{SJB}]{\includegraphics[scale=.4]{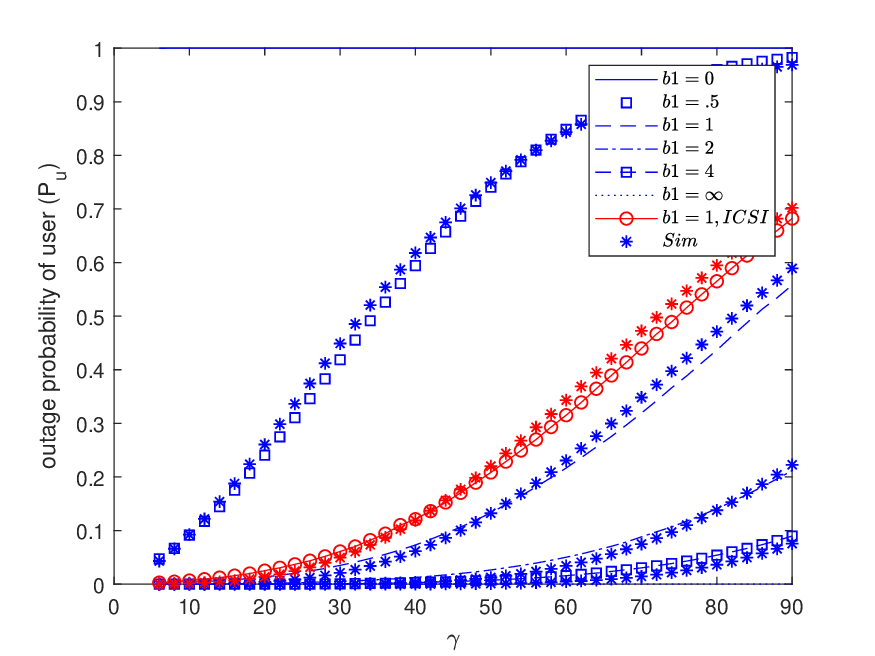}}\label{4}
    \subfigure[\textcolor{blue}{LB}]{\includegraphics[scale=.4]{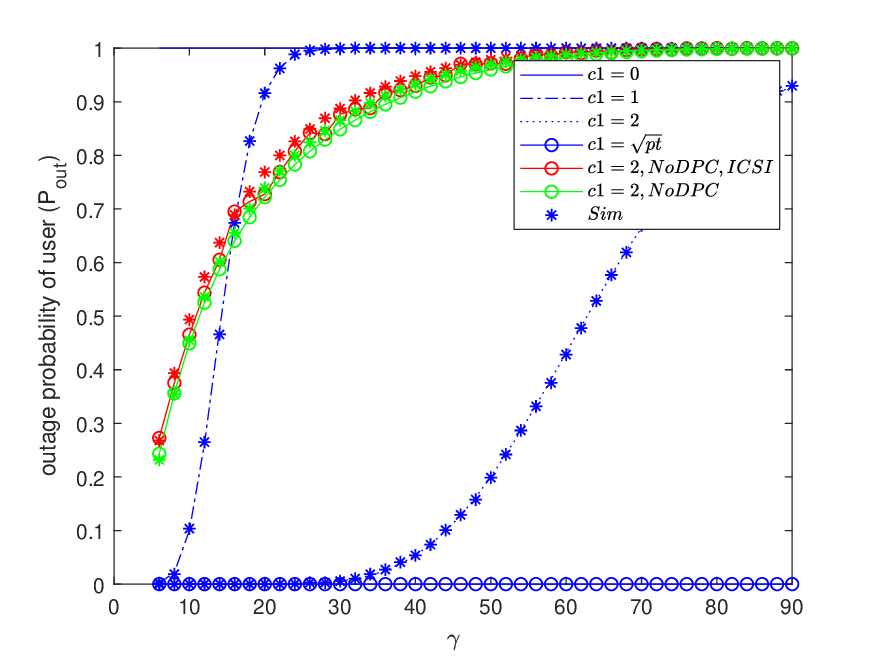}}\label{5}
    \subfigure[\textcolor{blue}{SJB and LB for $c_1=3, b_1=100$.}]{\includegraphics[scale=.4]{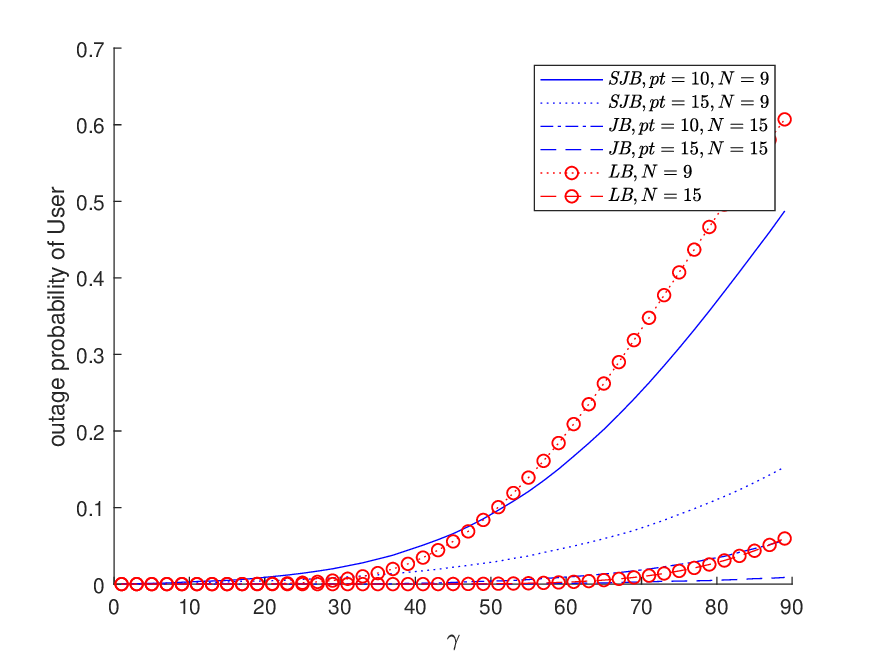}}\label{6}
    \caption{OP of the user versus $\gamma$}
    \label{opuser}
\end{figure*}
\section{Conclusion}\label{conclusion}
We investigated the fundamental performance limits of a downlink MIMO ISAC system considering the impact of randomness. The OP of the user and ergodic rate for communication and the OP of the target and ergodic CRB were derived as our performance metrics. We used two different beamforming methods to share resources between S \& C. Furthermore, by determining the Pareto optimal boundary of the performance metrics in both SJB and LB, we demonstrated that achieving optimal performance for S \& C simultaneously is highly unlikely. This observation suggests the existence of a fundamental S \& C trade-off in a random ISAC system.
\textcolor{blue}{We have demonstrated that under perfect CSI, the opportunistic S \& C points—where no power is allocated to one task (either sensing or communication), and not in the entire region—are the same for both SJB and LB, regardless of whether the user applies DPC. Under imperfect CSI, we showed that DPC cannot be applied. However, in this case, the opportunistic S \& C points for both SJB and LB remain the same. Moreover, when the OP threshold of the target, \(\epsilon\), increases, JB performs better (in both S \& C) than LB over a larger area of the C-R region (with more pairs of points \((P_c, P_u)\)). Conversely, as \(\epsilon\) decreases, LB performs better when communication is effective (i.e., the OP of the user is low).}

\bibliographystyle{ieeetr}
\bibliography{ref}
\begin{figure}
    \centering
    \subfigure[\textcolor{blue}{SJB and LB for low $\epsilon$}]{\includegraphics[scale=.4]{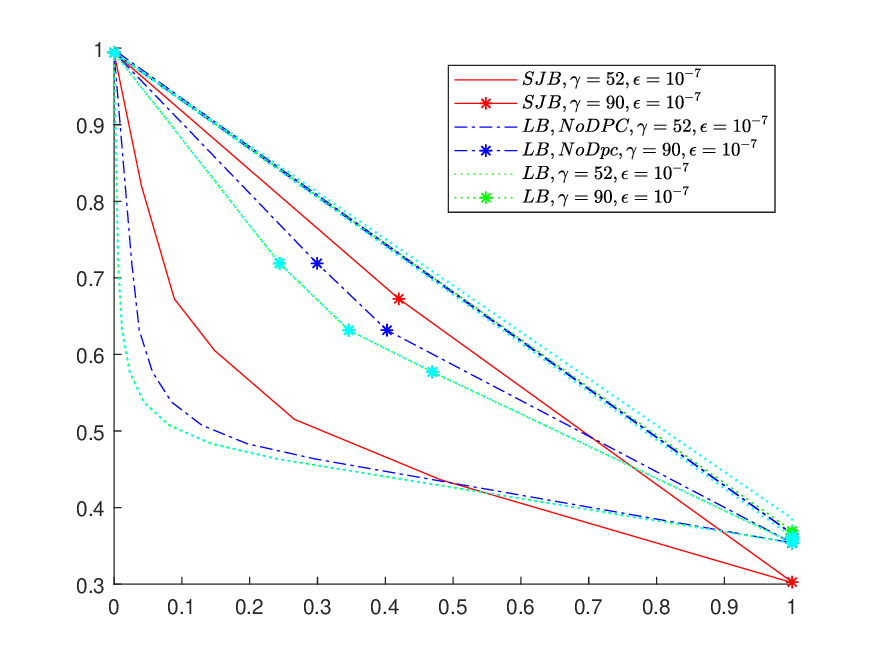}}\label{7}
    \subfigure[\textcolor{blue}{SJB and LB for high $\epsilon$}]{\includegraphics[scale=.4]{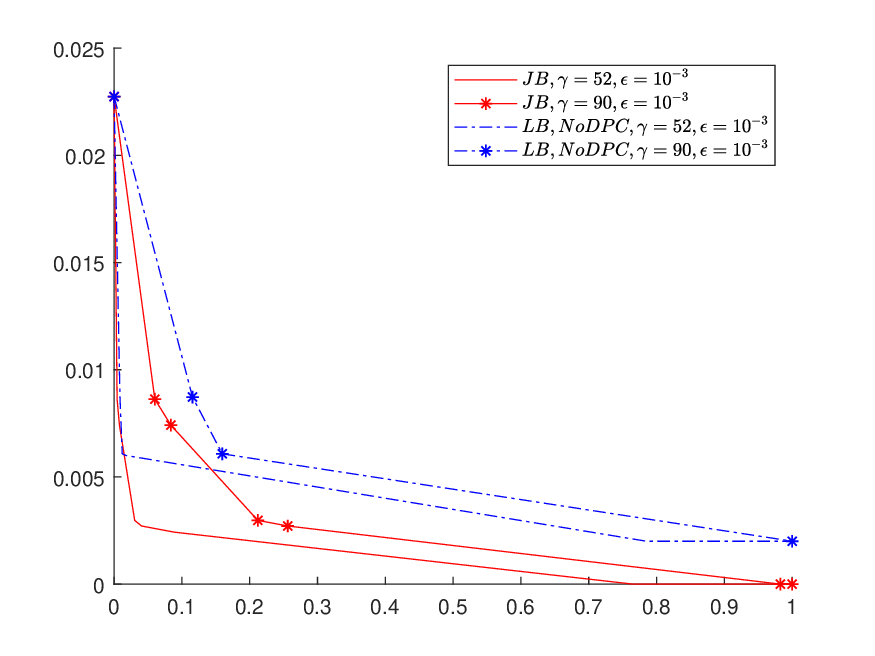}}\label{8}
    \caption{OP of the target versus OP of the user}
    \label{region}
\end{figure}
\appendices
\section{Proof of Lemma \ref{lemma1i}}\label{lemma1pi}
\textcolor{blue}{We first rewrite the random variables $x_i$, $y_i$, and $k_i$ as:
$x_i = |b_1| |h_i|^2 \cos(\phi_1) + |b_1 h_i e_i| \cos(\phi_1 - \phi_i + \tilde{\phi}_i) + |h_ib_2| \cos({\phi}_2-\phi_i - f_i)$, $y_i = |b_1| |h_i|^2 \sin(\phi_1) + |b_1 h_i e_i| \sin(\phi_1 - \phi_i + \tilde{\phi}_i) + |h_ib_2| \sin({\phi}_2-\phi_i - f_i)$, $k_i = |b_1 h_i|^2 + |b_1 e_i|^2 + |b_2|^2 + 2 |b_1^2 h_i e_i| \cos(\phi_i - \tilde{\phi}_i) + 2 |b_1b_2 h_i| \cos(f_i + \phi_1 + \phi_i-{\phi}_2) + 2 |b_1b_2 e_i| \cos(f_i + \phi_1 + \tilde{\phi}_i-{\phi}_2)$, where $b_1=|b_1|e^{j\phi_1}$, $b_2=|b_2|e^{j{\phi}_2}$, $h_i= |h_i|e^{j\phi_i} \sim \mathcal{CN}(0, 1)$, and $e_i= |e_i|e^{j\tilde{\phi}_i}\sim \mathcal{CN}(0, \sigma_e^2)$, with \( |e_i|\) and $|h_i|$ following a Rayleigh distribution with scale parameters \( \sigma_e \) and $1$, respectively. Moreover, $\tilde{\phi}_i$ and $\phi_i$ are uniformly distributed in $[-\pi, \pi)$. 
\\
First, we derive ${{\mathbf{\mu_d}}} = [E[{{x}}_i], E[{{y}}_i], E[{{k}}_i]]$. \( |h_i|^2 \) and \( |e_i|^2 \) follow exponential distributions with means 1 and $\sigma_e^2$, respectively. Thus, we have: $\mathbb{E}[|h_i|^2] = 1$ and $\mathbb{E}[|e_i|^2] = \sigma_e^2$. The last two terms in $x_i$ contain cosine functions involving the random variables \( \phi_i \) and \( \tilde{\phi}_i \). The expectations of these cosine terms will be zero because the angles are uniformly distributed. Therefore, $\mathbb{E}[x_i] = \mathbb{E}\left[ |b_1|| h_i|^2 \cos(\phi_1) \right] = |b_1| \cos(\phi_1)$. Using a similar approach, we obtain: $\mathbb{E}[y_i] = \mathbb{E}\left[ |b_1|| h_i|^2 \sin(\phi_1) \right] = |b_1| \sin(\phi_1)
$, $\mathbb{E}[k_i] = |b_1|^2 + |b_1|^2 \sigma_e^2 + |b_2|^2$. Thus, ${{\mathbf{\mu_d}}} = [|b_1| \cos(\phi_1), |b_1| \sin(\phi_1), |b_1|^2 + |b_1|^2 \sigma_e^2 + 1]$. Next, we calculate each element of the covariance matrix, $\mathbf{{{\Sigma_d}}}$. Due to space limitations, we omit the detailed derivations, but they are based on the following: Independence of $h_i$ and $e_i$, $E(|h_i|^4) = 2$, $E(|e_i|^4) = 2\sigma_e^4$, $E(\cos^2(\phi_1 - \phi_i + \tilde{\phi}_i)) = E(\sin^2(\phi_1 - \phi_i + \tilde{\phi}_i)) = \frac{1}{2}$, $E\left( \cos(\phi_1) \cdot \cos(\phi_1 - \phi_i + \tilde{\phi}_i) \right) = 0$
,\\ $E\left( \sin(\phi_1 - \phi_i + \tilde{\phi}_i) \cdot \cos(\phi_1 - \phi_i + \tilde{\phi}_i) \right) = 0$, $E(\cos(\phi_i + f_i)) = 0$. Moreover, using the identities $\cos(A)\cos(B) = \frac{1}{2} [\cos(A - B) + \cos(A + B)]$ and $\cos(A - B) = \cos(A)\cos(B) + \sin(A)\sin(B)$, we obtain: $E\left[\cos(\phi_i - \tilde{\phi}_i) \cos(\phi_1 - \phi_i + \tilde{\phi}_i)\right]
= \frac{1}{2} E\left[ \cos(-\phi_1 + 2\phi_i - 2\tilde{\phi}_i) + \cos(\phi_1) \right] = \frac{1}{2} \cos(\phi_1)$, $E(\cos(f_i + \phi_1 + \phi_i) \cos(\phi_i + f_i)) = \frac{1}{2} \cos(\phi_1)$, $E(\cos(f_i + \phi_1 + \phi_i) \sin(\phi_i + f_i)) = \frac{-1}{2} \sin(\phi_1)$, and $E\left[\cos(\phi_i - \tilde{\phi}_i) \sin(\phi_1 - \phi_i + \tilde{\phi}_i)\right] = \frac{1}{2} \sin(\phi_1).$}
\section{Proof of Lemma \ref{lemma6}}\label{lemma6p}
By following the same procedure as in \cite{Amethodtointegrate}, first, we write the domain of the integral, \(\text{SINR} < \gamma\), in a quadratic form, which is $\mathbf{u}^T\mathbf{Q}_2\mathbf{u}+\mathbf{q_1}^T\mathbf{u}+q_0<0$, where $\mathbf{u}=[R, T, K, \tilde{R}]^T$, $\mathbf{Q}_2=\begin{bmatrix}
1 & 0 & 0& 0\\
0 &1 & 0 & 0\\
0 & 0& |b_1|^2 &|b_1|\\
0 & 0& |b_1|& 0
\end{bmatrix}$, $\mathbf{q_1}=[0 , 0 ,-|b_1|^2\frac{\gamma\sigma^2_u}{p_t}, -2|b_1|\frac{\gamma\sigma^2_u}{p_t}]^T$, and $q_0=-N\frac{\gamma\sigma^2_u}{p_t}$. Then, we have $\mathbf{u}=\mathbf{S}\mathbf{r}+\mathbf{\mu}_1$, where $\mathbf{r}$ is a standard multivariate normal, and $\mathbf{S}=\mathbf{\Sigma}_1^{\frac{1}{2}}$. To calculate $S$ (symmetric square root of a positive semi-definite matrix $\mathbf{\Sigma}_1$), we decompose the matrix $\mathbf{\Sigma}_1$ into its eigenvalues and eigenvectors: $\mathbf{\Sigma}_1 = Q \Lambda Q^{-1}$ where $Q$ is an orthogonal matrix whose columns are the eigenvectors of $\mathbf{\Sigma}_1$, and $\Lambda$ is a diagonal matrix with the eigenvalues on the diagonal. Then, we have $S=\mathbf{\Sigma}_1^{1/2} = Q \Lambda^{1/2} Q^{-1}$. After doing some algebraic calculation the orthogonal elements of $\Lambda^{1/2}$ are $\sqrt{N},\sqrt{\frac{N}{2}}, 0,\sqrt{N}$ and we have $Q=\begin{bmatrix}
0 & \frac{\zeta}{\kappa} & -\kappa& \kappa\\
0 &1 & \zeta & -\zeta\\
1 & 0& 0 &0\\
0 & 0& 1& 1
\end{bmatrix}$, and $S=\frac{\sqrt{N}}{2}\begin{bmatrix}
\sqrt{2}\zeta^2+\kappa^2 & 2\zeta\kappa(\frac{1}{\sqrt{2}}-\frac{1}{2}) & 0& \kappa\\
2\zeta\kappa(\frac{1}{\sqrt{2}}-\frac{1}{2}) &\sqrt{2}\kappa^2+\zeta^2 & 0 & -\zeta\\
0 & 0& 2 &0\\
 \kappa &  -\zeta& 0& 1\label{s}
\end{bmatrix}$
Thus, $\mathbf{r}=\mathbf{S}^{-1}(\mathbf{u}-\mathbf{\mu}_1)$,
which results in a transformation to the integral domain as $\mathbf{r}^T\mathbf{\tilde{Q}}_2\mathbf{r}+\mathbf{\tilde{q_1}}^T\mathbf{r}+|b_1|^2-N\frac{\gamma\sigma^2_u}{p_t}-\frac{\gamma\sigma^2_u}{p_t}|b_1|^2<0$, where $\mathbf{\tilde{Q}}_2=\mathbf{S}\mathbf{Q}_2\mathbf{S}=\frac{N}{4}\begin{bmatrix}
\zeta^2+1 & \zeta\kappa & 2\kappa|b_1|& \kappa\\
\zeta\kappa &\kappa^2+1 & -2\zeta|b_1| & -\zeta\\
2\kappa|b_1| & -2\zeta|b_1|& 4|b_1|^2 &2|b_1|\\
\kappa & -\zeta& 2|b_1|& 1\label{Q2}
\end{bmatrix}$
and $\mathbf{\tilde{q_1}}=2\mathbf{S}\mathbf{Q}_2\mathbf{\mu}_1+\mathbf{S}\mathbf{q}_1=[|b_1|\sqrt{N}\kappa(1-\frac{\gamma\sigma^2_u}{p_t}), -|b_1|\sqrt{N}\zeta(1-\frac{\gamma\sigma^2_u}{p_t}),\sqrt{N}|b_1|^2(2-\frac{\gamma\sigma^2_u}{p_t}),\sqrt{N}|b_1|(1-\frac{\gamma\sigma^2_u}{p_t})]^T$. Next, by eigenvalue decomposition of $\mathbf{\tilde{Q}_2}=\mathbf{V}\mathbf{\tilde{D}}\mathbf{V}^T$, in which $\mathbf{V}$ is orthogonal, we use another transformation, $\mathbf{t}=\mathbf{V}^T\mathbf{r}$, which is also standard multivariate normal. After doing some algebraic calculation the orthogonal elements of $\mathbf{\tilde{D}}$ are $0, \frac{N}{2}, 0, \frac{N}{2}+N|b_1|^2$ and we have: $\mathbf{V}=\begin{bmatrix}
0 & \frac{\zeta}{\kappa} & -\kappa(4|b_1|^2+1)& \kappa\\
0 &1 & \zeta(4|b_1|^2+1) & -\zeta\\
\frac{-1}{2|b_1|} & 0& 2|b_1| &2|b_1|\\
1 & 0& 1& 1
\end{bmatrix}$. Therefore, it results in a transformation in the integral domain as $\mathbf{t}^T\mathbf{\tilde{D}}\mathbf{t}+\mathbf{\tilde{a}}^T\mathbf{t}+|b_1|^2-N\frac{\gamma\sigma^2_u}{p_t}-\frac{\gamma\sigma^2_u}{p_t}|b_1|^2=\sum_{i=2,4}^{}\tilde{D}_i\chi_{1,(\frac{\tilde{a}_i}{2\tilde{D}_i})^2}\!\!+\!\tilde{b}=\chi_{w,k,\lambda,s,m}<0$, where $\chi_{1,(\frac{\tilde{a}_i}{2\tilde{D}_i})^2}$ are chi-square RVs with 1 degree of freedom and non-centrality parameter $\frac{\tilde{a}_i}{2\tilde{D}_i}$. Also, 
$\mathbf{\tilde{a}}=\mathbf{V}^T\mathbf{\tilde{q_1}}=[-\sqrt{N}|b_1|\frac{\gamma\sigma^2_u}{2p_t}, 0, 2|b_1|^3\sqrt{N}\frac{\gamma\sigma^2_u}{p_t}, 2|b_1|^3\sqrt{N}(2-\frac{\gamma\sigma^2_u}{p_t})+ 2|b_1|\sqrt{N}(1-\frac{\gamma\sigma^2_u}{p_t})]^T$. $\tilde{D}_i$ are the diagonal element of $\mathbf{\tilde{D}}$ and $\tilde{a}_i$ are $i$-th elements of $\mathbf{\tilde{a}}$. Also $\tilde{b}\sim \mathcal{N}(m,s^2)$, where $m=-\tilde{D}_4(\frac{\tilde{a}_4}{2\tilde{D}_4})^2+|b_1|^2-N\frac{\gamma\sigma^2_u}{p_t}-\frac{\gamma\sigma^2_u}{p_t}|b_1|^2$ and $s^2=\tilde{a}^2_1+\tilde{a}^2_3$. Also, $\chi_{w,k,\lambda,s,m}$ is a generalized chi-square RV with parameters $w=[\tilde{D}_2 ,\tilde{D}_4]^T$, $k=[1,1]$, $\lambda=(\frac{\tilde{a}_2}{2\tilde{D}_2})^2+(\frac{\tilde{a}_4}{2\tilde{D}_4})^2$. Thus, $P_u(\gamma)$ is the CDF of $\chi_{w,k,\lambda,s,m}$ at $0$.
\section{Proof of Lemma \ref{lemma3}}\label{lemma3p}
By defining $\mathbf{y}_{u1}\triangleq c_1\frac{\mathbf{h}^H \textbf{h} \mathbf{s}_u}{\parallel \mathbf{h} \parallel}$ and $\mathbf{y}_{u2}\triangleq c_2
\frac{\mathbf{h}^H  \mathbf{a} \mathbf{s}_r}{\parallel \mathbf{a} \parallel}$, the received signal at the user given in (\ref{yunew}) is rewritten as:
\begin{equation}
\mathbf{y}_u= \mathbf{y}_{u1}+\mathbf{y}_{u2}+ \mathbf{z}_u.\label{ostad2}
\end{equation}
$\mathbf{y}_u$ in (\ref{ostad2}) is the same as an AWGN channel with additive interference, $\mathbf{y}_{u2}$. Since the interference ($\mathbf{y}_{u2}$) is known to the transmitter, we use the DPC \cite{elgamal} approach to cancel it at the receiver. The block diagram of the channel is depicted in Fig. \ref{channel}. 
\begin{figure}
\centering
    \includegraphics[scale=.45]{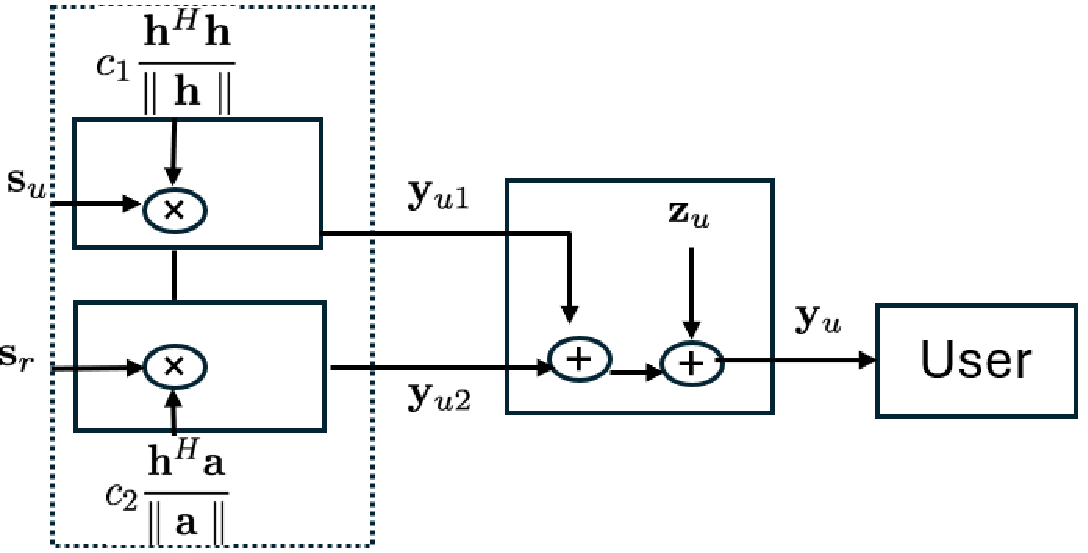}\caption{Communication channel model} \label{channel}
\end{figure}
We define an auxiliary random variable, $U=\mathbf{y}_{u1}+\alpha \mathbf{y}_{u2}$, where $\alpha$ will be derived later. Here, $\mathbf{y}_{u1}\sim N_c(0,|c_1|^2||\mathbf{h}||^2)$ (i.e., $\mathbf{s}_u \sim N_c(0,1)$), and it is independent of $\mathbf{y}_{u2}$ due to the independence assumption between $\mathbf{s}_u$ and $\mathbf{s}_r$; we can achieve:
\begin{align}
I(U,\mathbf{y}_u)-I(U,\mathbf{y}_{u2})=h(U|\mathbf{y}_{u2})-h(U|\mathbf{y}_{u}),\label{gilf}
\end{align}
where $I(.,.)$ and $h(.|.)$ denotes the mutual information and conditional differentiate entropy, respectively. We have:
\begin{align}
h(U|\mathbf{y}_{u2})\!=\!h(\mathbf{y}_{u1}\!+\!\alpha \mathbf{y}_{u2}|\mathbf{y}_{u2})\!\!\overset{(a)}=\!\!h(\mathbf{y}_{u1}|\mathbf{y}_{u2})\!\overset{(b)}= \!
h(\mathbf{y}_{u1})\label{hus}
\end{align}
where (a) is due to the property of $h(.)$ which states $h(x+f(y)|y)=h(x|y)$ when $f(y)$ is a function of $y$. (b) is due to $h(x|y)=h(x)$ if $x$ and $y$ are independent. Moreover, we have:
\begin{align}
h(U|\mathbf{y}_{u})&=h(\mathbf{y}_{u1}+\alpha \mathbf{y}_{u2}|\mathbf{y}_{u})=h(\mathbf{y}_{u1}+\alpha \mathbf{y}_{u2}-\alpha \mathbf{y}_{u} |\mathbf{y}_{u})\nonumber\\
&= h(\mathbf{y}_{u1}-\alpha(\mathbf{y}_{u1}+\mathbf{z}_{u})|\mathbf{y}_{u})\label{huy}
\end{align}
Thus, by replacing (\ref{hus}) and (\ref{huy}) into (\ref{gilf}), the achievable rate is:
\begin{align}
\!\!\!I(U\!,\mathbf{y}_u)\!\!-\!I(U\!,\mathbf{y}_{u2})\!\!=\!h(\mathbf{y}_{u1})\!\!-\!h(\mathbf{y}_{u1}-\!\!\alpha(\mathbf{y}_{u1}+\mathbf{z}_{u})|\mathbf{y}_{u})\label{rate}
\end{align}
If $\alpha$ is the weight of the linear minimum mean squared error (MMSE) estimate of $\mathbf{y}_{u1}$ when observing $\mathbf{y}_{u1}+\mathbf{z}_{u}$, then
$\mathbf{y}_{u1}-\alpha(\mathbf{y}_{u1}+\mathbf{z}_{u})$ is the MMSE error which is uncorrelated to the observation $\mathbf{y}_{u1}+\mathbf{z}_{u}$. Since they have also complex Gaussian distribution (the distribution of $\mathbf{y}_{u1}+\mathbf{z}_{u}$ is $N_c(0,\sigma^2_u+|c_1|^2||\mathbf{h}||^2)$), they are independent. Moreover, the MMSE estimation error is independent of $\mathbf{y}_{u2}$. Thus, $\mathbf{y}_{u1}-\alpha(\mathbf{y}_{u1}+\mathbf{z}_{u})$ is independent of $\mathbf{y}_{u}=\mathbf{y}_{u1}+\mathbf{y}_{u2}+ \mathbf{z}_u$. Therefore, we have:
\begin{align}
&h(\mathbf{y}_{u1}-\alpha(\mathbf{y}_{u1}+\mathbf{z}_{u})|\mathbf{y}_{u})=h(\mathbf{y}_{u1}-\alpha(\mathbf{y}_{u1}+\mathbf{z}_{u}))\nonumber\\
&\overset{(a)}=h(\mathbf{y}_{u1}\!\!-\!\alpha(\mathbf{y}_{u1}\!+\!\mathbf{z}_{u})|\mathbf{y}_{u1}+\mathbf{z}_{u})\!\!\overset{(b)}=h(\mathbf{y}_{u1}|\mathbf{y}_{u1}+\mathbf{z}_{u}\!),
\end{align}
where (a) is due to the independence of MMSE estimation error and the observation. (b) is due to the property of the entropy. Thus, the achievable rate at (\ref{rate}) becomes:
\begin{align}
&I(U,\mathbf{y}_u)-I(U,\mathbf{y}_{u2})=h(\mathbf{y}_{u1})-h(\mathbf{y}_{u1}|\mathbf{y}_{u1}+\mathbf{z}_{u})\nonumber\\
&\overset{(a)}=I(\mathbf{y}_u, \mathbf{y}_u+\mathbf{z}_{u})\overset{(b)}=h(\mathbf{y}_{u1}+\mathbf{z}_{u})-h(\mathbf{y}_{u1}+\mathbf{z}_{u}|\mathbf{y}_{u1})\nonumber\\
&\overset{(c)}=h(\mathbf{y}_{u1}+\mathbf{z}_{u})\!-h(\mathbf{z}_{u})\!\!\overset{(d)}=\!\!\log_{2}^{ \frac{\sigma^2_u+|c_1|^2||\mathbf{h}||^2}{\sigma^2_u}}\!\!\!\!\!\!\!\!\!\!\!\!\!\!=\log_{2}^{1+ \frac{|c_1|^2||\mathbf{h}||^2}{\sigma^2_u}},
\end{align}
where (a) and (b) are due to the definition of mutual information. (c) is due to the property of the entropy and the independence of $\mathbf{z}_{u}$ and $\mathbf{y}_{u1}$. (d) is due to the entropy of a complex Gaussian RV i.e. if $x\sim N_c(0,p)$ then $h(x)=\log_{2}^{e \pi p}$.

Next, we derive $\alpha$. By defining $\mathbf{\tilde{y}}=\mathbf{y}_{u1}+\mathbf{z}_u$ We have:
\begin{align}
\mathbf{\hat{y}}_{u1}=E\{\mathbf{y}_{u1}|\mathbf{\tilde{y}}\}=\int_{\mathbf{y}_{u1} \in \mathcal{C}}{} \mathbf{y}_{u1} f_{\mathbf{y}_{u1}|\mathbf{\tilde{y}}}(\mathbf{y}_{u1}|\mathbf{\tilde{y}})\label{conditionalmean}
\end{align}
Moreover, using the Bayesian theorem we have:
\begin{align}
&f_{\mathbf{y}_{u1}|\mathbf{\tilde{y}}}(\mathbf{y}_{u1}|\mathbf{\tilde{y}})=\frac{f_{\mathbf{\tilde{y}}|\mathbf{y}_{u1}}(\mathbf{\tilde{y}}|\mathbf{y}_{u1})f_{\mathbf{y}_{u1}}(\mathbf{y}_{u1})}{f_{\mathbf{\tilde{y}}(\mathbf{\tilde{y}})}}\nonumber\\
&\overset{(a)}=\frac{\frac{1}{\pi\sigma^2_u}e^{-\frac{|\mathbf{\tilde{y}}-\mathbf{y}_{u1}|^2}{\sigma^2_u}} 
\frac{1}{\pi|c_1|^2||\mathbf{h}||^2}e^{-\frac{|\mathbf{y}_{u1}|^2}{|c_1|^2||\mathbf{h}||^2}}}
{\frac{1}{\pi(\sigma^2_u+|c_1|^2||\mathbf{h}||^2)}e^{-\frac{|\mathbf{\tilde{y}}|^2}{\sigma^2_u+|c_1|^2||\mathbf{h}||^2}}}\nonumber\\
&=\frac{\sigma^2_u+|c_1|^2||\mathbf{h}||^2}{\pi\sigma^2_u|c_1|^2||\mathbf{h}||^2}e^{-\frac{\sigma^2_u+|c_1|^2||\mathbf{h}||^2}{\sigma^2_u|c_1|^2||\mathbf{h}||^2}|\mathbf{y}_{u1}-\frac{|c_1|^2||\mathbf{h}||^2}{\sigma^2_u+|c_1|^2||\mathbf{h}||^2}\mathbf{\tilde{y}}|^2,
}\label{conditionalpdf}
\end{align}
where (a) is due to the definition of the PDF of complex Gaussian distribution. From the conditional PDF derived at (\ref{conditionalpdf}), we have:
\begin{align}
\mathbf{y}_{u1}-\frac{|c_1|^2||\mathbf{h}||^2}{\sigma^2_u+|c_1|^2||\mathbf{h}||^2}\mathbf{\tilde{y}} \sim N_c(0,\frac{\sigma^2_u|c_1|^2||\mathbf{h}||^2}{\sigma^2_u+|c_1|^2||\mathbf{h}||^2})
\end{align}
when $\mathbf{\tilde{y}}$ is known. Thus, the conditional mean at (\ref{conditionalmean}) is:
\begin{align}
\mathbf{\hat{y}}_{u1}\!\!&=\frac{|c_1|^2||\mathbf{h}||^2}{\sigma^2_u+|c_1|^2||\mathbf{h}||^2}\mathbf{\tilde{y}}=\frac{|c_1|^2||\mathbf{h}||^2}{\sigma^2_u+|c_1|^2||\mathbf{h}||^2}(\mathbf{y}_{u1}+\mathbf{z}_u)
\end{align}
Thus, we have $\alpha=\frac{|c_1|^2||\mathbf{h}||^2}{\sigma^2_u+|c_1|^2||\mathbf{h}||^2}$.
\section{Proof of Lemma \ref{lemma4}}\label{lemma4p}
To simplify CRB, we begin by simplifying each term at the numerator and denominator of (\ref{crb}). We have:
\begin{align}
&\text{Tr}(\mathbf{A}^H(\theta) \mathbf{A}(\theta) \mathbf{R}_x)\overset{(a)}=\text{Tr}(\mathbf{a}\mathbf{b}^H\mathbf{b}\mathbf{a}^H (|c_1|^2\mathbf{w}_1\mathbf{w}_1^H+ |c_2|^2\mathbf{w}_2\mathbf{w}_2^H))\nonumber\\
&\overset{(b)}=|c_1|^2||\mathbf{b}||^2|\mathbf{a}^H\mathbf{w}_1|^2+|c_2|^2||\mathbf{b}||^2|\mathbf{a}^H\mathbf{w}_2|^2,\label{denom1}
\end{align}
where (a) is due to using $\mathbf{R}_X$ in (\ref{rxnew}), and (b) is due to $\text{Tr}(abc)=\text{Tr}(bca)=\text{Tr}(cab)$. Moreover, we have:
\begin{align}
&\text{Tr}(\mathbf{A}'^H(\theta) \mathbf{A}'(\theta) \mathbf{R}_x)=\nonumber\\
&\text{Tr}((\mathbf{a}\mathbf{b}'^H+\mathbf{a}'\mathbf{b}^H)(\mathbf{b}'\mathbf{a}^H+\mathbf{b}\mathbf{a}'^H) (|c_1|^2\mathbf{w}_1\mathbf{w}_1^H+ |c_2|^2\mathbf{w}_2\mathbf{w}_2^H))\nonumber\\
&\overset{(a)}=|c_1|^2(||\mathbf{b}'||^2|\mathbf{a}^H\mathbf{w}_1|^2+||\mathbf{b}||^2|\mathbf{a}'^H\mathbf{w}_1|^2)\nonumber\\
&+|c_2|^2(||\mathbf{b}'||^2|\mathbf{a}^H\mathbf{w}_2|^2+||\mathbf{b}||^2|\mathbf{a}'^H\mathbf{w}_2|^2),\label{denom2}
\end{align}
where (a) is due to the followig:
\begin{align}
&\!\!\!\!\!\text{Tr}((\mathbf{b}'\mathbf{a}^H+\mathbf{b}\mathbf{a}'^H)|c_1|^2\mathbf{w}_1\mathbf{w}_1^H(\mathbf{a}\mathbf{b}'^H+\mathbf{a}'\mathbf{b}^H))\nonumber\\
&\!\!\!\!\!\overset{(a)}=|c_1|^2(\text{Tr}(\mathbf{a}^H\mathbf{w}_1\mathbf{w}^H_1\mathbf{a}\mathbf{b}'^H\mathbf{b}')+\text{Tr}(\mathbf{a}'^H\mathbf{w}_1\mathbf{w}^H_1\mathbf{a}'\mathbf{b}^H\mathbf{b})).\\
&\!\!\!\!\!\text{Tr}((\mathbf{b}'\mathbf{a}^H+\mathbf{b}\mathbf{a}'^H)|c_2|^2\mathbf{w}_2\mathbf{w}_2^H(\mathbf{a}\mathbf{b}'^H+\mathbf{a}'\mathbf{b}^H))\nonumber\\
&\!\!\!\!\!\overset{(b)}=|c_2|^2(\text{Tr}(\mathbf{a}^H\mathbf{w}_2\mathbf{w}^H_2\mathbf{a}\mathbf{b}'^H\mathbf{b}')\!+\!\text{Tr}(\mathbf{a}'^H\mathbf{w}_2\mathbf{w}^H_2\mathbf{a}'\mathbf{b}^H\mathbf{b})).
\end{align}
where (a) and (b) are due to $\text{Tr}(abc)=\text{Tr}(bca)=\text{Tr}(cab)$ and $\text{Tr}(\mathbf{b}'\mathbf{a}^H\mathbf{w}_1\mathbf{w}_1^H\mathbf{a'}\mathbf{b}^H)=\text{Tr}(\mathbf{b}\mathbf{a}'^H\mathbf{w}_1\mathbf{w}_1^H\mathbf{a}\mathbf{b}'^H)=\text{Tr}(\mathbf{b}'\mathbf{a}^H\mathbf{w}_2\mathbf{w}_2^H\mathbf{a'}\mathbf{b}^H)=\text{Tr}(\mathbf{b}\mathbf{a}'^H\mathbf{w}_2\mathbf{w}_2^H\mathbf{a}\mathbf{b}'^H)=0$ (since we have:
$\mathbf{a}^H\mathbf{a}'=\mathbf{a}'^H\mathbf{a}=\mathbf{b}^H\mathbf{b}'=\mathbf{b}'^H\mathbf{b}=0$). Moreover:
\begin{align}
&\text{Tr}(\mathbf{A}'^H(\theta) \mathbf{A}(\theta) \mathbf{R}_x)=\nonumber\\
&\text{Tr}(\mathbf{b}\mathbf{a}^H(|c_1|^2\mathbf{w}_1\mathbf{w}_1^H+ |c_2|^2\mathbf{w}_2\mathbf{w}_2^H)(\mathbf{a}\mathbf{b}'^H+\mathbf{a}'\mathbf{b}^H))\nonumber\\
&\overset{(a)}=|c_1|^2||\mathbf{b}||^2\mathbf{a}^H\mathbf{w}_1\mathbf{w}_1^H\mathbf{a}'+|c_2|^2||\mathbf{b}||^2\mathbf{a}^H\mathbf{w}_2\mathbf{w}_2^H\mathbf{a}'\label{denom3}
\end{align}
where (a) is due to $\text{Tr}(\mathbf{b}\mathbf{a}^H\mathbf{w}_1\mathbf{w}_1^H(\mathbf{a}\mathbf{b}'^H))=\text{Tr}(\mathbf{b}\mathbf{a}^H\mathbf{w}_2\mathbf{w}_2^H(\mathbf{a}\mathbf{b}'^H))=0$, $\text{Tr}(abc)=\text{Tr}(bca)=\text{Tr}(cab)$ and because $\mathbf{a}^H\mathbf{w}_1\mathbf{w}_1^H\mathbf{a}'$ and $\mathbf{a}^H\mathbf{w}_2\mathbf{w}_2^H\mathbf{a}'$ are scalar. Therefore, by using (\ref{denom1}), (\ref{denom2}), and (\ref{denom3}), along with the following set of equations, the CRB($\theta$) is derived as (\ref{crbsimplified}):
\begin{align}
|c_1|^4||\mathbf{b}||^4\mathbf{a}^H\mathbf{w}_2\mathbf{w}_2^H\mathbf{a}'\mathbf{a}'^H\mathbf{w}_2\mathbf{w}_2^H\mathbf{a}=|c_2|^4|\mathbf{b}||^4|\mathbf{a}^H\mathbf{w}_2|^2|\mathbf{a}'^H\mathbf{w}_2|^2\nonumber\\
|c_1|^4||\mathbf{b}||^4\mathbf{a}^H\mathbf{w}_1\mathbf{w}_1^H\mathbf{a}'\mathbf{a}'^H\mathbf{w}_1\mathbf{w}_1^H\mathbf{a}=|c_1|^4||\mathbf{b}||^4|\mathbf{a}^H\mathbf{w}_1|^2|\mathbf{a}'^H\mathbf{w}_1|^2\nonumber
\end{align}
and $|c_1c_2|^2||\mathbf{b}||^4\mathbf{a}^H\mathbf{w}_2\mathbf{w}_2^H\mathbf{a}'\mathbf{a}'^H\mathbf{w}_1\mathbf{w}_1^H\mathbf{a}=0$ (since $\mathbf{w}_1=\frac{\mathbf{\bar{h}}}{||\mathbf{\bar{h}}||}$, $\mathbf{w}_2=\frac{\mathbf{a}}{||\mathbf{a}||}$, $\mathbf{a}^H\mathbf{a}'=0$), $|c_1c_2|^2||\mathbf{b}||^4\mathbf{a}^H\mathbf{w}_1\mathbf{w}_1^H\mathbf{a}'\mathbf{a}'^H\mathbf{w}_2\mathbf{w}_2^H\mathbf{a}=0$ (since $\mathbf{a}'^H\mathbf{a}=0$), $|c_1c_2|^2|\mathbf{b}||^4|\mathbf{a}^H\mathbf{w}_1|^2|\mathbf{a}'^H\mathbf{w}_2|^2=0$ (since $|\mathbf{a}'^H\mathbf{w}_2|^2=|\mathbf{a}'^H\frac{\mathbf{a}}{||\mathbf{a}||}|^2=0$, $Q\triangleq\frac{\sigma^2_R}{2 \mid \alpha \mid ^2 L}$, $\mathbf{w}_1=\frac{\mathbf{\bar{h}}}{||\mathbf{\bar{h}}||}$, $\mathbf{w}_2=\frac{\mathbf{a}}{||\mathbf{a}||}$, $||\mathbf{a}||^2=N$, $||\mathbf{b}||^2=M$, $|\mathbf{a}^H\mathbf{w}_2|^2=N$, and using the definitions at section (\ref{jointbeamformingperformance}).
\end{document}